\theoremstyle{definition}
\newtheorem*{Def}{Definition}
\newtheorem{prop}{Proposition}[section]
\newtheorem*{fact}{Fact}
\newtheorem{obs}[prop]{Remark}
\newtheorem{cor}[prop]{Corollary}
\newtheorem{teo}{Theorem}
\newcommand{\scr}{\mathscr}
\DeclareMathOperator{\Sym}{Sym}
\DeclareMathOperator{\SL}{SL}
\DeclareMathOperator{\GL}{GL}
\DeclareMathOperator{\LG}{LG}
\DeclareMathOperator{\Sp}{Sp}
\DeclareMathOperator{\Gr}{Gr}
\DeclareMathOperator{\Pl}{Pl}
\DeclareMathOperator{\U}{U}
\DeclareMathOperator{\Aut}{Aut}
\DeclareMathOperator{\Mat}{Mat}
\title{Graph States and the Variety of Principal Minors}
\author{Vincenzo Galgano\footnote{Dipartimento di Matematica, Università di Trento, Via Sommarive 14 38123 Trento, Italy; ORCID: 0000-0001-8778-575X (vincenzo.galgano@unitn.it)} , Frédéric Holweck\footnote{Laboratoire Interdisciplinaire Carnot de Bourgogne, ICB/UTBM, UMR6303 CNRS, Universit\'e Bourgogne Franche-Comt\'e, F-90010 Belfort, France; (frederic.holweck@utbm.fr)}}
\date{}
\begin{document}

\maketitle

\begin{abstract}
In Quantum Information theory, graph states are quantum states defined by graphs. In this work we exhibit a correspondence between graph states and the variety of binary symmetric principal minors, in particular their corresponding orbits under the action of $SL(2,\mathbb F_2)^{\times n}\rtimes \mathfrak S_n$. We start by approaching the topic more widely, that is by studying the orbits of maximal abelian subgroups of the $n$-fold Pauli group under the action of $\mathcal C_n^{\text{loc}}\rtimes \mathfrak S_n$, where $\mathcal C_n^{\text{loc}}$ is the $n$-fold local Clifford group: we show that this action corresponds to the natural action of $SL(2,\mathbb F_2)^{\times n}\rtimes \mathfrak S_n$ on the variety $\mathcal Z_n\subset \mathbb P(\mathbb F_2^{2^n})$ of principal minors of binary symmetric $n\times n$ matrices. The crucial step in this correspondence is in translating the action of $SL(2,\mathbb F_2)^{\times n}$ into an action of the local symplectic group $Sp_{2n}^{\text{loc}}(\mathbb F_2)$ on the Lagrangian Grassmannian $LG_{\mathbb F_2}(n,2n)$. We conclude by studying how the former action restricts onto stabilizer groups and stabilizer states, and finally what happens in the case of graph states.\\
\hfill\break
{\bf Key-words:} graph states, stabilizer states, Pauli group, local Clifford group, local symplectic group, Lagrangian Grassmannian, symmetric principal minors.\\
%\hfill\break
%{\bf MSC Codes:} 14L30, 51A50, 81R05.
\end{abstract}

\tableofcontents

%\clearpage

\section{Introduction}
In Quantum Information, stabilizer states are quantum states known in particular for quantum error correcting codes \cite{gottesman1997}. In the stabilizer formalism, a stabilizer state is described by a maximal $n$-fold abelian subgroup $\mathcal{S}_{M_1,\dots,M_n}$ of the $n$-fold Pauli group $\mathcal{P}_n$ (see Sec.\ref{preliminaries} for definitions) that stabilizes it. Graph states are a special class of quantum stabilizer states elegantly described by a graph $G=(V,E)$ that encodes its stablizer group. Graph states have many applications in quantum information processing \cite{hein2006}: they are in particular  useful for Measured Based Quantum Computation (MBQC) \cite{hein2006,briegel2009,mhalla2011}, for quantum error correcting codes \cite{gottesman1997} and for secret sharing \cite{markham2008,bell2014}.
As a resource for quantum information, it is interesting to propose classification of graph states. A natural framework to classify quantum states is to consider the group of  local unitary operations LU. However for stabilizer states (and graph states), one usually restricts to considering the group of local  unitaries within the Clifford group \cite{article1,van2005}. We will denote by $\mathcal{C}^{\text{loc}} _n\subset \text{LU}$ the group of local Clifford acting on $n$ qubit states. Under the action of $\mathcal{C}^{\text{loc}} _n\rtimes \mathfrak{S}_n$, graph states have been classified up to $n=12$ qubits  \cite{hein2006,cabello2009,cabello2011}.\\
\indent The variety $\mathcal{Z}_n$ of principal minors for $n\times n$ symmetric matrices over a field $\mathbb{K}$ \cite{oeding2011} is an algebraic variety of $\mathbb{P}(\mathbb{K}^{2^n})$ introduced by Holtz and Sturmfels \cite{holtz2007} in order to study relations among principal minors of symmetric matrices. The existence problem of a matrix statisfying predefinite conditions on its principal minors has many applications to matrix theory, probability, statistical physics and computer vision \cite{griffin2006, kenyon2014, oeding2017}.  The goal of this paper is to show another potential application of the study of this variety over the two-elements field $\mathbb{K}=\mathbb{F}_2$ by establishing a bijection between classes of graph states and orbits of the variety $\mathcal{Z}_n$.  More precisely, the main result of this paper is the following theorem.

\begin{teo}\label{theo:main}
 The Lagrangian mapping induces a bijection between the 
$\left(\mathcal{C}^{\text{loc}} _n\rtimes \mathfrak{S}_n\right)$-orbits of maximal abelian subgroups of $\mathcal{P}_n$
 and the  $\left(\SL(2,\mathbb{F}_2)^{\times n}\rtimes \mathfrak{S}_n\right)$-orbits of $\mathcal{Z}_n\subset \mathbb{P}(\mathbb{F}_2^{2^n})$. In particular, there is a one to one correspondence between representatives of the graph states classification, up to
 $\left(\mathcal{C}^{\text{loc}} _n\rtimes \mathfrak{S}_n\right)$-action, and the representatives of the $\left(\SL(2,\mathbb{F}_2)^{\times n}\rtimes \mathfrak{S}_n\right)$-orbits of $\mathcal{Z}_n$.
\end{teo}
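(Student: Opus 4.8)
The plan is to realise the claimed bijection as a composite of two group-equivariant bijections passing through the Lagrangian Grassmannian $\LG_{\mathbb{F}_2}(n,2n)$, and then to descend to orbit sets. Throughout I use that $\Sp_2(\mathbb{F}_2)=\SL_2(\mathbb{F}_2)$, so that $\Sp_{2n}^{\mathrm{loc}}(\mathbb{F}_2):=\Sp_2(\mathbb{F}_2)^{\times n}$ is canonically identified with $\SL(2,\mathbb{F}_2)^{\times n}$, and that these identifications are compatible with the $\mathfrak{S}_n$-actions permuting the $n$ block/tensor factors.

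Step one is the standard dictionary of the stabilizer formalism. Quotienting $\mathcal{P}_n$ by its centre $\{\pm 1,\pm i\}$ identifies $\mathcal{P}_n/Z(\mathcal{P}_n)$ with the symplectic space $(\mathbb{F}_2^{2n},\omega)$, the commutator pairing becoming $\omega$; a maximal $n$-fold abelian subgroup $\mathcal{S}_{M_1,\dots,M_n}$ then has for image a maximal totally isotropic subspace, i.e. a point of $\LG_{\mathbb{F}_2}(n,2n)$. Under this identification, conjugation by the local Clifford group $\mathcal{C}^{\mathrm{loc}}_n$ descends exactly to the action of $\Sp_{2n}^{\mathrm{loc}}(\mathbb{F}_2)$ and qubit permutations to block permutations. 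I would then check that the assignment ``subgroup $\mapsto$ Lagrangian'' is a bijection after passing to $\mathcal{C}^{\mathrm{loc}}_n\rtimes\mathfrak{S}_n$-orbits: equivariance and surjectivity are immediate (lift a symplectic transformation to a local Clifford one), and injectivity on orbits reduces to the statement that two abelian subgroups with the same underlying Lagrangian $L$ are local-Clifford conjugate. This holds because conjugation by a (tensor-product, hence local Clifford) Pauli $P$ sends each generator $g_i$ to $(-1)^{\langle P,g_i\rangle}g_i$, and since $g_1,\dots,g_n$ span the Lagrangian $L$ the linear map $P\mapsto(\langle P,g_i\rangle)_i$ is onto $\mathbb{F}_2^n$, so every sign pattern on the generators is realised. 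Hence the left-hand orbit set of the theorem is in bijection with the set of $\Sp_{2n}^{\mathrm{loc}}(\mathbb{F}_2)\rtimes\mathfrak{S}_n$-orbits on $\LG_{\mathbb{F}_2}(n,2n)$.

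Step two is the Lagrangian (principal-minor) mapping $\phi_L\colon\LG_{\mathbb{F}_2}(n,2n)\to\mathbb{P}(\mathbb{F}_2^{2^n})$, sending the row span of $(A\mid B)$ to the point whose coordinate at $S\subseteq[n]$ is the mixed minor formed from the columns of $B$ indexed by $S$ and of $A$ indexed by $[n]\setminus S$; this is well defined ($\GL_n(\mathbb{F}_2)$ has determinant $1$), and its image is $\mathcal{Z}_n$, since a symmetric matrix $M$ gives the Lagrangian $\mathrm{rowspan}(I_n\mid M)$, which $\phi_L$ sends to the vector $(\det M_{S,S})_S$ of principal minors. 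By multilinearity of minors, a block-wise $\SL_2(\mathbb{F}_2)$-change of the pair of columns $(A_k,B_k)$ acts linearly on the mixed-minor vector precisely as $\SL_2(\mathbb{F}_2)$ on the $k$-th factor of $(\mathbb{F}_2^2)^{\otimes n}=\mathbb{F}_2^{2^n}$, and block permutations act by the corresponding factor permutations; thus $\phi_L$ is $\bigl(\Sp_{2n}^{\mathrm{loc}}(\mathbb{F}_2)\rtimes\mathfrak{S}_n,\ \SL(2,\mathbb{F}_2)^{\times n}\rtimes\mathfrak{S}_n\bigr)$-equivariant. Being equivariant and surjective onto $\mathcal{Z}_n$, $\phi_L$ is surjective on orbit sets; the substantive point is that it is injective on orbit sets, for which (using equivariance) it suffices to prove that every fibre of $\phi_L$ lies in a single $\Sp_{2n}^{\mathrm{loc}}(\mathbb{F}_2)$-orbit, i.e. that $\phi_L$ is injective up to the local symplectic action. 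I would establish this by a direct fibre analysis over $\mathbb{F}_2$: on the open stratum of Lagrangians of the form $\mathrm{rowspan}(I_n\mid M)$ the off-diagonal entries $M_{ij}$ are recovered without ambiguity (over $\mathbb{F}_2$ there is no $\pm$-ambiguity, and the $\pm1$-diagonal-conjugation symmetry of the principal-minor map is trivial), so $\phi_L$ is injective there; for the lower strata one moves a Lagrangian by $\Sp_{2n}^{\mathrm{loc}}(\mathbb{F}_2)$ into a normal position and repeats the recovery, showing no two local-symplectic orbits are merged.

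Chaining the two equivariant bijections of Steps one and two proves the first assertion. For the second, I would invoke the fact that every stabilizer group is local-Clifford equivalent to the stabilizer of a graph state, generated by $g_i=X_i\prod_{j\sim i}Z_j$ \cite{van2005}, whose Lagrangian is $\mathrm{rowspan}(I_n\mid\Gamma)$ with $\Gamma$ the adjacency matrix; by the formula above, $\phi_L$ sends it to the point of $\mathcal{Z}_n$ whose coordinates are the principal minors $\det\Gamma_{S,S}$ of the zero-diagonal binary symmetric matrix $\Gamma$. Since graph states therefore meet every $\mathcal{C}^{\mathrm{loc}}_n\rtimes\mathfrak{S}_n$-orbit appearing on the left, the bijection restricts to a one-to-one correspondence between representatives of the graph-state classification and normal forms of the $\SL(2,\mathbb{F}_2)^{\times n}\rtimes\mathfrak{S}_n$-orbits of $\mathcal{Z}_n$. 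I expect the main obstacle to be the orbit-injectivity of $\phi_L$ in Step two: controlling the fibres of the principal-minor map over $\mathbb{F}_2$ across all strata, and verifying that the remaining ambiguity is exactly the local symplectic action, is where the specific structure of $\mathcal{Z}_n$ over $\mathbb{F}_2$ is genuinely used, whereas the stabilizer-to-Lagrangian translation and the sign bookkeeping of Step one are essentially formal.
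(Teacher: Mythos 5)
Your proposal is correct and follows the same overall decomposition as the paper: identify maximal abelian subgroups of $\mathcal{P}_n$ with maximal isotropic subspaces of $\mathcal{W}(2n-1,2)$, transport the $\mathcal{C}^{\text{loc}}_n\rtimes\mathfrak{S}_n$-action to the $\Sp_{2n}^{\text{loc}}(\mathbb F_2)\rtimes\mathfrak{S}_n$-action on $\LG_{\mathbb F_2}(n,2n)$, and compose with the principal-minor projection onto $\mathcal{Z}_n$, which over $\mathbb F_2$ loses nothing. The differences lie in how the individual steps are justified, and on several points your route is tighter. First, your multilinearity argument --- blockwise $\SL(2,\mathbb F_2)$-operations on the column pairs of $(A\mid B)$ act on the vector of mixed minors exactly as $\SL(2,\mathbb F_2)$ on the $k$-th factor of $(\mathbb F_2^2)^{\otimes n}$, and block permutations as factor permutations --- gives equivariance in one stroke, where the paper carries this out by explicit Plücker-coordinate computations (Proposition \ref{Zn invariant} and the row-and-column permutation analysis of Section \ref{orbitS}). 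Second, your Pauli-conjugation sign argument, showing that all maximal abelian subgroups lying over a fixed Lagrangian form a single $\mathcal{C}^{\text{loc}}_n$-orbit, makes explicit a point the paper leaves implicit in the correspondence \eqref{stab-In}; note, though, that like the paper you tacitly restrict to lifts with $\pm1$ phases --- subgroups generated with $\pm i$ phases (hence containing $-I^{\otimes n}$) are not Clifford-conjugate to sign-only lifts, a fuzziness already present in the paper's notion of maximal $n$-fold abelian subgroup. Third, for injectivity of the principal-minor map the paper relies on Proposition \ref{LGisoZ} (citing \cite{holweck2014}), whose in-text proof only treats the chart of Lagrangians with Pl\"ucker matrix ${\scriptsize\begin{bmatrix}I_n\\ M\end{bmatrix}}$, while you sketch a direct fibre analysis; your sketch does close, and you can make it rigorous in two lines: the coordinate at $S=\emptyset$ is $\det A$, which is nonzero precisely on that chart, and every Lagrangian over $\mathbb F_2$ is moved into the chart by local symplectic swaps $e_k\leftrightarrow f_k$, so two Lagrangians with the same projective image land in the chart simultaneously after one local move, where $M$ is recovered unambiguously over $\mathbb F_2$ --- this in fact yields global injectivity of $\phi_L$, slightly more than the orbit statement you need. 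The graph-state part coincides with the paper's: both rest on the equivalence of stabilizer groups with graph-state groups under local Clifford operations \cite{article1,van2005} and the computation that the Lagrangian of a graph maps to the vector of principal minors of its adjacency matrix.
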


%\begin{obs}
 Regarding the cardinality of the orbits, the Lagrangian mapping (Sec.\ref{preliminaries}) shows that, if $\mathcal{O}_i$ is a $\left(\SL(2,\mathbb{F}_2)^{\times n}\rtimes \mathfrak{S}_n\right)$-orbit of $\mathcal{Z}_n$, then the number of corresponding stabilizer states is $4^n|\mathcal{O}_i|$.
 
 Maximal ($n$-fold) abelian subgroups of $\cal P_n$ correspond to subspaces of maximal dimension in the symplectic polar space $\mathcal{W}(2n-1,2)$ of rank $n$ and order $2$ (see Sec.\ref{preliminaries}). The bijection induced by the Lagrangian mapping between subspaces of maximal dimension in $\mathcal{W}(2n-1,2)$ and points of $\mathcal{Z}_n$ was already established in \cite{holweck2014} in order to generalize observations made in \cite{levay2013} regarding the case $n=3$ and its connection with the so-called {\em black-holes/qubits correspondence}. More recently, that same bijection was also considered in \cite{van2019} with motivating examples from supergravity theory. It was proven \cite{van2019} that, over $\mathbb{F}_2$, $\mathcal{Z}_n$ is the image of the Spinor variety and thus a $\text{Spin}(2n+1)$-orbit. However, the correspondence of orbits as established in Theorem \ref{theo:main} was not proven in the former papers, neither was the connection with graph states classification. \\
\indent The paper is organized as follows. In Sec.\ref{preliminaries} we recall the basic definitions regarding the $n$-qubit Pauli group, the symplectic polar space and the Lagrangian mapping. In Sec.\ref{orbitZn} and \ref{orbitS} we show how the $\left(\mathcal{C}^{\text{loc}} _n\rtimes \mathfrak{S}_n\right)$ action on the symplectic polar space translates into an action on the variety of principal minors, proving the first part of Theorem \ref{theo:main}. In Sec.\ref{sec:stab} and \ref{graph} we recall the definitions and basic properties of stabilizer states and graph states, and we complete the proof of our Theorem. Finally Sec.\ref{sec:application} is dedicated to applications of our correspondence.
 %In Sec. \ref{orbitS} we consider the pull-back of the action on $\mathcal{Z}_n$ on the maximal $n$-fold maximal abelian subgroups of the $n$-qubit Pauli group. Sec. \ref{graph} is dedicated to the implication of Theorem \ref{theo:main} on graph states classification as well as some examples.
%\end{obs}

% \begin{teo}
%  The representatives of the  $\mathcal{C}^{\text{loc}} _n\rtimes \mathfrak{S}_n$ classficiation of graph states are in bijection with normal forms of the  $SL_2(\mathbb{K})^{\times n}\rtimes \mathfrak{S}_n$ orbits of $Z_n\subset \mathbb{P}(\mathbb{F}_2^{2^n})$.
% \end{teo}

%Form a geometric perspective, stablizers can be seen as generators, i.e. linear subspace of maximal dimension, of $\mathcal{W}(2n+1,2)$ the symplectic polar space of rank $n$ and order $2$ that encodes the commutation relation of the $n$-qubit Pauli group $\mathcal{P}_n$ \cite{saniga2006,thas2009}.

\section{Preliminaries}\label{preliminaries}
\indent \indent In this section we recall the definitions of the $n$-fold Pauli group and the Clifford group, we introduce the symplectic polar space of rank $n$ and order $2$, encoding the commutation relations of $\mathcal{P}_n$, and finally we describe the Lagrangian mapping.

\subsection{The Pauli group $\cal P_n$ and the local Clifford group $\cal C_n^{\text{loc}}$}

\indent \indent The group of the elementary {\em Pauli matrices} is $\cal P_1=\langle i X,  iZ , i Y \rangle \subset \U(2,\mathbb C)$ where
{\small \[ X= \begin{bmatrix}
0 & 1 \\ 1 & 0
\end{bmatrix} \ \ , \ \ Z=\begin{bmatrix}
1 & 0 \\ 0 & -1
\end{bmatrix} \ \ , \ \ Y=\begin{bmatrix}
0 & -i \\ i & 0
\end{bmatrix} \ .\]}
\noindent We notice that $X^2=Z^2=Y^2=I$ and the following commutation rules hold
\[XZ=-ZX=-iY \ \ , \ \ XY=-YX=iZ \ \ , \ \ ZY=-YZ=-iX \ .\]
Moreover, $\#\cal P_1=16$ and its center is $Z(\cal P_1)=\{\pm I, \pm i I\}\simeq \mathbb Z/4\mathbb Z$: in particular, $V_1=\cal P_1/Z(\cal P_1) \simeq \mathbb Z/2\mathbb Z \times \mathbb Z/2\mathbb Z$.

\begin{Def} The {\bf $n$-fold Pauli group} is $\cal P_n=\{A_1 \otimes \ldots \otimes A_n \ | \ A_i \in \cal P_1\} \subset \U(2^n,\mathbb C)$. 
\end{Def}
	
\noindent From the commutation rules above it is clear that
\[ \cal P_n=\big\{\pm Z^{\mu_1}X^{\nu_1}\otimes \ldots \otimes Z^{\mu_n}X^{\nu_n}, \pm i Z^{\mu_1}X^{\nu_1}\otimes \ldots \otimes Z^{\mu_n}X^{\nu_n} \ \big| \ \mu_i,\nu_i \in \{0,1\}\big\} \ .\]
In particular, we can exhibit the following generators
\begin{equation}\label{generators of P_n}
	\cal P_n=\left\langle I\otimes \ldots \otimes \underbrace{Z}_{l-th} \otimes \ldots \otimes I , \ I\otimes \ldots \otimes \underbrace{X}_{s-th} \otimes \ldots \otimes I \ \bigg| \ 1\leq l,s \leq n \right\rangle \ .
	\end{equation}
Notice that $\#\cal P_n=4\cdot 4^n$ and its center is $Z(\cal P_n)=\{\pm I^{\otimes n}, \pm i I^{\otimes n}\}\simeq \mathbb Z/4\mathbb Z$.

\begin{obs}\label{coordinates}
	The quotient $V_n=\cal P_n/Z(\cal P_n)$ is in one-to-one correspondence with $\mathbb F_2^{2n}$
	\begin{equation}\label{lessoncoordinates} \begin{matrix}
	V_n & \stackrel{1:1}{\longleftrightarrow} & \mathbb F_2^{2n}\\
	[Z^{\mu_1}X^{\nu_1}\otimes \ldots \otimes Z^{\mu_n}X^{\nu_n}] & \longleftrightarrow & (\mu_1, \nu_1, \ldots , \mu_n, \nu_n)
	\end{matrix} \ .\end{equation}
	Clearly, this correspondence is not unique, but it depends on the coordinates we choose in $\mathbb F_2^{2n}$: for instance, another one-to-one correspondence is given by
	\begin{equation}\label{articlecoordinates} [Z^{\mu_1}X^{\nu_1}\otimes \ldots \otimes Z^{\mu_n}X^{\nu_n}] \longleftrightarrow (\mu_1, \mu_2 , \ldots , \mu_n, \nu_1, \nu_2, \ldots , \nu_n) \ .\end{equation}
	\end{obs}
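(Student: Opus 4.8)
The content of the Remark is that the assignment \eqref{lessoncoordinates} is a well-defined bijection $V_n \to \mathbb{F}_2^{2n}$, and that \eqref{articlecoordinates} gives a second, equally valid one. The plan is to first define the map on $\mathcal{P}_n$ itself: for $g\in\mathcal{P}_n$, the explicit description of $\mathcal{P}_n$ recalled above lets us write $g=\epsilon\,Z^{\mu_1}X^{\nu_1}\otimes\cdots\otimes Z^{\mu_n}X^{\nu_n}$ with $\epsilon\in\{\pm 1,\pm i\}=Z(\mathcal{P}_n)\cdot I^{\otimes n}$ and $\mu_i,\nu_i\in\{0,1\}$, and then send $g\mapsto(\mu_1,\nu_1,\ldots,\mu_n,\nu_n)$. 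Surjectivity onto $\mathbb{F}_2^{2n}$ is immediate from this formula, so the real work is to show that the tuple $(\mu_i,\nu_i)_i$ is independent of the chosen representative of the coset $gZ(\mathcal{P}_n)$ — equivalently, that in the expression above both the tuple and the scalar $\epsilon$ are uniquely determined by $g$.

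This reduces to a single linear-algebra fact: the $4^n$ operators $Z^{\mu_1}X^{\nu_1}\otimes\cdots\otimes Z^{\mu_n}X^{\nu_n}$, with $\mu_i,\nu_i\in\{0,1\}$, are linearly independent in $M_{2^n}(\mathbb{C})$. For $n=1$ this is the elementary observation that $Z^{\mu}X^{\nu}$ runs over $\{I,X,Z,ZX\}$, and these four $2\times 2$ matrices are linearly independent (check entries, or note $ZX=iY$ and use that $I,X,Y,Z$ is a basis of $M_2(\mathbb{C})$); for general $n$ linear independence is preserved under tensor products. Since $4^n=\dim_{\mathbb{C}}M_{2^n}(\mathbb{C})$, these operators are in fact a basis. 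Hence if $\epsilon\,Z^{\mu_1}X^{\nu_1}\otimes\cdots=\epsilon'\,Z^{\mu_1'}X^{\nu_1'}\otimes\cdots$, comparing coordinates in this basis forces $\mu_i=\mu_i'$ and $\nu_i=\nu_i'$ for all $i$, and then $\epsilon=\epsilon'$. This proves both that the map descends to $V_n$ and that the induced map $V_n\to\mathbb{F}_2^{2n}$ is injective; together with the count $\#V_n=\#\mathcal{P}_n/\#Z(\mathcal{P}_n)=(4\cdot 4^n)/4=4^n=\#\mathbb{F}_2^{2n}$ this gives bijectivity. (Along the way one also sees that it is a group isomorphism onto $(\mathbb{F}_2^{2n},+)$, since $Z^aX^b\cdot Z^cX^d=\pm Z^{a+c}X^{b+d}$.)

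For \eqref{articlecoordinates} there is nothing further to prove: it is the composition of the bijection \eqref{lessoncoordinates} with the permutation $(\mu_1,\nu_1,\ldots,\mu_n,\nu_n)\mapsto(\mu_1,\ldots,\mu_n,\nu_1,\ldots,\nu_n)$ of the coordinates of $\mathbb{F}_2^{2n}$, which is a bijection (indeed a linear automorphism); this is precisely the sense in which the identification is not canonical. I do not expect a genuine obstacle here — the only point requiring care is to keep the central scalar $\epsilon$ bookkept separately from the ``symplectic'' data $(\mu_i,\nu_i)_i$, so that passing to the quotient by $Z(\mathcal{P}_n)$ is seen to erase exactly $\epsilon$ and nothing else.
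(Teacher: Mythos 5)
Your argument is correct, and it matches what the paper implicitly relies on: the Remark is stated without proof, resting on the normal form $\pm(i)\,Z^{\mu_1}X^{\nu_1}\otimes\cdots\otimes Z^{\mu_n}X^{\nu_n}$ for elements of $\cal P_n$ and the cardinality count $\#\cal P_n/\#Z(\cal P_n)=4^n=\#\mathbb F_2^{2n}$, exactly the data you organize. Your only addition is the (standard, and correctly executed) linear-independence argument showing the exponent tuple and the phase are uniquely determined, which cleanly justifies well-definedness on the quotient; the observation that \eqref{articlecoordinates} is just \eqref{lessoncoordinates} composed with a coordinate permutation is likewise exactly the intended point.
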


\begin{Def} The {\bf $n$-fold Clifford group} is the normalizer 
	\[ \cal C_n= N_{\U(2^n,\mathbb C)}(\cal P_n)=\big\{U \in \U(2^n,\mathbb C) \ | \ U\cal P_n U^\dagger=\cal P_n\big\} \]
	where $U^\dagger = \ \!\! ^t\overline{U}$ is the hermitian (i.e. conjugated transposed) of $U$.
	\end{Def}

\noindent It is known that $\cal C_n=\langle H_j, \sqrt{Z}_k, CNOT_{st}\rangle$ where
{\small \[ H= \frac{1}{\sqrt{2}}\begin{bmatrix}
1 & 1 \\ 1 & -1
\end{bmatrix} \ \ , \ \ \sqrt{Z}=\begin{bmatrix}
1 & 0 \\ 0 & i
\end{bmatrix} \ \ , \ \ CNOT= \begin{bmatrix}
I_2 & 0 \\ 0 & X
\end{bmatrix}\]}
and $H_j=I \otimes \ldots \otimes \overbrace{H}^{\text{$j$-th}} \otimes \ldots \otimes I$ (same for $\sqrt{Z}_k$), while $CNOT_{st}$ acts as $CNOT$ on the $s$-th and $t$-th factors of $\mathcal{P}_n\subset \U(2,\mathbb C)^{\otimes n}$ and as the identity on the remaining ones. More precisely, there is the following action of $CNOT_{st}$ on the generators of $\mathcal{P}_n$:
{\small \[\begin{array}{l|l}
P\in \mathcal{P}_2 & (CNOT) P (CNOT)^\dagger\\
\hline
X\otimes I & X\otimes X\\
I\otimes X & I\otimes X\\
Z\otimes I & Z\otimes I\\
I\otimes Z & Z\otimes Z
\end{array}\]}

\indent Since not all matrices in $\cal C_n$ are decomposable (i.e. of the form $U_1\otimes \ldots \otimes U_n$ with $U_i \in \U(2,\mathbb C)$), it makes sense to give the next definition.

\begin{Def}
The {\bf $n$-fold local Clifford group} is the subgroup
\[ \cal C_n^{\text{loc}}=\{U_1\otimes \ldots \otimes U_n \ | \ U_i \in \cal C_1\} < \cal C_n \ .\]
\end{Def}

\noindent By definition, the local Clifford group $\cal C_n^{\text{loc}}$ acts on the Pauli group $\cal P_n$ by conjugacy

\begin{equation}\label{conjugacy}
\begin{matrix}
\cal C_n^{\text{loc}}\times \cal P_n & \longrightarrow & \cal P_n \\
(U_1\otimes \ldots \otimes U_n, A_1\otimes \ldots \otimes A_n) & \mapsto & (U_1A_1U_1^\dagger)\otimes \ldots \otimes (U_nA_nU_n^\dagger)
\end{matrix} \ .
\end{equation}

It is known that $\cal C_1^{\text{loc}}=\cal C_1=\langle H, \sqrt{Z}\rangle$ and $\cal C_n^{\text{loc}}=\langle H_j, \sqrt{Z}_k\rangle$. Moreover, $\cal P_1 \subset \langle H , \sqrt{Z}\rangle=\cal C_1$: indeed $Z=(\sqrt{Z})^2 \in \cal C_1$, $X=HZH^\dagger\in \cal C_1$ and $Y= -iXZ \in \cal C_1$. Let us explicit the action of $\cal C_1$ on the elementary Pauli matrices: 
{\small \begin{equation}\label{C_1 on P_n}
	\begin{tikzpicture}[scale=2.5]
	\node(X) at (-1.5,0.5){$X$};
	\node(Z) at (0,0.5){$Z$};
	\node(Y) at (1.5,0.5){$Y$};
	\node(HX) at (-1.8,-0){$Z$};
	\node(SX) at (-1.2,-0){$Y$};
	\node(HZ) at (-0.3,-0){$X$};
	\node(SZ) at (0.3,-0){$Z$};
	\node(HY) at (1.2,0){$-Y$};
	\node(SY) at (1.8,0){$-X$};
	
	\path[font=\scriptsize,->, >= angle 90]
	(X) edge node [left] {$H\curvearrowright$} (HX)
	(X) edge node [right] {$\sqrt{Z}\curvearrowright$} (SX)
	(Z) edge node [left] {$H\curvearrowright$} (HZ)
	(Z) edge node [right] {$\sqrt{Z}\curvearrowright$} (SZ)
	(Y) edge node [left] {$H\curvearrowright$} (HY)
	(Y) edge node [right] {$\sqrt{Z}\curvearrowright$} (SY);
	\end{tikzpicture}
\end{equation}}

\subsection{The symplectic polar space $\cal W(2n-1,2)$ and $\cal I^n$}

\indent \indent  By Remark \ref{coordinates} we know that the Pauli group quotient $V_n$ is in one-to-one correspondence with the vector space $\mathbb F_2^{2n}$. Let us denote the binary projective space $\mathbb P(\mathbb F_2^{2n})$ by $\mathbb P_2^{2n-1}$ and let us fix the coordinates \eqref{articlecoordinates}. Thus we have
\begin{equation}\label{projlessoncoordinates}
\begin{matrix}
& \cal P_n & \dashrightarrow & \mathbb P_2^{2n-1} & \\
M= &\alpha Z^{\mu_1}X^{\nu_1}\otimes \ldots \otimes Z^{\mu_n}X^{\nu_n} &  \mapsto & [\mu_1: \ldots : \mu_n : \nu_1 : \ldots : \nu_n] & =P_M
\end{matrix}
\end{equation}
where the arrow is dashed since the above map is actually not defined in $\alpha I^{\otimes n}$.\\
%Moreover, given two Pauli group elements $M_1,M_2 \in \cal P_n$ and their associated points $P_1,P_2\in \mathbb P_2^{2n-1}$, it holds
%\[M_1M_2 \mapsto P_1+P_2 \ . \]
\noindent Up to the coefficients $\alpha$'s (which are considered as global phases), this association completely describes $\cal P_n$ by projective points as a set, but it losts the commutation information we have in $\cal P_n$. Next we want to recover such information \cite{havlicek2009,levay2017}.
%\begin{obs}
% The sum of two points $P_1+P_2$ in a projective space is not well defined in general but here, in the particular case of a vector space $\mathbb{F}_2^{2n}$ over the two elements field, it is well defined and show how the multiplicative structure of $\mathcal{P}_n$ is mapped to an additive structure on $\mathbb{F}^{2n}$ (and $\mathbb{P}^{2n-1} _2$).
%\end{obs}

\indent Let $M_1,M_2 \in \cal P_n$ and let $P_1,P_2 \in \mathbb P_2^{2n-1}$ be the corresponding points. By a simple count it comes out that
\[M_1M_2 = M_2M_1 \iff \sum_{i=1}^n\big(\mu_i^{(1)}\nu_i^{(2)}-\mu_i^{(2)}\nu_i^{(1)}\big)=0 \ .\]
Consider the symplectic bilinear form $\langle \cdot , \cdot \rangle_J$ on $\mathbb F_2^{2n}$ (hence on $\mathbb P_2^{2n-1}$) where

\[ J=\begin{bmatrix} & I_n \\ -I_n & \end{bmatrix} \stackrel{\mathbb F_2}{=} \begin{bmatrix} & I_n \\
I_n & \end{bmatrix} \ .\]

\noindent Since $\langle P_1,P_2\rangle_J= \!^tP_1JP_2=\sum_{i=1}^n\big(\mu_i^{(1)}\nu_i^{(2)}-\mu_i^{(2)}\nu_i^{(1)}\big)$, the previous relation is equivalent to 
\[ M_1M_2 = M_2M_1 \iff \langle P_1, P_2 \rangle _J =0\]
that is {\em commuting Pauli group elements correspond to isotropic points with respect to $\langle\cdot,\cdot \rangle_J$}.

\begin{Def}
	The {\bf symplectic polar space} of rank $n$ over $\mathbb F_2$ (with respect to $\langle\cdot,\cdot\rangle_J$) is the set of (fully) isotropic subspaces of $\mathbb P_2^{2n-1}$
	\[ \cal W(2n-1,2)=\big\{ W \subset \mathbb P_2^{2n-1} \ \big| \ \forall P,Q \in W, \ \langle P,Q\rangle_J=0\big\} \ .\]
	\end{Def}

\indent Let $M \in \cal P_n$ and let $P\in \mathbb P_2^{2n-1}$ be its associated point: then $P$ defines the hyperplane in $\mathbb P_2^{2n-1}$
\[ H_P=\big\{ Q \in \mathbb P_2^{2n-1} \ \big| \ \langle P,Q\rangle_J=0\big\} \ .\]
Clearly, this hyperplane is not fully isotropic. However, we can extend this construction to any set of Pauli group elements with the only condition that they mutually commute:
\[\begin{matrix}
\cal P_n & \dashrightarrow & \mathbb P_2^{2n-1} & \stackrel{\vee}{\rightarrow} & \big(\mathbb P_2^{2n-1}\big)^\vee \\
M_1, \ldots , M_k & \mapsto & P_1, \ldots , P_k & \mapsto & H_{P_1,\ldots,P_k}=\big\{ Q \in \mathbb P_2^{2n-1} \ \big| \ \langle P_i,Q\rangle_J=0, \ \forall i =1:k\big\}
\end{matrix}\]

\noindent The condition for the $M_i$'s to be mutually commuting implies that the $P_i$'s are two-by-two isotropic, but the subspace $H_{P_1,\ldots,P_k}$ is not fully isotropic in general.\\
\indent We know that, for any $P\in \mathbb P_2^{2n-1}$, $H_P$ is a hyperplane, hence it has (projective) dimension $2n-2$. Generally, given $P_1, \ldots , P_k \in \mathbb P_2^{2n-1}$ two-by-two isotropic, the subspace $H_{P_1,\ldots , P_k}$ does not have dimension $2n-1-k$, but it holds so if we start from $k$ mutually commuting Pauli group elements $M_1,\ldots, M_k$ with the additional condition to be {\em independent}, in the sense
	\[ M_1^{c_1}\cdot \ldots \cdot M_k^{c_k}=I^{\otimes n} \iff \forall i=1:k, \ c_i=0 \ .\]
	
\begin{prop}\label{subspacesascolumns}
	Let $M_1, \ldots, M_k \in \cal P_n$ be $k$ Pauli group elements and let $P_1,\ldots , P_k \in \mathbb P_2^{2n-1}$ be their associated points. Then the following are equivalent:
	\begin{enumerate} 
	    \item[$(i)$] $M_1,\dots,M_k$ are mutually commuting and independent Pauli group elements,
		\item[$(ii)$] the matrix $S=\big[P_1|\ldots |P_k\big]\in \Mat_{2n\times k}(\mathbb F_2)$ has  rank $k$ and it holds \space  $\!^tSJS=0$;
		\item[$(iii)$] $\dim_{\mathbb P}H_{P_1,\ldots , P_k}=2n-1-k$ and $\text{Col}(S)\subset H_{P_1,\ldots , P_k}= \text{Col}(S)^\perp$, where Col$(S)\subset \mathbb P_2^{2n-1}$ is the subspace generated by the columns of $S$ and Col$(S)^\perp$ is its orthogonal for the symplectic form $\langle , \rangle_J$.
		\end{enumerate}
	\end{prop}
	\begin{proof}
	 $(i) \Leftrightarrow (ii)$ follows from the definitions: the condition of being independent is equivalent to require that the matrix $S$ has rank $k$ while the condition of being mutually commuting translates to $^tSJS=0$.
	 $(ii) \Leftrightarrow (iii)$: The condition on the dimension is equivalent to the fact that $S$ has rank $k$. By definition, $H_{P_1,\dots,P_n}=\text{Col}(S)^\perp$ and $\text{Col}(S)\subset\text{Col}(S)^\perp$ is equivalent to the condition $^tSJS=0$.
	\end{proof}

\begin{obs}\label{symplectic and quadric}
	The condition $\text{Col}(S)\subset \text{Col}(S)^\perp$ imposes some restrictions on $k$ for the maximal number of mutually communting and independent elements $M_1,\dots,M_k$. Indeed, $\text{Col}(S)\subset \text{Col}(S)^\perp$ implies that $k$ should satisfy $k-1\leq 2n-1-k$, i.e. $k\leq n$.
	\end{obs}
	
By Remark \ref{symplectic and quadric}, it follows that in order to actually reach out subspaces of type $H_{P_1,\ldots , P_k}$ (with $P_i$'s mutually commuting and independent) which are fully isotropic (i.e. $H_{P_1,\ldots P_k}\in \cal W(2n-1,2)$) we need to impose $\dim H_{P_1,\ldots , P_k}=2n-1-k = n-1$, that is $k = n$. The $(n-1)$-dimensional fully isotropic subspaces of $\mathbb P_2^{2n-1}$ are known as generators of $\mathcal{W}(2n-1,2)$  and we denote their set by
\[ \cal I^n= \big\{ W \in \cal W(2n-1,2) \ \big| \ \dim_{\mathbb P}W=n-1\big\} \ .\]
By Proposition \eqref{subspacesascolumns}$(ii)$, it follows that every $W \in \cal I^n$ is of the form $H_{P_1,\ldots,P_n}$ where $P_1, \ldots, P_n$ come from mutually commuting and independent Pauli group elements $M_1,\ldots, M_n$ via the following correspondence:
\begin{equation}\label{Pauli-I^n}
\begin{matrix}
\underbrace{M_1, \ldots , M_n}_{\text{indep. \& commut.}} & \Leftrightarrow & \underbrace{P_1,\ldots , P_n}_{\text{lin. indep. in }\mathbb F_2^{2n}} & \Leftrightarrow & \underbrace{H_{P_1,\ldots , P_n}}_{\dim_{\mathbb P}=n-1}
\end{matrix} \ .
\end{equation}

\begin{Def}
	A {\bf maximal ($n$-fold) abelian subgroup} $\cal S_{M_1,\ldots, M_n}<\cal P_n$ is a subgroup of $\cal P_n$ generated by $n$ independent and mutually commuting elements $M_1,\ldots , M_n$.
	\end{Def}

\noindent Let us denote by $\scr S(\cal P_n)$ the set of maximal abelian subgroups in $\cal P_n$. Thus, in the previous notations, we can reinterpret the correspondence \eqref{Pauli-I^n} as 

\begin{equation}\label{stab-In} 
\begin{matrix}
\cal S_{M_1,\ldots , M_n} \in \scr S(\cal P_n) \ \longleftrightarrow \ H_{P_1,\ldots, P_n} \in \cal I_n
\end{matrix} \ .
\end{equation}

\subsection{The Lagrangian Grassmannian $\LG_{\mathbb F_2}(n,2n)$ and $\cal Z_n$}

\indent \indent We recall that the Grassmannian $\Gr_{\mathbb F_2}(n,2n)$ is the set of $n$-dimensional subspaces of $\mathbb F_2^{2n}$: we will write $\Gr(n,2n)$ by omitting the ground field $\mathbb F_2$. The Grasssmannian gains the structure of projective variety via the Pl\"{u}cker embedding \cite{harris2013}
\begin{equation}\label{pluckerembedding} \begin{matrix}
\Pl: & \Gr(n,2n) & \hookrightarrow & \mathbb P\bigg(\bigwedge^n \mathbb F_2^{2n}\bigg) &\simeq \mathbb P_2^{\binom{2n}{n}-1}\\
& \langle v_1, \ldots , v_n\rangle & \mapsto & [v_1\wedge \ldots \wedge v_n]
\end{matrix} \ .\end{equation}
By abuse of notation, we simply look at $\Gr(n,2n)$ as a subset of $\mathbb P \big(\bigwedge^n\mathbb F_2^{2n}\big)$. It is useful to describe the Grassmannian by its parameterizations in $\mathbb P_2^{\binom{2n}{n}-1}$ on the standard open subsets
\begin{equation}\label{grassmannian chart} U_{\{i_1,\ldots,i_n\}}:=\left\{\langle v_1,\ldots, v_n\rangle \in \Gr(n,2n) \ \bigg| \ \det {\scriptsize \begin{bmatrix} v_{1,i_1} & \cdots & v_{n,i_1}\\
	\vdots & & \vdots \\
	v_{1,i_n} & \cdots & v_{n,i_n} \end{bmatrix}}\neq 0\right\} \end{equation}
for any index subset $\{i_1,\ldots, i_n\}\subset \{1,\ldots , 2n\}$. For simplicity, we recall what happens in the open subset $U_{\{1,\ldots, n\}}$: given $W=\langle v_1 , \ldots , v_n \rangle \in \Gr(n,2n)$, we can rewrite the vectors $v_i$'s with respect to the canonical basis $(e_1,\ldots, e_{2n})$ of $\mathbb F_2^{2n}$ as
\begin{equation}\label{pluckercoordinates} v_i= e_i + \sum_{j=1}^n a_{ji}e_{n+j}\end{equation}
and, by putting the basis vectors in columns, this gives the matrix
\[ \Bigg[ v_1 \ \bigg| \ \ldots \ \bigg| \ v_n \Bigg]  = {\tiny \begin{bmatrix}
1 & & \\
& \ddots & \\
& & 1\\
a_{11} & \ldots & a_{1n} \\
\vdots & \ddots & \vdots \\
a_{n1} & \ldots & a_{nn}
\end{bmatrix}}= \begin{bmatrix}

I_n\\

A\\

\end{bmatrix} \in \Mat_{2n\times n}(\mathbb F_2) \ .\]
\begin{Def}
	The above matrix $A\in \Mat_{n}(\mathbb F_2)$ is called the {\bf Pl\"{u}cker} (coordinates) {\bf matrix} with respect to the open subset $U_{\{1,\ldots, n\}}$ of the subspace $\langle v_1 , \ldots , v_n\rangle$, and we refer to its (ordered) columns as the {\bf Pl\"{u}cker basis} in $U_{\{1,\ldots, n\}}$.
	\end{Def}

\begin{obs}
	Given a general index subset $\{t_1,\ldots, t_n\}\subset \{1,\ldots, 2n\}$, for any subspace $W \in U_{\{t_1,\ldots, t_n\}}$ we can always consider its Pl\"{u}cker basis in $U_{\{t_1,\ldots, t_n\}}$
	\begin{equation}\label{plucker coordinates in general chart} 
		v_i= e_{t_i}+ \sum_{j=1}^n a_{ji}e_{s_j}
		\end{equation}
	where $\{s_1,\ldots, s_n\}=\{1,\ldots ,2n\}\setminus \{t_1,\ldots, t_n\}$: in particular, $W$ is described by a $2n\times n$ matrix having the $i$-th identity row in the $t_i$-h row. However, notice that \eqref{pluckercoordinates} and \eqref{plucker coordinates in general chart} are equivalent up to permutation of the standard basis vectors $e_i$'s.\\
	It is also worth remarking that a given subspace $W \in \Gr(n,2n)$ may lie in different open subsets of the form $U_I$, and thus we can consider its Pl\"{u}cker basis (and Pl\"{u}cker matrix) with respect to any subset $U_I$ containing it.
	\end{obs}

\begin{obs}
	For simplicity, in the following we will work in the open subset $U_{\{1,\ldots, n\}}$, but every definition and construction can be readapted to any standard open subset $U_{\{t_1,\ldots, t_n\}}$. 
	\end{obs}

By taking for any subspace $W \in U_{\{1,\ldots, n\}}$ its Pl\"{u}cker basis $(v_1,\ldots, v_n)$ (in $U_{\{1,\ldots,n\}}$) as \eqref{pluckercoordinates}, one gets the following projective coordinates 
\[\begin{matrix}
U_{\{1,\ldots, n\}} & \stackrel{\simeq}{\longrightarrow} & \mathbb P_2^{\binom{2n}{n}-1}\\
[v_1 \wedge \ldots \wedge v_n] & \mapsto & \big[1 : a_{ji} : A_{\{i,j\}\{s,t\}}: \ldots : A_{\{i_1,\ldots, i_k\}\{j_1,\ldots, j_k\}} : \ldots : \det A \big]
\end{matrix}\]
where $A_{\{i_1,\ldots, i_k\}\{j_1,\ldots, j_k\}}$ is the (determinant of the) $k\times k$ minor of $A$ given by the rows $i_1, \ldots , i_k$ and the columns $j_i,\ldots, j_k$. 

\begin{fact}
	Any open subset $U_I \subset \Gr(n,2n) \hookrightarrow \mathbb P\big(\bigwedge^n\mathbb F_2^{2n}\big)$ is parameterized by all minors of $n\times n$ matrices with coefficients in $\mathbb F_2$.
\end{fact}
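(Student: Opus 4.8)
The plan is to compute the Plücker coordinates of a point in the big cell of $\Gr(n,2n)$ directly from its Plücker matrix $A$ and to recognise the resulting list of coordinates as the list of all minors of $A$, of every size. The statement is essentially a restatement of the parameterization displayed just above it, so the work is to justify that display.

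First I would recall that for $\langle v_1,\ldots,v_n\rangle \in \Gr(n,2n)$, writing $v_1,\ldots,v_n$ as the columns of a $2n\times n$ matrix $V$, the Plücker coordinate of $[v_1\wedge\cdots\wedge v_n]$ indexed by an $n$-subset $I\subseteq\{1,\ldots,2n\}$ is the maximal minor $\det(V_I)$ of the square submatrix of $V$ on the rows $I$. On the big cell one may apply the basis change \eqref{pluckercoordinates} and take $V=\left[\begin{smallmatrix}I_n\\ A\end{smallmatrix}\right]$. Then I would split the row index set $\{1,\ldots,2n\}$ into the top block $\{1,\ldots,n\}$ and the bottom block $\{n+1,\ldots,2n\}$, identifying the latter with $\{1,\ldots,n\}$ via $n+j\mapsto j$; for a given $I$ I put $T=I\cap\{1,\ldots,n\}$ and let $B\subseteq\{1,\ldots,n\}$ be the image of $I\cap\{n+1,\ldots,2n\}$, so that $|T|+|B|=n$.

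Next I would expand $\det(V_I)$ by Laplace along the rows of $V_I$ coming from the top block: each of those is a row of $I_n$, hence a row with a single $1$, placed in column $t$ for $t\in T$. Any nonzero term of the expansion must therefore pair these top rows with exactly the columns in $T$, so the complementary block is the submatrix of $A$ on rows $B$ and columns $\{1,\ldots,n\}\setminus T$; this gives $\det(V_I)=\pm A_{B,\{1,\ldots,n\}\setminus T}$, and since we work over $\mathbb F_2$ the sign drops, leaving the minor $A_{B,\{1,\ldots,n\}\setminus T}$ of size $|B|$.

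Finally I would observe that as $I$ ranges over the $\binom{2n}{n}$ subsets of $\{1,\ldots,2n\}$ of size $n$, the pair $\big(B,\{1,\ldots,n\}\setminus T\big)$ ranges bijectively over all pairs $(R,C)$ of equal-cardinality subsets of $\{1,\ldots,n\}$: one recovers $T$ as the complement of $C$, and then $I$ from $T$ together with $R=B$, and indeed $|C|=n-|T|=|B|=|R|$. Hence the homogeneous coordinates of $\Pl(\langle v_1,\ldots,v_n\rangle)$ are exactly the minors of $A$ of all sizes $0\le k\le n$ — the constant $1$ for $k=0$, the entries $a_{ji}$ for $k=1$, up to $\det A$ for $k=n$ — each occurring exactly once, which is precisely the parameterization displayed before the statement; conversely every $n\times n$ matrix $A$ over $\mathbb F_2$ arises this way, so the big cell is parameterized by all such minors. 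I do not expect a genuine obstacle: the only points needing care are the top-block Laplace expansion and the combinatorial bijection between $n$-subsets of $\{1,\ldots,2n\}$ and pairs of equal-size subsets of $\{1,\ldots,n\}$, and the characteristic-$2$ hypothesis removes all sign bookkeeping.
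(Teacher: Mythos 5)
Your argument is correct and is exactly the standard justification that the paper leaves implicit: it states this Fact without proof, directly after displaying the coordinates $\big[1:a_{ji}:A_{\{i,j\}\{s,t\}}:\ldots:\det A\big]$ obtained from the Pl\"ucker matrix $\left[\begin{smallmatrix}I_n\\ A\end{smallmatrix}\right]$. Your Laplace expansion along the identity-block rows and the bijection between $n$-subsets of $\{1,\ldots,2n\}$ and pairs of equal-size subsets of $\{1,\ldots,n\}$ (consistent with $\sum_k\binom{n}{k}^2=\binom{2n}{n}$) is precisely the computation underlying that display, so your proposal matches the paper's (implicit) approach.
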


\begin{obs}\label{othercoordGr}
	The parameterization of the open subset $U_{\{1,\ldots, n\}}\subset \Gr(n,2n)$ by all minors (of any size) of $n\times n$ matrices (as well as the one of any subset $U_I$) can be seen as induced by all {\em maximal} minors (of size $n\times n$) of $2n\times n$ matrices of the form ${\footnotesize \begin{bmatrix}I_n \\ A\end{bmatrix}}$ as $A\in \Mat_{n}(\mathbb F_2)$ varies: indeed, it is enough to project onto the coordinates given by minors fully contained in $A$.
\end{obs}

\indent Let us restrict the Pl\"{u}cker embedding \eqref{pluckerembedding} to $\cal I^n\subset \cal W(2n-1,2)$: more formally, we have to apply the Pl\"{u}cker embedding to the set of $n$-dimensional vector subspaces of $\mathbb F_2^{2n}$ whose projectivizations are in $\cal I_n$.

\begin{Def}
	The {\bf Lagrangian Grassmannian} $\LG_{\mathbb F_2}(n,2n)$ is the image via the Pl\"{u}cker embedding of $\cal I^n$
	\[ \LG_{\mathbb F_2}(n,2n)=\Pl(\cal I^n) \subset \mathbb P\bigg(\bigwedge^n\mathbb F_2^{2n}\bigg) \ .\]
\end{Def}

\noindent Thus the Lagrangian Grassmannian is the projective variety parameterizing all $(n-1)$-dimensional (fully) isotropic subspaces of $\mathbb P_2^{2n-1}$: again, we write $\LG(n,2n)$ by omitting the ground field.

\begin{Def}
	Since $\LG(n,2n)\subset \Gr(n,2n)$, for any standard open subset $U_I\subset \Gr(n,2n)$ as in \eqref{grassmannian chart} we denote its restriction to the Lagrangian Grassmannian by
	\begin{equation}\label{charts in LG}
		LU_I := U_I \cap \LG(n,2n) \ .
		\end{equation}
	\end{Def}

\indent Our next goal is to find parameterizations of $\LG(n,2n)$ in $\mathbb P_2^{\binom{2n}{n}-1}$ \cite{landsberg2012}.\\
\hfill\break
\noindent {\bf Setting:} Let us fix the coordinates \eqref{articlecoordinates} in $\mathbb P_2^{2n-1}$ and the symplectic form $\langle\cdot , \cdot \rangle_J$.\\
\hfill\break
Given $(v_1,\ldots , v_n)$ a Pl\"{u}cker basis of $H_{P_1,\ldots , P_n}\in \cal I^n\cap U_{\{1,\ldots, n\}}$, by isotropicity we have $\langle v_i,v_j\rangle_{J}=0$ for all $i,j=1:n$: then from \eqref{pluckercoordinates} we get
\begin{align*} \langle v_i,v_j \rangle_{ J}=0 & \iff \bigg\langle e_i+\sum_{k=1}^n a_{ki}e_{n+k}, e_j+\sum_{l=1}^n a_{lj}e_{n+l}\bigg\rangle_{J}=0 \\
& \iff  \bigg\langle e_i+ \sum_{k=1}^n a_{ki}e_{n+k}, \sum_{l=1}^n a_{lj}e_{l} + e_{n+j}\bigg\rangle_I=0\\
& \iff a_{ij} + a_{ji}=0 \stackrel{\mathbb F_2}{\iff} a_{ji}=a_{ij}
\end{align*}
that is the Pl\"{u}cker matrix $A$ of $H_{P_1,\ldots , P_n}\in \cal I^n$ in $U_{\{1,\ldots, n\}}$ is symmetric.

\begin{fact}
	Any standard open subset $LU_I\subset \LG(n,2n)$ is parameterized by all minors of $n\times n$ symmetric matrices with coefficients in $\mathbb F_2$. For instance, the coordinates in the standard open subset  $LU_{\{1,\ldots,n\}}$ are 
	\begin{equation}\label{symmmatrices} \begin{matrix}
	A=(a_{ji})_{i,j} \in \Sym^2(\mathbb F_2^{n}) & \mapsto & \big[1 : a_{ji} : A_{\{i,j\}\{s,t\}}: \ldots :  \det A \big] \in LU_{\{1,\ldots, n\}}
	\end{matrix}\end{equation}
\end{fact}

\indent Among all minors $A_{\{i_1\ldots i_k\}\{j_1\ldots j_k\}}$ of a $n\times n$ symmetric matrix $A\in \Sym^2(\mathbb F_2^{n})$, we can restrict ourselves to consider the principal ones, that are the ones such that $i_1=j_1 , \ldots , i_k=j_k$: we denote them by
\[ A_{[i_1,\ldots , i_k]}=\det {\footnotesize \begin{bmatrix}
	a_{i_1,i_1} & \ldots & a_{i_1,i_k}\\
	\vdots & \ddots & \vdots \\
	a_{i_k, i_1} & \ldots & a_{i_k,i_k}
	\end{bmatrix}} \ .\]
Clearly, the principal minors appear in the coordinates of the Lagrangian Grassmannian: in a very na\"{\i}f way, we can write $\{A_{[i_1,\ldots , i_k]}\}\subset \{A_{\{i_1,\ldots, i_k\}\{j_1,\ldots , j_k\}}\}$. Thus it makes sense to consider the {\em rational} projection of the Langrangian Grassmannian $\LG(n,2n)$ onto the coordinates represented by principal minors: since the number of principal minors of a $n\times n$ matrix is $\sum_{q=0}^n\binom{n}{q}=2^n$, we have that such rational projection has values in $\mathbb P_2^{2^n-1}$. For instance, in the open subset $LU_{\{1,\ldots, n\}}\subset \LG(n,2n)$ we have

\[ \begin{matrix}
LU_{\{1,\ldots, n\}}\subset \mathbb P_2^{\binom{2n}{n}-1} & \stackrel{\pi_{|_{LU_{\{1,\ldots, n\}}}}}{\dashrightarrow} & \mathbb P_2^{2^n-1}\\
\big[1 : a_{ji} : A_{\{i,j\}\{s,t\}}: \ldots :  \det A \big] & \mapsto & \big[   1: a_{ii} : A_{[i,j]} : \ldots : \det A\big]
\end{matrix} \ .\]

\begin{Def}\cite{oeding2009}
	The image of $\LG(n,2n)$ via the rational projection $\pi$ is the {\bf variety of principal minors} of $n\times n$ {\em symmetric} matrices. We denote it by
	\[ \cal Z_n = \pi\big(\LG(n,2n)\big) \subset \mathbb P_2^{2^n-1} \ .\]
	Moreover, for any standard open subset $LU_{I}\subset \LG(n,2n)$ we denote its image in $\cal Z_n$ by
	\[ \cal ZU_I:=\pi(LU_I) \ .\]
\end{Def}

\noindent Over a generic field $\mathbb K$, the projection $\pi$ is just surjective, but over $\mathbb F_2$ it is injective too.

\begin{prop}[\cite{holweck2014}]\label{LGisoZ}
	The projection $\pi : \LG_{\mathbb F_2}(n,2n)\twoheadrightarrow \cal Z_n \subset \mathbb P_2^{2^n-1}$ is a bijection.
\end{prop}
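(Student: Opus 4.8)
The plan is to show that $\pi$ is injective on $\LG_{\mathbb F_2}(n,2n)$, since surjectivity is immediate from the definition $\cal Z_n = \pi(\LG(n,2n))$ and the fact that the Plücker coordinates of a point of the Lagrangian Grassmannian satisfy $A_{\{i_1\ldots i_k\}\{j_1\ldots j_k\}} = A_{\{\sigma(i_1)\ldots\}\{\sigma(j_1)\ldots\}}$-type relations forcing the symmetric structure; the content is that over $\mathbb F_2$ \emph{all} minors of a symmetric matrix are determined by its principal minors. Concretely, a point of $\LG(n,2n)$ in the affine chart is determined by a symmetric matrix $A=(a_{ij})\in\Sym^2(\mathbb F_2^n)$ via \eqref{symmmatrices}, and I must show that the principal minors $\{A_{[i_1,\ldots,i_k]}\}$ determine all minors $\{A_{\{i_1,\ldots,i_k\}\{j_1,\ldots,j_k\}}\}$ — and hence determine the Plücker point — together with the edge case of points not lying in this chart, which I will handle by the $\Sp_{2n}$-homogeneity of $\LG(n,2n)$ (every point can be moved into the standard chart) or directly by a symmetry argument on which $I_n$-block is inverted.

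First I would reduce to the key algebraic claim: if $A\in\Sym^2(\mathbb F_2^n)$ then every off-diagonal minor of $A$ is a polynomial (in fact a sum of products) of principal minors of submatrices of $A$. The clean way to see this is the classical identity expressing an arbitrary minor of a symmetric matrix in terms of principal minors. For a $2\times 2$ off-diagonal block, $\det\begin{bmatrix} a_{ik} & a_{il}\\ a_{jk} & a_{jl}\end{bmatrix}$; using symmetry $a_{jk}=a_{kj}$ etc., one relates $a_{ij}a_{kl}$ to products of diagonal entries via the identity $a_{ij}^2 = a_{ii}a_{jj} - A_{[i,j]}$, and over $\mathbb F_2$ squaring is the identity so $a_{ij} = a_{ii}a_{jj} + A_{[i,j]}$ — this already recovers \emph{every} off-diagonal entry from principal $1\times1$ and $2\times2$ minors. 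For larger off-diagonal minors I would proceed by induction, expanding along a row/column and using that the relevant cofactors involve strictly smaller minors, again symmetric ones up to transposition, together with the $\mathbb F_2$-identity that kills all "sign" ambiguity. This is exactly the phenomenon that makes $\pi$ injective over $\mathbb F_2$ but not over a general field: the squaring map being injective (indeed the identity) on $\mathbb F_2$ is what lets one take "square roots" of products of principal minors unambiguously.

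Next, with the chart claim in hand, injectivity of $\pi$ on the big cell follows: two subspaces with the same image under $\pi$ have symmetric Plücker matrices $A,B$ with the same principal minors, hence $a_{ij}=b_{ij}$ for all $i,j$ by the $1\times1$ and $2\times2$ computation above, so $A=B$ and the full Plücker coordinates agree. To finish I must cover $\LG(n,2n)$ not contained in the standard affine chart $\{[I_n\,|\,A]\}$. Here I would invoke that $\LG_{\mathbb F_2}(n,2n)$ is a single orbit under the symplectic group $\Sp_{2n}(\mathbb F_2)$, and that the subgroup of $\Sp_{2n}(\mathbb F_2)$ of "permutation-type" block operations acts transitively on the standard coordinate charts of the Grassmannian that meet $\LG$; since such an operation permutes principal-minor coordinates of $\cal Z_n$ among themselves, the charted argument transfers. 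Alternatively, and perhaps more in the spirit of the paper, I would note that $\pi$ is a morphism of projective varieties which is bijective onto $\cal Z_n$ by the above, and that $\LG(n,2n)$ is smooth and $\cal Z_n\subset\mathbb P_2^{2^n-1}$, so one concludes $\pi$ is an isomorphism either by a dimension/normality argument or by exhibiting the inverse explicitly on each chart from the reconstruction formulas. The main obstacle is the bookkeeping in the inductive off-diagonal-minor identity and making sure the $\mathbb F_2$ "square-root" step is applied to genuinely principal (not merely symmetric) submatrices; once the entrywise reconstruction $a_{ij}=a_{ii}a_{jj}+A_{[i,j]}$ is observed, the rest is essentially forced, so I would lead with that observation and cite \cite{holweck2014,oeding2009} for the general-field comparison.
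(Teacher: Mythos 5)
Your key step --- recovering every off-diagonal entry via $a_{ij}=a_{ij}^2=A_{[i,j]}+a_{ii}a_{jj}$ over $\mathbb F_2$ --- is exactly the paper's entire proof, so your proposal is correct and takes essentially the same approach. The extra machinery you add (the inductive expression of larger off-diagonal minors, which is unnecessary once the entries of $A$ themselves are recovered, and the chart-covering/homogeneity discussion) only supplies care that the paper's one-line argument on the standard affine chart leaves implicit.
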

\begin{proof}
	Fix an open subset $LU_{I}\subset \LG(n,2n)$: then any subspace in it is parameterized by a certain $A \in \Sym^2(\mathbb F_2^n)$. The off-diagonal entries of $A$ are determined by the $2\times 2$ principal minors: indeed
	\[ A_{[i,j]}=a_{ii}a_{jj}-a_{ij}^2 \stackrel{\mathbb F_2}{\implies} a_{ij}=a_{ij}^2=A_{[i,j]}-a_{ii}a_{jj} \ .\]
\end{proof}

\indent Let us recap all the objects we have worked with so far in a unique diagram:

\begin{equation}\label{finalobjects}
\begin{tikzpicture}[scale=2.5]
\node(1) at (-1.5,1.4){$\mathbb P_2^{2n-1}$};
\node(2) at (0.2,1.4){$\mathbb P_2^{\binom{2n}{n}-1}$};
\node(3) at (2,1.4){$\mathbb P_2^{2^n-1}$};

%\node(Pn) at (-2.5,1.1){$\cal P_n$};
\node(I) at (-1.5,1.1){$\cal I^n$};
\node(LG) at (0.2,1.1){$\LG_{\mathbb F_2}(n,2n)$};
\node(Z) at (2,1.1){$\cal Z_n$};

\node(I) at (-1.5,0.8){$U_{\{1,\ldots, n\}}\cap \cal I^n$};
\node(LG) at (0.2,0.8){$LU_{\{1,\ldots, n\}}$};
\node(Z) at (2,0.8){$\cal ZU_{\{1,\ldots, n\}}$};

%\node(M) at (-2.5,0.5){$M_1, \ldots , M_n$};
\node(H) at (-1.5,0.5){$H_{P_1,\ldots , P_n}$};
\node(A) at (0.2,0.5){$ \big[1: a_{ij} : A_{\{i,j\}\{s,t\}} : \ldots \big] $};
\node(D) at (2,0.5){$ \big[ 1: a_{ii} : A_{[i,j]}: \ldots  \big]$};

\node(Hpl) at (-1.5,-0.2){$\bigg\langle {\tiny\begin{bmatrix} e_1 \\ -- \\ a_{11} \\ \vdots \\ a_{n1} \end{bmatrix}}, \ldots , {\tiny \begin{bmatrix} e_n \\ -- \\ a_{1n} \\ \vdots \\ a_{nn}\end{bmatrix}} \bigg\rangle$};
\node(wedge) at (0.2,-0.2){${\tiny\begin{bmatrix} e_1 \\ -- \\ a_{11} \\ \vdots \\ a_{n1} \end{bmatrix}}\wedge \ldots \wedge {\tiny \begin{bmatrix} e_n \\ -- \\ a_{1n} \\ \vdots \\ a_{nn}\end{bmatrix}}$};

\path[font=\scriptsize, >= angle 90]
%(Pn) edge [->,dashed] node [] {} (I)
(I) edge [->] node [above] {$\Pl$} (LG)
(LG) edge [->] node [above] {$\pi \simeq$} (Z)

%(M) edge [->,dashed] node [] {} (H)
(H) edge [-,dashed] node [left] {$=$} (Hpl)
(Hpl) edge [|->] node [above] {$\Pl$} (wedge)
(wedge) edge [->,dashed] node [] {} (A)
(A) edge [|->] node [above] {$\pi$} (D)
;
\end{tikzpicture}
\end{equation}

\section{Orbits in $\cal Z_n$ induced by the action $\cal C_n^{\text{loc}} \curvearrowright \cal P_n$}\label{orbitZn}

\indent \indent In this section we discuss the correspondence between the action of the group $\cal C_n^{\text{loc}}$ on $\mathcal{W}(2n-1,2)$ and the action of the group  $\text{SL}(2,\mathbb{F}_2)^{\times n}$ on $\cal Z_n$.

\subsection{$\Sp^{\text{loc}}_{2n}(\mathbb F_2)$ acting on $\cal W(2n-1,2)$ and $\cal I^n$}

\indent \indent We recall the conjugacy action \eqref{conjugacy} of the local Clifford group on the Pauli group
\[
\begin{matrix}
\cal C_n^{\text{loc}}\times \cal P_n & \longrightarrow & \cal P_n \\
(U_1\otimes \ldots \otimes U_n, A_1\otimes \ldots \otimes A_n) & \mapsto & (U_1A_1U_1^\dagger)\otimes \ldots \otimes (U_nA_nU_n^\dagger)
\end{matrix}
\]
and its induced action on the maximal abelian subgroups
\begin{equation}\label{C_n^loc on S(P_n)}
\begin{matrix}
\cal C_{n}^{\text{loc}} \ \times \ \mathscr S(\cal P_n) & \longrightarrow & \mathscr S(\cal P_n) \\
(\ U \ , \ \langle M_1,\ldots, M_n\rangle \ ) & \mapsto & \langle UM_1U^\dagger, \ldots , UM_nU^\dagger \rangle
\end{matrix} \ .
\end{equation}
\noindent We are interested in understanding how this action translates onto the symplectic polar space $\cal W(2n-1,2)$, in particular onto $\cal I^n$, via the correspondence \eqref{Pauli-I^n}.

\begin{obs}
	We have to underline that the translated action on $\cal W(2n-1,2)$ will depend on the choice of coordinates we take in $\mathbb P_2^{2n-1}$. From now on we will work with the choice of coordinates \eqref{articlecoordinates} together with the symplectic form $J$, but all the results also hold with respect to the choice of coordinates in \eqref{lessoncoordinates} and the symplectic form over $\mathbb F_2$
	\[J'={\tiny\begin{bmatrix}
		0 & 1 \\
		1 & 0 \\
		  &   & \ddots \\
		  &   &        & 0 & 1 \\
		  &   &        & 1 & 0
		\end{bmatrix}} \ .\]
	\end{obs}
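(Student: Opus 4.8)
The plan is to turn the two informal assertions of the remark into a single change-of-variables statement and verify it. Concretely, one must (a) exhibit the explicit linear isomorphism of $\mathbb F_2^{2n}$ relating coordinates \eqref{lessoncoordinates} to coordinates \eqref{articlecoordinates}, (b) check that this isomorphism carries the form $J'$ into the form $J$, and (c) conclude that the two descriptions of $\cal W(2n-1,2)$---and of the action induced on it---are conjugate, so that every computation may be carried out using coordinates \eqref{articlecoordinates} with the form $J$ and then transported to coordinates \eqref{lessoncoordinates} with the form $J'$ at the end.

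First I would write down the reordering. Passing from $(\mu_1,\nu_1,\ldots,\mu_n,\nu_n)$ to $(\mu_1,\ldots,\mu_n,\nu_1,\ldots,\nu_n)$ is a fixed ``perfect shuffle'' permutation $\sigma\in\frak S_{2n}$, with associated permutation matrix $P\in\GL(2n,\mathbb F_2)$. Since both correspondences in Remark \ref{coordinates} describe the \emph{same} quotient $V_n\cong\mathbb F_2^{2n}$, the two coordinate vectors attached to a fixed Pauli class differ only by $P$.

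Next I would verify the compatibility of the forms. Both $J$ and $J'$ compute the one pairing $\sum_{i}(\mu_i^{(1)}\nu_i^{(2)}-\mu_i^{(2)}\nu_i^{(1)})$ on $V_n$; only the indexing of the $2n$ slots differs. Evaluating this pairing in the two coordinate systems yields the Gram-matrix identity $J'={}^tP\,J\,P$ (for the appropriate orientation of $P$), which I would confirm by direct multiplication: the shuffle $P$ sends the two off-diagonal $n\times n$ identity blocks of $J$ to the $2\times 2$ blocks $\left[\begin{smallmatrix}0&1\\1&0\end{smallmatrix}\right]$ strung along the diagonal of $J'$.

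Finally I would record the consequence for the induced action. The assignment \eqref{Pauli-I^n} is built from the symplectic pairing alone, so $P$ induces a bijection of $\cal I^n$ (and more generally of $\cal W(2n-1,2)$) onto its $J'$-counterpart that intertwines the action computed in the $J$-convention with the one computed in the $J'$-convention, the representing matrices being related by conjugation with $P$. This simultaneously yields the first assertion of the remark---the two families of representing matrices genuinely differ, since $P\ne I$ for $n\ge2$---and the second---results obtained in the convention \eqref{articlecoordinates}, $J$ transport to the convention \eqref{lessoncoordinates}, $J'$ by conjugation with $P$. The calculations are entirely routine; the one point demanding care is orientation, namely tracking whether $P$ or $P^{-1}$ conjugates the form and the action, which is precisely what pins down the dictionary between the two conventions once and for all.
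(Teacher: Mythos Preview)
The statement you are addressing is a \emph{remark}, and the paper offers no proof for it: it is simply an announcement of which coordinate convention will be used and a promise that results can be restated in the other convention. There is therefore nothing in the paper to compare your argument against at this point.

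That said, your proposal is mathematically correct and supplies exactly the justification the paper leaves implicit. The paper returns to this matter only later, in Remark~\ref{Jhat}, where it writes down the block form $U'$ of a local symplectic transformation in the coordinates \eqref{lessoncoordinates} and asserts, without computation, that $U'\cdot[\mu_1:\nu_1:\ldots:\mu_n:\nu_n]=U\cdot[\mu_1:\ldots:\mu_n:\nu_1:\ldots:\nu_n]$. Your perfect-shuffle permutation $P$ and the identity $J'={}^tP\,J\,P$ are precisely what make that assertion true, and your observation that the two representing matrices are related by conjugation with $P$ is what underlies the passage from $U$ to $U'$ there. So your write-up is a genuine (and welcome) expansion of what the paper treats as self-evident, rather than an alternative to any argument it gives.
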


\indent Fix the coordinates \eqref{articlecoordinates} in $\mathbb P_2^{2n-1}$ and the symplectic form $\langle\cdot , \cdot \rangle_J$. The first natural step is to translate the conjugacy action $\cal C_n^{\text{loc}} \curvearrowright \cal P_n$ into linear transformations of $\mathbb P_2^{2n-1}$.

\begin{obs}\label{C_n in Sp}
	By definition, any element of the Clifford group $\cal C_n$ induces an automorphism of $\cal P_n$, hence an automorphism of $V_n=\cal P_n/Z(\cal P_n)$: but automorphisms of $V_n\simeq \mathbb F_2^{2n}$ are linear maps and they preserve commutators, hence also the symplectic form $J$ on $\mathbb F_2^{2n}$ is preserved. It follows that there exists a well-defined homomorphism
	\[ \begin{matrix}
		\cal C_n & \longrightarrow & \Sp(\mathbb F_2^{2n}) & \subset \GL(\mathbb F_2^{2n})\\
		g & \mapsto & \hat g
		\end{matrix}\]
	such that, given $M \in V_n\simeq \mathbb F_2^{2n}$ and $\tilde M \in \cal P_n$ any lifting of $M$, the action of $\hat g$ on $M$ is 
	\[ \hat g \cdot M = \overline{g \tilde M g^{-1}} \in V_n \ .\]
	\noindent The homomorphism $\cal C_n\rightarrow \Sp(\mathbb F_2^{2n})$ is surjective, since the symplectic group is spanned by symplectic transvections \cite[Sec. II.B]{transvection}, but it is not injective: its kernel is exactly the Pauli group $\cal P_n$ \cite{koenig}, thus one has the isomorphism $\cal C_n/\cal P_n \simeq \Sp(\mathbb F_2^{2n})$.
	\end{obs}

\indent The homomorphism $\cal C_n \rightarrow \Sp(\mathbb F_2^{2n})$ in Remark \ref{C_n in Sp} restricts to an homomorphism
	\begin{equation}\label{C_n^loc in Sp}
		\begin{matrix}
			\cal C_n^{\text{loc}} & \longrightarrow & \Sp(\mathbb F_2^{2n})\\
			U & \mapsto & \tilde U
			\end{matrix}
		\end{equation}
	The elements in the image of the above restriction are of the form
	\begin{equation}\label{locsymplform} \tilde U={\tiny \begin{bmatrix}
		a_1 & & & b_1 & & \\
		& \ddots & & & \ddots & \\
		& & a_n & & & b_n\\
		c_1 & & & d_1 & & \\
		& \ddots & & & \ddots & \\
		& & c_n & & & d_n\\ 
		\end{bmatrix}} \in \Sp(\mathbb F_2^{2n}) \ , \end{equation}
	as one can check by looking at the action of $\cal C_n^{\text{loc}}$ on the Pauli elements: by applying a given $U=U_1\otimes \ldots \otimes U_n \in \cal C_n^{\text{loc}}$ to the generators of $\cal P_n$ in \eqref{generators of P_n}, we get
	{\footnotesize \begin{align*}
		U\left( I\otimes \ldots \otimes \underbrace{Z}_{l-th} \otimes \ldots \otimes I \right) U^\dagger & = U_1U_1^\dagger \otimes \ldots \otimes U_lZU_l^\dagger \otimes \ldots \otimes U_nU_n^\dagger = I \otimes \ldots \otimes \underbrace{Z^{a_l}X^{c_l}}_{l-th}\otimes \ldots \otimes I \ , \\
		U\left( I\otimes \ldots \otimes \underbrace{X}_{s-th} \otimes \ldots \otimes I \right) U^\dagger & = U_1U_1^\dagger \otimes \ldots \otimes U_sXU_s^\dagger \otimes \ldots \otimes U_nU_n^\dagger = I \otimes \ldots \otimes \underbrace{Z^{b_s}X^{d_s}}_{s-th}\otimes \ldots \otimes I \ ,
		\end{align*}}
	\noindent hence
	{\small \begin{align}\label{from U to tilde U}
		U\left( I\otimes \ldots \otimes \underbrace{Z^{\mu_k}X^{\nu_k}}_{k-th} \otimes \ldots \otimes I \right) U^\dagger & = U_1U_1^\dagger \otimes \ldots \otimes U_k\left(Z^{\mu_k}X^{\nu_k}\right)U_k^\dagger \otimes \ldots \otimes U_nU_n^\dagger \\
		& = I \otimes \ldots \otimes \underbrace{Z^{a_k\mu_k+b_k\nu_k}X^{c_k\mu_k+d_k\nu_k}}_{k-th}\otimes \ldots \otimes I \ ,
		\end{align}}
	which in coordinates corresponds to 
	\[ \tilde U (0,\ldots, \mu_k ,\ldots, \nu_k, \ldots , 0)= (0,\ldots, \ a_k\mu_k+b_k\nu_k \ ,\ldots, \ c_k\mu_k+d_k\nu_k \ , \ldots , 0) \ . \]
	
\begin{obs}\label{tilde U_i in SL2}
	Equation \eqref{from U to tilde U} makes it clear that, given $U=U_1\otimes \ldots \otimes U_n \in \cal C_n^{\text{loc}}$, the submatrices $\tilde U_i:={\scriptsize \begin{bmatrix} a_i & b_i \\ c_i & d_i\end{bmatrix}}\in \Sp(2,\mathbb F_2)$ defining $\tilde U$ in \eqref{locsymplform} depend on the matrices $U_i \in \U(2,\mathbb C)$, but they are not the same (the former have coefficients in $\mathbb F_2$, the latter in $\mathbb C$). Moreover, since $\tilde U$ is symplectic, it holds $\det \tilde U_i=a_id_i-b_ic_i \neq 0$ for any $i=1:n$.
	\end{obs}

\begin{Def}
	We define the {\bf local symplectic group} $\Sp_{2n}^{\text{loc}}(\mathbb F_2)$ to be the image of the homomorphism \eqref{C_n^loc in Sp}, i.e.
	\[\Sp_{2n}^{\text{loc}}(\mathbb F_2) :=\big\{ S \in \Sp(\mathbb F_2^{2n}) \ \big| \ S \ \text{of the form } \eqref{locsymplform}  \big\} \ .\] 
	\end{Def}

\begin{obs}\label{C_n^loc not iso to Sp^loc}
	The surjective homomorphism $\cal C_n^{\text{loc}}\rightarrow \Sp_{2n}^{\text{loc}}(\mathbb F_2)$ is not injective since its kernel is the Pauli group $\cal P_n$. Indeed, for any two Pauli elements $U,A \in \cal P_n$ it holds $U A U^\dagger =\beta A$ for a suitable phase $\beta \in \{\pm 1, \pm i\}$, thus in the quotient space $V_n\simeq \mathbb F_2^{2n}$ one gets $\overline{UAU^\dagger}=\overline{\beta A}=\overline{A}$: it follows that $U \mapsto I_{2n}\in \Sp(\mathbb F_2^{2n})$. In particular, we get the isomorphism
	\[ \faktor{\cal C_n^{\text{loc}}}{\cal P_n} \simeq \Sp_{2n}^{\text{loc}}(\mathbb F_2) \ . \]
	\end{obs}

\indent From now on, we will denote by $\tilde U \in \Sp_{2n}^{\text{loc}}(\mathbb F_2)$ the symplectic matrix corresponding to $U \in \cal C_n^{loc}$. By \eqref{from U to tilde U} we can explicit the projective coordinates in $\mathbb P_2^{2n-1}$ under the action of $\tilde U \in \Sp_{2n}^{\text{loc}}(\mathbb F_2)$: 
	\[\tilde U \cdot [\mu_1:\ldots : \mu_n :\nu_1: \ldots : \nu_n] =\]
	\begin{equation}\label{Sp on points}
		[a_1\mu_1+b_1\nu_1: \ldots : a_n\mu_n+b_n\nu_n : c_1\mu_1+d_1\nu_1 : \ldots : c_n\mu_n+d_n\nu_n] \ .
 	\end{equation}

\noindent Next we translate the action onto $\cal W(2n-1,2)$: it immediately follows from the relation (even for non-isotropic subspaces)
\[\forall \tilde U \in \Sp_{2n}^{\text{loc}}(\mathbb F_2), \ \forall H_P\in (\mathbb P_2^{2n-1})^\vee, \ \ \  \tilde U\cdot H_P = H_{\tilde UP}\in (\mathbb P_2^{2n-1})^\vee \ .\]
Moreover, the local symplectic transformations preserve the dimensions of the (fully) isotropic subspaces in $\mathbb P_2^{2n-1}$, hence the subspace $\cal I^n$ is $\Sp_{2n}^{\text{loc}}(\mathbb F_2)$-invariant. Thus the action \eqref{conjugacy} $\cal C_n^{\text{loc}}\curvearrowright \cal P_n$ induces the action
\begin{equation}\label{Sp on I^n}
\begin{matrix}
\Sp_{2n}^{\text{loc}}(\mathbb F_2) \ \times \ \cal I^n & \longrightarrow & \cal I^n \\
(\ \tilde U \ , \ H_{P_1,\ldots , P_n} \ ) & \mapsto & H_{\tilde UP_1 , \ldots , \tilde UP_n}
\end{matrix} \ .
\end{equation}

\indent We can update the correspondence \eqref{Pauli-I^n} to the following one:
\begin{equation}\label{Cliff-Sp}
	\begin{tikzpicture}[scale=2.5]
	
	\node(M) at (-1.5,0.5){$\underbrace{M_1, \ldots , M_n}_{\text{indep. \& commut.}}$};
	\node(P) at (0,0.5){$\underbrace{P_1,\ldots , P_n}_{\text{indep. in }\mathbb F_2^{2n}}$};
	\node(H) at (1.5,0.5){$\underbrace{H_{P_1,\ldots , P_n}}_{\dim_{\mathbb P}=n-1}$};
	
	\node(UM) at (-1.5,-0.2){$UM_1U^\dagger, \ldots , UM_nU^\dagger$};
	\node(UP) at (0,-0.2){$\tilde UP_1, \ldots , \tilde UP_n$};
	\node(UH) at (1.5,-0.2){$H_{\tilde UP_1, \ldots , \tilde UP_n}$};	
	
	\path[font=\scriptsize,->, >= angle 90]
	
	(M) edge [dashed] node [above] {} (P)
	(P) edge [|->] node [above] {} (H)
	
	(UM) edge [dashed] node [below] {} (UP)
	(UP) edge [|->] node [above] {} (UH)
	
	(M) edge node [left] {$U\in \cal C_n^{\text{loc}}\curvearrowright$} (UM)
	(P) edge node [left] {$\tilde U\in \Sp_{2n}^{\text{loc}}(\mathbb F_2)\curvearrowright$} (UP)
	(H) edge node [right] {$\tilde U\in \Sp_{2n}^{\text{loc}}(\mathbb F_2)\curvearrowright$} (UH);
	\end{tikzpicture}
\end{equation}

\begin{obs}\label{SpGL}
	By definition and by Remark \ref{tilde U_i in SL2}, the local symplectic group $\Sp_{2n}^{\text{loc}}(\mathbb F_2)$ is isomorphic to $\SL(2,\mathbb F_2)^{\times n} \simeq \frak S_3^{\times n}$: 
	\[\begin{matrix}
	\cal C_n^{\text{loc}} & \stackrel{\simeq}{\longrightarrow} & \Sp_{2n}^{\text{loc}}(\mathbb F_2) & \stackrel{\simeq}{\longrightarrow} & \SL(2,\mathbb F_2)^{\times n} & \simeq \frak S_3^{\times n}\\
	U_1 \otimes \ldots \otimes U_n & \mapsto & \tilde U \ \text{as in \eqref{locsymplform}} & \mapsto & (\tilde U_1, \ldots , \tilde U_n)
	\end{matrix} \ .\]
\end{obs}

\begin{obs}\label{Jhat}
	The same arguments and results of this section hold if we fix the coordinates \eqref{lessoncoordinates} in $\mathbb F_2^{2n}$ and the symplectic form $\langle \cdot , \cdot \rangle_{J'}$. In this case a local Clifford element $U=U_1\otimes \ldots \otimes U_n\in \cal C_n^{\text{loc}}$ corresponds to a local symplectic transformation of the form
	\[ \tilde U'= {\tiny \begin{bmatrix}
			a_1 & b_1 \\
			& & a_2 & b_2 \\
			& & & & \ddots & \ddots \\
			& & & & & & a_n & b_n \\
			c_1 & d_1 \\
			& & c_2 & d_2 \\
			& & & & \ddots & \ddots \\
			& & & & & & c_n & d_n	
			\end{bmatrix}}  \in \Sp_{2n}^{\text{loc}}(\mathbb F_2) \ .\]
	However, we will keep working only in the coordinates setting  (\eqref{articlecoordinates}, $J$).
	\end{obs}

\subsection{$\Sp_{2n}^{\text{loc}}(\mathbb F_2)$ acting on $\LG_{\mathbb F_2}(n,2n)$}

\indent \indent Let us keep in mind the diagrams \eqref{Cliff-Sp} and \eqref{finalobjects}. We look for the action on $\LG(n,2n)$ induced by the action $\Sp_{2n}^{\text{loc}}(\mathbb F_2)\curvearrowright \cal I^n$ via the Pl\"{u}cker embedding: for simplicity, we describe the action on the standard open subset $LU_{\{1,\ldots,n\}}$ but all arguments apply to any standard open subset $LU_I$. \\
\hfill\break
\noindent {\bf Setting:} Fix the coordinates \eqref{articlecoordinates} in $\mathbb P_2^{2n-1}$ and the symplectic form $\langle\cdot , \cdot \rangle_{J}$.\\
\hfill\break
\indent By Remark \ref{othercoordGr}, we consider the parametrization of $LU_{\{1,\ldots, n\}}\subset \LG(n,2n)$ induced by all maximal minors of $2n\times n$ matrices of the form ${\scriptsize \begin{bmatrix}I_n\\ S\end{bmatrix}}$ as $S$ varies in $\Sym^2(\mathbb F_2^{n})$.\\
Consider a subspace $H_{P_1,\ldots,P_n}\in \cal I^n\cap U_{\{1,\ldots,n\}}$ (with Pl\"{u}cker basis as in \eqref{pluckercoordinates}) and a transformation $\tilde U \in \Sp_{2n}^{\text{loc}}(\mathbb F_2)$ such that
\[
H_{P_1,\ldots , P_n}=\bigg\langle {\tiny\begin{bmatrix} e_1 \\ -- \\ s_{11} \\ \vdots \\ s_{n1} \end{bmatrix}}, \ldots , {\tiny \begin{bmatrix} e_n \\ -- \\ s_{1n} \\ \vdots \\ s_{nn}\end{bmatrix}} \bigg\rangle \ \ \ , \ \ \ 
\tilde U= {\tiny \begin{bmatrix}
	a_1 & & & b_1 & & \\
	& \ddots & & & \ddots & \\
	& & a_n & & & b_n\\
	c_1 & & & d_1 & & \\
	& \ddots & & & \ddots & \\
	& & c_n & & & d_n\\ 
	\end{bmatrix}}=\begin{bmatrix} A& B \\ C& D \end{bmatrix}\]
where $S=(s_{ji})\in \Sym^2(\mathbb F_2^{n})$ and $a_id_i-b_ic_i\neq0$ for all $i=1:n$. By applying $\tilde U$ to the Pl\"{u}cker basis we obtain
\begin{equation}\label{Sp-Pl} 
\tilde U {\tiny\begin{bmatrix} e_j \\ -- \\ s_{1j} \\ \vdots \\ s_{nj} \end{bmatrix}} = \tilde U \bigg(e_j + \sum_{k=1}^n s_{kj}e_{n+k}\bigg) =  \sum_{k=1}^n (a_k\delta_{kj}+b_ks_{kj})e_k + (c_k\delta_{kj}+d_ks_{kj})e_{n+k} \end{equation}
where $\delta_{kj}$ is the Kronecker symbol: this is equivalent to the matrix product
\begin{align}\label{Sp-Pl-matrix}
& \tilde U \begin{bmatrix}
I_n\\
S
\end{bmatrix}= \begin{bmatrix} A+BS \\
C+DS\end{bmatrix}={\tiny \begin{bmatrix}
a_1+ b_1s_{11} & b_1s_{12} & \ldots & b_1s_{1n}\\
b_2s_{12} & a_2+b_2s_{22} & \ldots & b_2s_{2n}\\
    \vdots      &               & \ddots & \vdots \\
b_ns_{1n} & \ldots & & a_n+b_ns_{nn}\\
--- & --- & --- & ---\\
c_1+ d_1s_{11} & d_1s_{12} & \ldots & d_1s_{1n}\\
d_2s_{12} & c_2+d_2s_{22} & \ldots & d_2s_{2n}\\
\vdots      &               & \ddots & \vdots \\
d_ns_{1n} & \ldots & & c_n+d_ns_{nn}
\end{bmatrix}} \ .\end{align}

\noindent Thus $\tilde U(H_{P_1,\ldots , P_n})= \bigg\langle \tilde U{\tiny\begin{bmatrix} e_1 \\ -- \\ s_{11} \\ \vdots \\ s_{n1} \end{bmatrix}}, \ldots , \tilde U{\tiny \begin{bmatrix} e_n \\ -- \\ s_{1n} \\ \vdots \\ s_{nn}\end{bmatrix}} \bigg\rangle$ and, by applying the Pl\"{u}cker embedding,

\begin{equation}\label{Sp-Pl-wedge} 
\Pl\bigg( \tilde U(H_{P_1,\ldots , P_n})\bigg)= \Bigg[  \tilde U{\tiny\begin{bmatrix} e_1 \\ -- \\ s_{11} \\ \vdots \\ s_{n1} \end{bmatrix}} \wedge \ldots \wedge  \tilde U{\tiny \begin{bmatrix} e_n \\ -- \\ s_{1n} \\ \vdots \\ s_{nn}\end{bmatrix}}\Bigg] \ .\end{equation}

\noindent We conclude that the action \eqref{Sp on I^n} $\Sp_{2n}^{\text{loc}}\curvearrowright \cal I^n$ induces the action

\begin{equation}\label{Sp on LG}
\begin{matrix}
\Sp_{2n}^{\text{loc}}(\mathbb F_2) \ \times \ \LG(n,2n) & \longrightarrow & \LG(n,2n) \\
\big( \ \tilde U \ , \ [v_1 \wedge \ldots \wedge v_n] \ \big) & \mapsto & \big[\tilde Uv_1 \wedge \ldots \wedge \tilde Uv_n\big]
\end{matrix} \ .
\end{equation}

\begin{obs}\label{minors of A'}
	We must pay attention to the fact that the coordinates of \eqref{Sp-Pl-wedge} are given by all maximal (i.e. $n\times n$) minors of the Pl\"{u}cker matrix of \eqref{Sp-Pl-matrix} with respect to a suitable open subset $LU_I$. \\
	In general, the action by $\tilde U \in \Sp^{loc}_{2n}(\mathbb F_2)$ does not preserve the standard open subsets: given $H_{P_1,\ldots, P_n}\in LU_{\{1,\ldots,n\}}$, the image $\tilde U\left(H_{P_1,\ldots, P_n}\right)$ may lie in a different standard open subsets $LU_I$, and thus one has to consider the Pl\"{u}cker matrix of the latter subspace with respect to $LU_I$. For instance, in the notations of \eqref{Sp-Pl-matrix}, the subspace $\tilde U\left(H_{P_1,\ldots, P_n}\right)$ lies in $LU_{\{1,\ldots,n\}}$ if and only if $\det(A+BS)\neq 0$.
	\end{obs}

%\noindent which, in Pl\"{u}cker coordinates and in the same notations of Remark \ref{minors of A'}, corresponds to the action
%\begin{equation}\label{Sp on LG Plucker}
%\begin{matrix}
%\Sp_{2n}^{\text{loc}}(\mathbb F_2) \ \times \ \LG(n,2n) & \longrightarrow & \LG(n,2n) \\
%\big( \ U \ , \ \big[1: s_{ij}: S_{\{i,j\},\{s,t\}} : \ldots : \det S  \big] \ \big) & \mapsto & \big[1: s'_{ij} : S'_{\{i,j\},\{s,t\}}: \ldots : \det S'  \big]
%\end{matrix} \ .
%\end{equation}

\subsection{$\Sp_{2n}^{\text{loc}}(\mathbb F_2)$ acting on $\cal Z_n$ as $\SL(2,\mathbb F_2)^{\times n}$}

\indent \indent The action \eqref{Sp on LG} translates into an action of $\Sp_{2n}^{\text{loc}}(\mathbb F_2)$ on $\cal Z_n$ 

\begin{equation}\label{Sp on Zn}
\Sp_{2n}^{\text{loc}}(\mathbb F_2) \ \times \ \cal Z_n  \longrightarrow  \cal Z_n
\end{equation}

\noindent via the projection $\pi : \LG(n,2n) \longrightarrow \cal Z_n$. By Remark \ref{SpGL} this action is equivalent, up to isomorphism, to an already known and natural action of $\SL(2,\mathbb F_2)^{\times n}$ on $\cal Z_n$:
%morally, the action of a given $U \in \Sp_{2n}^{\text{loc}}(\mathbb F_2)$ corresponds to the action of the submatrices $U_i={\scriptsize \begin{bmatrix} a_i & b_i \\c_i & d_i\end{bmatrix}}\in \SL(2,\mathbb F_2)$ on (some of) the $2\times 2$ minors of ${\footnotesize \begin{bmatrix} I_n \\ S\end{bmatrix}}$. L
%let us formalize such action.\\
the space $\mathbb P_2^{2^n-1}\simeq \mathbb P \big( \mathbb F_2^{2}\otimes \ldots \otimes \mathbb F_2^2 \big)$ is homogeneous under the natural action of $\SL(2,\mathbb F_2)^{\times n}$ and the following result shows that this action restricts to an action of $\SL(2,\mathbb F_2)^{\times n}$ on $\cal Z_n \subset \mathbb P_2^{2^n-1}$.

\begin{prop}[\cite{oeding2009}, Theorem III.14]\label{Zn invariant}
	$\cal Z_n$ is invariant under the natural action of $\SL(2,\mathbb F_2)^{\times n}$. Moreover, the action is represented by 
	\begin{equation}\label{SL representation}
	\begin{matrix} 
	\SL(2,\mathbb F_2)^{\times n} & \stackrel{\rho}{\longrightarrow} & \Sp\big(\mathbb F_2^{2n}\big) & \subset \GL\big(\mathbb F_2^{2n} \big)\\
	\bigg( {\tiny \begin{bmatrix} a_1 & b_1 \\ c_1 & d_1 \end{bmatrix}}, \ldots , {\tiny \begin{bmatrix} a_n & b_n \\ c_n & d_n \end{bmatrix}} \bigg) & \mapsto & {\tiny \begin{bmatrix}
		a_1 & & & b_1 & & \\
		& \ddots & & & \ddots & \\
		& & a_n & & & b_n\\
		c_1 & & & d_1 & & \\
		& \ddots & & & \ddots & \\
		& & c_n & & & d_n\\ 
		\end{bmatrix}} 
	\end{matrix}\end{equation}
	giving exactly the isomoprhism $\SL(2,\mathbb F_2)^{\times n} \simeq \Sp_{2n}^{\text{loc}}(\mathbb F_2)$.
\end{prop}

\begin{obs}
	Actually, in his PhD thesis \cite{oeding2009} L.Oeding proved the above result over $\mathbb C$, but it is straightforward that then it holds over $\mathbb F_2$ too.
	\end{obs}

\paragraph{Resume.}  We conclude this section by resuming how orbits in the Pauli group $\cal P_n$ under the action of the local Clifford group $\cal C_n^{\text{loc}}$ induce orbits in the variety of symmetric principal minors $\cal Z_n$ under the action of $\SL(2,\mathbb F_2)^{\times n}$.\\ 
\indent The local Clifford group $\cal C_n^{\text{loc}}$ acts on the Pauli group $\cal P_n$ by \eqref{conjugacy}
\[
\bigg(U_1\otimes \ldots \otimes U_n \bigg)\cdot  \bigg( \underbrace{Z^{\mu_1}X^{\nu_1}}_{A_1}\otimes \ldots \otimes \underbrace{Z^{\mu_n}X^{\nu_n}}_{A_n}\bigg) \ = \ 
U_1A_1U_1^\dagger \otimes \ldots \otimes U_nA_nU_n^\dagger
 \ .\]
\noindent By fixing the setting {\em \textquotedblleft coordinates - symplectic form\textquotedblright} (\eqref{articlecoordinates},J), the above action induces the action \eqref{Sp on I^n} of the local symplectic group $\Sp_{2n}^{\text{loc}}(\mathbb F_2)$ on the set $\cal I^n$ of $(n-1)$-dimensional (fully) isotropic subspaces of $\mathbb P_2^{2n-1}$ defined as follows: given $M_1,\ldots , M_n\in \cal P_n$ mutually commuting and independent such that
$M_i=Z^{\mu_1^{(i)}}X^{\nu_1^{(i)}}\otimes \ldots \otimes Z^{\mu_n^{(i)}}X^{\nu_n^{(i)}}$, and given their corresponding points
$P_i=[\mu_1^{(i)}: \ldots : \mu_n^{(i)}: \nu_1^{(i)}: \ldots : \nu_n^{(i)}] \in \mathbb P_2^{2n-1}$, it holds
\[
U\cdot H_{P_1,\ldots , P_n} \ = \ H_{UP_1, \ldots , UP_n} \]
for any $U\in \Sp_{2n}^{\text{loc}}(\mathbb F_2)$ as in \eqref{locsymplform}.
%\noindent Given $\Big({\tiny\begin{bmatrix} e_1 \\ -- \\ s_{11} \\ \vdots \\ s_{n1} \end{bmatrix}}, \ldots , {\tiny \begin{bmatrix} e_n \\ -- \\ s_{1n} \\ \vdots \\ s_{nn}\end{bmatrix}}\Big)$ the Pl\"{u}cker basis for the subspace $H_{P_1,\ldots , P_n}\in \cal I^n$, 
The action of $\Sp_{2n}^{\text{loc}}(\mathbb F_2)$ on $\cal I_n$ induces, via the Pl\"{u}cker embedding, the action  \eqref{Sp on LG} of the local symplectic group $\Sp_{2n}^{\text{loc}}(\mathbb F_2)$ on the Lagrangian Grassmannian $\LG(n,2n)$
\[U \cdot [v_1 \wedge \ldots \wedge v_n]= \big[Uv_1 \wedge \ldots \wedge Uv_n\big] \ . \]
%\[
%U \ \cdot \ \big[1: s_{ij}: S_{\{i,j\},\{s,t\}} : \ldots : \det S  \big]  \ = \ \big[1: s'_{ij} : S'_{\{i,j\},\{s,t\}}: \ldots : \det S'  \big]
%\]
%\noindent where $S'$ is the Pl\"{u}cker matrix of $U{\scriptsize \begin{bmatrix}I_n\\ S\end{bmatrix}}$. \\
Finally, the latter action translates into the action \eqref{Sp on Zn} of $\Sp_{2n}^{\text{loc}}(\mathbb F_2)$ on the variety of symmetric principal minors $\cal Z_n$, which is equivalent to the natural action of $\SL(2,\mathbb F_2)^{\times n}$ on $\cal Z_n$ via the representation \eqref{SL representation}.

%\newpage

\section{Correspondence between $\cal C_n^{\text{loc}}\rtimes \frak S_n \curvearrowright \scr S(\cal P_n)$ and $\SL(2,\mathbb F_2)^{\times n}\rtimes \frak S_n \curvearrowright \cal Z_n$}\label{orbitS}

\indent \indent In this section we extend the previous group actions to the semidirect product with the symmetric group $\mathfrak S_n$ in order to prove the first part of Theorem \ref{theo:main}.

\subsection{The actions $\cal C_n^{\text{loc}}\rtimes_\phi \mathfrak{S}_n \curvearrowright \cal P_n$ and $\Sp_{2n}^{\text{loc}}(\mathbb F_2)\rtimes_\varphi \mathfrak S_n \curvearrowright \mathbb F_2^{2n}$}

\indent \indent By definition of the $n$-fold Pauli group
{\small \begin{align}\label{elements in P_n}
	\cal P_n & =\left\{ A_1 \otimes \ldots \otimes A_n \ | \ A_i \in \cal P_1\right\}\\
	& =\big\{\pm Z^{\mu_1}X^{\nu_1}\otimes \ldots \otimes Z^{\mu_n}X^{\nu_n}, \pm i Z^{\mu_1}X^{\nu_1}\otimes \ldots \otimes Z^{\mu_n}X^{\nu_n} \ \big| \ \mu_i,\nu_i \in \{0,1\}\big\} \ ,
\end{align}}
\noindent there is a natural action of the symmetric group $\mathfrak{S}_n$ on $\cal P_n$ permuting the tensor entries:
\[ \forall \sigma \in \mathfrak S_n, \ \ \ \sigma \cdot (A_1 \otimes \ldots \otimes A_n) = A_{\sigma(1)}\otimes \ldots \otimes A_{\sigma(n)} \ .\]
\noindent Each permutation $\sigma \in \mathfrak S_n$ induces a transformation $\tilde \sigma\in U(2^n,\mathbb C)$ permuting the basis vectors, so that for any $\sigma \in \mathfrak S_n$ and for any $A_1\otimes \ldots \otimes A_n\in \cal P_n$ one gets
\begin{equation}\label{S_n acting on P_n} 
	\sigma \cdot (A_1\otimes \ldots \otimes A_n) = A_{\sigma(1)}\otimes \ldots \otimes A_{\sigma(n)}=\tilde \sigma (A_1\otimes \ldots \otimes A_n)\tilde \sigma^\dagger \ .
\end{equation}
Notice that $\tilde \sigma^\dagger =\tilde\sigma^{-1}$. In particular, the above action preserves $\cal P_n$, thus for any $\sigma \in \frak S_n$ it holds $\tilde \sigma \in \cal C_n=N_{U(2^n,\mathbb C)}(\cal P_n)$: it follows that there is an injective homomorphism 
\[ \begin{matrix}
\frak S_n & \hookrightarrow & \cal C_n & \subset U(2^n,\mathbb C)\\
\sigma & \mapsto & \tilde \sigma
\end{matrix}\]
which allows to identify $\frak S_n$ as a subgroup of the Clifford group $\cal C_n$. Moreover, the symmetric group $\frak S_n$ naturally acts on the local Clifford group $\cal C_n^{\text{loc}}$ by conjugacy 
\begin{equation}\label{S_n acting on C_n^loc} 
	\begin{matrix}
		\phi: & \frak S_n & \longrightarrow & \Aut\left(\cal C_n^{loc}\right)\\
		& \sigma & \mapsto & \big( \  \phi_\sigma: U \mapsto \ ^\sigma\!U=\tilde{\sigma} U \tilde{\sigma}^{-1} \ \big)
	\end{matrix} \ ,
\end{equation}
where $^\sigma\!U = \tilde\sigma U \tilde\sigma^{-1}= U_{\sigma(1)}\otimes \ldots \otimes U_{\sigma(n)}$ for any $U=U_1\otimes \ldots \otimes U_n \in \cal C_n^{\text{loc}}$.

\begin{obs}
	The action \eqref{S_n acting on C_n^loc} is actually well defined. Indeed, given $\sigma \in \frak S_n$, $U=U_1\otimes \ldots \otimes U_n \in \cal C_n^{\text{loc}}$ and $A_1\otimes \ldots \otimes A_n\in \cal P_n$, we have
	\begin{align*}
		^\sigma\!U(A_1 \otimes \ldots \otimes A_n)^\sigma\!U^\dagger & \ = (\tilde \sigma U \tilde \sigma^{-1})(A_1 \otimes \ldots \otimes A_n)(\tilde\sigma U \tilde\sigma^{-1})^\dagger\\
		& \ = (\tilde \sigma U \tilde \sigma^{-1})(A_1 \otimes \ldots \otimes A_n)(\tilde\sigma U^\dagger \tilde\sigma^{-1})\\
		& \stackrel{\eqref{S_n acting on P_n}}{=} \tilde \sigma U \left(A_{\sigma^{-1}(1)}\otimes \ldots \otimes A_{\sigma^{-1}(n)}\right)U^\dagger \tilde \sigma^{-1}\\
		& \ = \tilde \sigma \left(U_1A_{\sigma^{-1}(1)}U_1^\dagger\otimes \ldots \otimes U_nA_{\sigma^{-1}(n)}U_n^\dagger \right)\tilde \sigma^{-1}\\
		& \stackrel{\eqref{S_n acting on C_n^loc}}{=} U_{\sigma(1)}A_1U_{\sigma(1)}^\dagger \otimes \ldots \otimes U_{\sigma(n)}A_n U_{\sigma(n)}^\dagger \ \in \cal P_n \ .
	\end{align*}
\end{obs}

It follows that the subgroups $\cal C_n^{\text{loc}}$ and $\frak S_n$ (the second up to isomorphism) generate a subgroup in $\cal C_n$ which is isomorphic to the semidirect product $\cal C_n^{\text{loc}}\rtimes_\phi \frak S_n$
\[ \begin{matrix}
\cal C_{n}^{\text{loc}}\rtimes_\phi \frak S_n & \stackrel{\simeq}{\longrightarrow} & \left\langle \cal C_{n}^{\text{loc}} \ , \ \frak S_n\right\rangle & \subset \cal C_n\\
(U,\sigma ) & \mapsto & U \tilde\sigma
\end{matrix} \ ,\]
acting on $\cal P_n$ as follows
\begin{equation}\label{semidir acting on P_n}
	\begin{matrix}
		\big( \cal C_n^{\text{loc}}\rtimes_\phi \frak S_n\big) \times \cal P_n & \longrightarrow & \cal P_n\\
		\big( (U,\sigma), A_1\otimes \ldots \otimes A_n \big) & \mapsto & U \cdot \left( \sigma \cdot (A_1\otimes \ldots \otimes A_n) \right)
	\end{matrix}
\end{equation}
where, if $U=U_1 \otimes \ldots \otimes U_n$,
\begin{align*}
	U \cdot \left( \sigma \cdot (A_1\otimes \ldots \otimes A_n) \right) & = U\left(A_{\sigma(1)}\otimes \ldots \otimes A_{\sigma(n)}\right)U^\dagger \\
	& = U_{1}A_{\sigma(1)}U_{1}^\dagger \otimes \ldots \otimes U_{n}A_{\sigma(n)}U_{n}^\dagger \ .
	\end{align*}

\begin{obs}
	We recall that the operation in the semidirect product is
	\begin{equation}\label{operation in semidirect}
		(U, \sigma) \cdot_\phi (U',\tau)= \big(U \cdot \phi_\sigma(U'), \sigma\cdot \tau\big)
	\end{equation}
	and the following commutation rule holds
	\[\big(I,\sigma\big)\cdot_\phi (\phi^{-1}_{\sigma}(U),id) \ \stackrel{(a)}{=} \ (U,\sigma) \ \stackrel{(b)}{=} \ (U,id)\cdot_\phi (I,\sigma) \ .\]
	Since $\phi_\sigma^{-1}=\phi_{\sigma^{-1}}$, by equality $(a)$ it follows (in agreement with \eqref{semidir acting on P_n})
	\begin{align*} 
		(U,\sigma)\cdot (A_1\otimes \ldots \otimes A_n) & = \bigg(\big(I,\sigma\big)\cdot_\phi (\phi^{-1}_{\sigma}(U),id)\bigg) \cdot (A_1\otimes \ldots \otimes A_n)\\
		& = (I,\sigma)\cdot \bigg((^{\sigma^{-1}}U)(A_1 \otimes \ldots \otimes A_n)(^{\sigma^{-1}}U)^\dagger\bigg)\\
		& = (I,\sigma)\cdot \bigg(U_{\sigma^{-1}(1)}A_1U_{\sigma^{-1}(1)}^\dagger \otimes \ldots \otimes U_{\sigma^{-1}(n)}A_nU_{\sigma^{-1}(n)}^\dagger\bigg)\\
		& = U_1A_{\sigma(1)}U_1^\dagger \otimes \ldots \otimes U_nA_{\sigma(n)}U_n^\dagger \ .
	\end{align*}
\end{obs}

Since the elements in $\cal P_n$ are of the form $\alpha Z^{\mu_1}X^{\nu_1}\otimes \ldots \otimes Z^{\mu_n}X^{\nu_n}$ for $\alpha \in\{\pm 1, \pm i\}$, the action \eqref{S_n acting on P_n} of the symmetric group $\frak S_n$ on $\cal P_n$ can be equivalently described by
\begin{equation}\label{S_n acting on exponents}
	\sigma \cdot (\alpha Z^{\mu_1}X^{\nu_1}\otimes \ldots \otimes Z^{\mu_n}X^{\nu_n})= \alpha Z^{\mu_{\sigma(1)}}X^{\nu_{\sigma(1)}}\otimes \ldots \otimes Z^{\mu_{\sigma(n)}}X^{\nu_{\sigma(n)}} \ .
\end{equation}

Finally, from \eqref{Sp on points} we know that the action of a given $U=U_1\otimes \ldots \otimes U_n \in \cal C_n^{\text{loc}}$ on $ Z^{\mu_1}X^{\nu_1}\otimes \ldots \otimes Z^{\mu_n}X^{\nu_n}$ corresponds to an action of a certain $\left({\tiny \begin{bmatrix} a_1& b_1\\ c_1& d_1 \end{bmatrix}}, \ldots, {\tiny \begin{bmatrix} a_n& b_n\\ c_n& d_n \end{bmatrix}}\right)\in \SL(2,\mathbb F_2)^{\times n}$, where each ${\tiny \begin{bmatrix} a_i& b_i\\ c_i& d_i \end{bmatrix}}$ depends on $U_i$ and it acts on the coefficients $\mu_i,\nu_i$'s. In particular, in the same notation, we have
\begin{align*}
	U(Z^{\mu_1}X^{\nu_1}\otimes \ldots \otimes Z^{\mu_n}X^{\nu_n})U^\dagger & = U_1Z^{\mu_1}X^{\nu_1}U_1^\dagger\otimes \ldots \otimes U_nZ^{\mu_n}X^{\nu_n}U_n^\dagger\\
	& = Z^{a_1\mu_1+b_1\nu_1}X^{c_1\mu_1+d_1\nu_1}\otimes \ldots \otimes Z^{a_n\mu_n+b_n\nu_n}X^{c_n\mu_n+d_n\nu_n}
\end{align*}
and the action \eqref{semidir acting on P_n} of the semidirect product $\cal C_n^{\text{loc}}\rtimes_\phi\frak S_n$ on $\cal P_n$ can be rephrased as
\begin{equation}\label{semidir acting on exponents}
	(U,\sigma)\cdot (Z^{\mu_1}X^{\nu_1}\otimes \ldots \otimes Z^{\mu_n}X^{\nu_n}) \ = \ 
	U(Z^{\mu_{\sigma(1)}}X^{\nu_{\sigma(1)}}\otimes \ldots \otimes Z^{\mu_{\sigma(n)}}X^{\nu_{\sigma(n)}})U^\dagger  =
	\end{equation}	
\[
		Z^{a_1\mu_{\sigma(1)}+b_1\nu_{\sigma(1)}}X^{c_1\mu_{\sigma(1)}+d_1\nu_{\sigma(1)}}\otimes \ldots \otimes Z^{a_n\mu_{\sigma(n)}+b_n\nu_{\sigma(n)}}X^{c_n\mu_{\sigma(n)}+d_n\nu_{\sigma(n)}} \ .
		\]

\indent The above formula shows that the action of $\cal C_n^{\text{loc}}\rtimes_\phi \frak S_n$ on $\cal P_n$ induces an action on the vectors $(\mu_1, \ldots, \mu_n,\nu_1,\ldots , \nu_n) \in \mathbb F_2^{2n}$: more precisely, the action \eqref{semidir acting on exponents} induces an action on the quotient $V_n=\cal P_n / Z(\cal P_n)$ which is isomorphic to the vector space $\mathbb F_2^{2n}$ in the coordinates \eqref{articlecoordinates}.

\begin{obs}\label{C_n^loc semidir S_n injects in Sp(2n)}
	By Remark \ref{C_n in Sp} and \eqref{C_n^loc in Sp}, we know that there exists a group homomorphism $\cal C_n\rightarrow \Sp(\mathbb F_2^{2n})$ restricting to an homomorphism $\cal C_n^{\text{loc}}\twoheadrightarrow \Sp_{2n}^{\text{loc}(\mathbb F_2)}<\Sp(\mathbb F_2^{2n})$ with kernel $\cal P_n$. Actually, one can also consider the restriction to the subgroup $\cal C_n^{\text{loc}}\rtimes_\phi \frak S_n$ giving an homomorphism
	\[\cal C_n^{\text{loc}} \rtimes_\phi \frak S_n \longrightarrow  \Sp(\mathbb F_2^{2n})\]
	which is again not injective having kernel $\cal P_n \rtimes_\phi \mathfrak S_n$.
\end{obs}

\indent On the other hand, there is a well-defined action of $\frak S_n$ on $\mathbb F_2^{2n}$ given by
\[ \sigma \cdot (\mu_1, \ldots, \mu_n,\nu_1,\ldots , \nu_n) = (\mu_{\sigma(1)}, \ldots, \mu_{\sigma(n)},\nu_{\sigma(1)},\ldots , \nu_{\sigma(n)}) \ . \]
In particular, any $\sigma\in \frak S_n$ corresponds to a $S_\sigma\in \Sp(\mathbb F_2^{2n})$ such that 
\begin{align}\label{S_n on points}
		\sigma \cdot (\mu_1, \ldots, \mu_n,\nu_1,\ldots , \nu_n) & = (\mu_{\sigma(1)}, \ldots, \mu_{\sigma(n)},\nu_{\sigma(1)},\ldots , \nu_{\sigma(n)}) \\
		& = S_\sigma(\mu_1, \ldots, \mu_n,\nu_1,\ldots , \nu_n) \ , 
\end{align}
and this allows to identify $\frak S_n$ as a subgroup of $\Sp(\mathbb F_2^{2n})$.

\begin{obs}\label{form of S_sigma}
	The matrices $S_\sigma\in \Sp(\mathbb F_2^{2n})$ are of the form {\scriptsize $\begin{bmatrix} A_\sigma & 0 \\ 0 & A_\sigma \end{bmatrix}$} for some permutation matrix $A_\sigma \in \Sp(\mathbb F_2^n)$.
	\end{obs}

Finally, the symmetric group acts by conjugacy on $\Sp_{2n}^{\text{loc}}(\mathbb F_2)$ as follows
\begin{equation}\label{S_n acting on Sp^loc} 
	\begin{matrix}
		\varphi: & \frak S_n & \longrightarrow & \Aut\left(\Sp_{2n}^{\text{loc}}(\mathbb F_2)\right)\\
		& \sigma & \mapsto & \big( \  \varphi_\sigma: \tilde U \mapsto \ ^\sigma\!\tilde U=S_\sigma \tilde U S_{\sigma}^{-1} \ \big)
	\end{matrix} \ 
\end{equation}
which is well-defined since
\[
^t(^\sigma\!\tilde U)J(^\sigma\!\tilde U) = (^t\!S_\sigma^{-1})(^t\tilde U)\underbrace{(^t\!S_\sigma) J S_\sigma}_{=J} \tilde U S_\sigma^{-1}
= (^t\!S_\sigma^{-1})\underbrace{(^t\tilde U) J \tilde U}_{=J}  S_\sigma^{-1} \ = \ J \ .
\]
More precisely, if $\tilde U \in \Sp_{2n}^{\text{loc}}(\mathbb F_2)$ is as in \eqref{locsymplform}, then 
\[ ^\sigma\!\tilde U={\tiny \begin{bmatrix}
	a_{\sigma(1)} & & & b_{\sigma(1)} & & \\
	& \ddots & & & \ddots & \\
	& & a_{\sigma(n)} & & & b_{\sigma(n)}\\
	c_{\sigma(1)} & & & d_{\sigma(1)} & & \\
	& \ddots & & & \ddots & \\
	& & c_{\sigma(n)} & & & d_{\sigma(n)}\\ 
	\end{bmatrix}} \in \Sp_{2n}^{\text{loc}}(\mathbb F_2) \ . \]
It follows that $\Sp_{2n}^{\text{loc}}(\mathbb F_2)$ and $\frak S_n$ generate a subgroup of $\Sp(\mathbb F_2^{2n})$ isomorphic to the semidirect product $\Sp_{2n}^{\text{loc}}(\mathbb F_2)\rtimes_\varphi \frak S_n$
\[ \begin{matrix}
	\Sp_{2n}^{\text{loc}}(\mathbb F_2)\rtimes_\varphi \frak S_n & \stackrel{\simeq}{\longrightarrow} & \left\langle \Sp_{2n}^{\text{loc}}(\mathbb F_2) \ , \ \frak S_n\right\rangle & \subset \Sp(\mathbb F_2^{2n})\\
	(\tilde U,\sigma ) & \mapsto & \tilde U S_\sigma
	\end{matrix} \ ,\]
which acts on $\mathbb F_2^{2n}$ as in the exponents in \eqref{semidir acting on exponents}, that is
\begin{equation}\label{semidir acting on F2n}
(\tilde U,\sigma)\cdot(\mu_1, \ldots, \mu_n,\nu_1,\ldots , \nu_n) \ = \ \tilde US_\sigma(\mu_{1}, \ldots, \mu_{n},\nu_{1},\ldots , \nu_{n}) = \end{equation}
\[= \ \tilde U(\mu_{\sigma(1)}, \ldots, \mu_{\sigma(n)},\nu_{\sigma(1)},\ldots , \nu_{\sigma(n)}) \ =\]
{\footnotesize \[
		\big( a_{1}\mu_{\sigma(1)}+b_{1}\nu_{\sigma(1)}, \ \ldots, \ a_{n}\mu_{\sigma(n)}+b_{n}\nu_{\sigma(n)}, \ c_{1}\mu_{\sigma(1)}+d_{1}\nu_{\sigma(1)}, \ \ldots , \ c_{n}\mu_{\sigma(n)}+d_{n}\nu_{\sigma(n)}\big) \ . \]}
	
\indent We conclude that the restriction of $\cal C_n \rightarrow \Sp(\mathbb F_2^{2n})$ to $\cal C_n^{\text{loc}}\rtimes_\phi \frak S_n$ surjects onto $\Sp_{2n}^{\text{loc}}\rtimes_\varphi \frak S_n$ as follows
\begin{equation}\label{from cliff semidir to sympl semidir}
	\begin{matrix}
	\cal C_n^{\text{loc}} \rtimes_\phi \frak S_n &  \twoheadrightarrow  &   \Sp_{2n}^{\text{loc}}(\mathbb F_2)\rtimes_\varphi \frak S_n  \\
	(U,\sigma) & \mapsto & (\tilde U, \sigma)\\
	U\tilde\sigma  & \mapsto & \tilde U S_\sigma
	\end{matrix}
	\end{equation}
where in the last line we formally identify $(U,\sigma)$ with the Clifford transformation $U\tilde \sigma\in \cal C_n$ and $(\tilde U,\sigma)$ with the symplectic transformation $\tilde U S_\sigma \in \Sp(\mathbb F_2^{2n})$. We recall that the kernel of \eqref{from cliff semidir to sympl semidir} is $\cal P_n\rtimes_\phi \frak S_n$. \\
\indent Moreover, from equations \eqref{semidir acting on exponents} and \eqref{semidir acting on F2n} it follows that, although the two semidirect products are not isomorphic, the above homomorphism translates the action of $\cal C_n^{\text{loc}}\rtimes_\phi\frak S_n$ on the Pauli group $\cal P_n$ into the action of $\Sp_{2n}^{\text{loc}}(\mathbb F_2)\rtimes_\varphi \frak S_n$ on the symplectic space $(\mathbb F_2^{2n},J)$, and viceversa. Thus we get a correspondence between orbits (up to phases in $\cal P_n$)
\begin{equation}\label{orbits bij of P_n and F2n}
	 \faktor{\cal P_n}{ C_n^{\text{loc}}\rtimes_\phi \frak S_n} \  \longleftrightarrow \ \faktor{\mathbb F_2^{2n}}{\Sp_{2n}^{\text{loc}}(\mathbb F_2)\rtimes_\varphi \frak S_n} \ .
	 \end{equation}

%{\small \begin{align*} 
%	(S_\sigma U S_\sigma^{-1})(\mu_1, \ldots, \mu_n,\nu_1,\ldots , \nu_n) & \stackrel{\eqref{S_n on points}}{=} (S_\sigma U)(\mu_{\sigma^{-1}(1)}, \ldots, \mu_{\sigma^{-1}(n)},\nu_{\sigma^{-1}(1)},\ldots , \nu_{\sigma^{-1}(n)})\\
%	& = S_\sigma(a_1\mu_{\sigma^{-1}(1)}+b_1\nu_{\sigma^{-1}(1)}, \ldots,  c_1\mu_{\sigma^{-1}(1)}+d_1\nu_{\sigma^{-1}(1)}, \ldots )\\
%	& = (a_{\sigma(1)}\mu_1+b_{\sigma(1)}\nu_1, \ldots,  c_{\sigma(1)}\mu_1+d_{\sigma(1)}\nu_1, \ldots )
%	\end{align*}}

\subsection{The actions $\cal C_n^{\text{loc}}\rtimes_\phi \frak S_n \curvearrowright \scr S(\cal P_n)$ and $\Sp_{2n}^{\text{loc}}(\mathbb F_2)\rtimes_\varphi \frak S_n \curvearrowright \cal I^n$}

\indent \indent The next step is to extend the orbit correspondence \eqref{orbits bij of P_n and F2n} to an orbit correspondence between the set of $n$-fold maximal abelian subgroups in $\cal P_n$
\[ \mathscr S (\cal P_n)=\left\{ \langle M_1,\ldots, M_n \rangle < \cal P_n \ | \ M_i's \  \text{independent \& mutually commuting}\right\}\]
and the set of maximal fully isotropic subspaces in $\mathbb F_2^{2n}$ (with respect to the symplectic form $J$)
\[ \cal I^n=\left\{ W \subset \mathbb P_2^{2n-1} \ | \ \langle P,Q\rangle_J=0 \ \forall P,Q \in W , \ \dim_{\mathbb P}=n-1 \right\} \ ,\]
which are in bijection via \eqref{Pauli-I^n}. From diagram \eqref{Cliff-Sp} we already have the correspondence between the orbits
\[ \faktor{\scr S(\cal P_n)}{\cal C_n^{\text{loc}}} \ \longleftrightarrow \ \faktor{\cal I^n}{\Sp_{2n}^{\text{loc}}(\mathbb F_2)} \ . \]

\paragraph{Action on $\scr S(\cal P_n)$.} Given $M_i=A_1^{(i)}\otimes \ldots \otimes A_n^{(i)}\in \cal P_n$ and $\sigma \in \frak S_n$, we denote 
	\[ ^\sigma\!M_i \ = \ \sigma \cdot (A_1^{(i)}\otimes \ldots \otimes A_n^{(i)}) \ = \ A_{\sigma(1)}^{(i)}\otimes \ldots \otimes A_{\sigma(n)}^{(i)} \ . \]
Let $S_{M_1,\ldots, M_n}=\langle M_1,\ldots, M_n\rangle \in \scr S(\cal P_n)$ be a maximal abelian subgroup of $\cal P_n$ with
	\[ M_i= \alpha_iZ^{\mu_1^{(i)}}X^{\nu_1^{(i)}}\otimes \ldots \otimes Z^{\mu_n^{(i)}}X^{\nu_n^{(i)}} \ \ \ , \ \alpha_i \in \{\pm 1, \pm i\} \ .\]
Then, for any $\sigma \in \frak S_n$, the observables $^\sigma M_i$'s are such that
	\[
	\sum_{j=1}^n\big(\mu_{\sigma(j)}^{(h)}\nu_{\sigma(j)}^{(k)}-\mu_{\sigma(j)}^{(k)}\nu_{\sigma(j)}^{(h)}\big)\stackrel{l=\sigma^{-1}(j)}{=}\sum_{l=1}^n\big(\mu_l^{(h)}\nu_l^{(k)}-\mu_l^{(k)}\nu_l^{(h)}\big)\stackrel{(\clubsuit)}{=}0
	\]
and 
\begin{align*}
	(^\sigma M_1)^{c_1}\cdot \ldots \cdot (^\sigma M_n)^{c_n} & = (A_{\sigma(1)}^{(1)})^{c_1}\cdots (A_{\sigma(1)}^{(n)})^{c_n}\otimes \ldots \otimes (A_{\sigma(n)}^{(1)})^{c_1}\cdots (A_{\sigma(n)}^{(n)})^{c_n}\\
	& \stackrel{(\spadesuit)}{=} I_2\otimes \ldots \otimes I_2 = I^{\otimes n} \ ,
	\end{align*}
where the equalities $(\clubsuit)$ and $(\spadesuit)$ respectively follow from the commutation and the independence of the $M_i$'s. It follows that $^\sigma\!M_1, \ldots, ^\sigma\!M_n$ are independent and mutually commuting too: thus we get the following well-defined action
\[ \begin{matrix} 
	\frak S_n \times \scr S(\cal P_n) & \longrightarrow & \scr S(\cal P_n)\\
	\big( \sigma \ , \ S_{M_1,\ldots, M_n} \big) & \mapsto & S_{^\sigma\!M_1, \ldots, ^\sigma\!M_n}
	\end{matrix} \ .\]
Actually, since the symmetric group $\mathfrak{S}_n$ acts on the generators of a maximal abelian subgroup, the above action coincides with the one permuting {\em both} the entries of any generators {\em and} the generators among them, that is
\begin{equation}\label{S_n on S(P_n)}
\begin{matrix} 
\frak S_n \times \scr S(\cal P_n) & \longrightarrow & \scr S(\cal P_n)\\
\big( \sigma \ , \ \langle M_1,\ldots, M_n\rangle \big) & \mapsto & \langle ^\sigma\!M_{\sigma(1)}, \ldots, ^\sigma\!M_{\sigma(n)}\rangle
\end{matrix}\end{equation}
where 
	\[ ^\sigma M_{\sigma(i)}= Z^{\mu_{\sigma(1)}^{(\sigma(i))}}X^{\nu_{\sigma(1)}^{(\sigma(i))}}\otimes \ldots \otimes Z^{\mu_{\sigma(n)}^{(\sigma(i))}}X^{\nu_{\sigma(n)}^{(\sigma(i))}} \ .\]
	
We conclude that the action of the group $\cal C_n^{\text{loc}}\rtimes_\phi\mathfrak S_n$ on $\cal P_n$ extends to the action

\begin{equation}\label{semidir on S(P_n)}
	\begin{matrix}
	\bigg( \cal C_n^{\text{loc}}\rtimes_\phi \mathfrak S_n \bigg) \times \scr S(\cal P_n) & \longrightarrow & \scr S(\cal P_n)\\
	\left( (U,\sigma) \ , \ \langle M_1,\ldots, M_n\rangle \right) & \mapsto & \langle \  U(^\sigma \!M_{\sigma(1)})U^\dagger , \ldots , U(^\sigma \!M_{\sigma(n)})U^\dagger \ \rangle
	\end{matrix}
	\end{equation}
where, for $U=U_1\otimes \ldots \otimes U_n\in \cal C_n^{\text{loc}}$ and $M_i=Z^{\mu_1^{(i)}}X^{\nu_1^{(i)}}\otimes 	\ldots \otimes Z^{\mu_n^{(i)}}X^{\nu_n^{(i)}}$, 
\[
	U(^\sigma \!M_{\sigma(i)})U^\dagger = U_1Z^{\mu_{\sigma(1)}^{(\sigma(i))}}X^{\nu_{\sigma(1)}^{(\sigma(i))}}U_1^\dagger \otimes \ldots \otimes U_nZ^{\mu_{\sigma(n)}^{(\sigma(i))}}X^{\nu_{\sigma(n)}^{(\sigma(i))}}U_n^\dagger \ . \]

\paragraph{Action on $\cal I^n$.} From the correspondence \eqref{Pauli-I^n} we know that any maximal abelian subgroup $\langle M_1,\ldots, M_n\rangle \in \scr S(P_n)$ corresponds to the maximal isotropic subspace $H_{P_1,\ldots, P_n}\in \cal I^n$, where $P_i\in \mathbb F_2^{2n}$ is defined by the exponents in $M_i$. Moreover, by Proposition \ref{subspacesascolumns} and by the linear independence of the $M_i$'s, it holds $H_{P_1,\ldots,P_n}^\vee=H_{P_1,\ldots, P_n}$, thus $H_{P_1,\ldots, P_n}=\langle P_1,\ldots, P_n\rangle_{\mathbb F_2}$. \\
By putting together the actions \eqref{semidir acting on F2n} and \eqref{semidir on S(P_n)}, one gets the action
\begin{equation}\label{semidir on I^n}
\begin{matrix}
\bigg( \Sp_{2n}^{\text{loc}}(\mathbb F_2)\rtimes_\varphi \mathfrak S_n \bigg) \times \cal I^n & \longrightarrow & \cal I^n\\
\left( (\tilde U,\sigma) \ , \ H_{P_1,\ldots , P_n} \right) & \mapsto &   H_{\tilde US_\sigma P_{\sigma(1)}, \ldots , \tilde US_\sigma P_{\sigma(n)}}
\end{matrix} \ 
\end{equation}
where, for $P_i=(\mu_{1}^{(i)}, \ldots, \mu_{n}^{(i)},\nu_{1}^{(i)},\ldots , \nu_{n}^{(i)})$,
\begin{equation}\label{semidir on I^n - generators}
\tilde US_\sigma P_{\sigma(i)} \ = \ \tilde US_\sigma\left(\mu_{1}^{(\sigma(i))}, \ldots, \mu_{n}^{(\sigma(i))},\nu_{1}^{(\sigma(i))},\ldots , \nu_{n}^{(\sigma(i))}\right) = \end{equation}
\[= \ \tilde U\left(\mu_{\sigma(1)}^{(\sigma(i))}, \ldots, \mu_{\sigma(n)}^{(\sigma(i))},\nu_{\sigma(1)}^{(\sigma(i))},\ldots , \nu_{\sigma(n)}^{(\sigma(i))}\right) \ =\]
{\footnotesize \[
	\big( a_{1}\mu_{\sigma(1)}^{(\sigma(i))}+b_{1}\nu_{\sigma(1)}^{(\sigma(i))}, \ \ldots, \ a_{n}\mu_{\sigma(n)}^{(\sigma(i))}+b_{n}\nu_{\sigma(n)}^{(\sigma(i))}, \ c_{1}\mu_{\sigma(1)}^{(\sigma(i))}+d_{1}\nu_{\sigma(1)}^{(\sigma(i))}, \ \ldots , \ c_{n}\mu_{\sigma(n)}^{(\sigma(i))}+d_{n}\nu_{\sigma(n)}^{(\sigma(i))}\big) \ . \]} 

\noindent Notice that each point $\tilde US_\sigma P_{\sigma(i)}\in \mathbb F_2^{2n}$ corresponds to the observable $U(^\sigma\!M_{\sigma(i)})U^\dagger \in \cal P_n$. We conclude that the correspondence \eqref{orbits bij of P_n and F2n} extends to a correspondence between the orbits
\begin{equation}\label{orbits bij of S(P_n) and I^n}
\faktor{\scr S(\cal P_n)}{ C_n^{\text{loc}}\rtimes_\phi \frak S_n} \  \longleftrightarrow \ \faktor{\cal I^n}{\Sp_{2n}^{\text{loc}}(\mathbb F_2)\rtimes_\varphi \frak S_n} \ .
\end{equation}

\subsection{The actions of $\Sp_{2n}^{\text{loc}}(\mathbb F_2)\rtimes_\varphi \mathfrak S_n$ on $\LG_{\mathbb F_2}(n,2n)$ and $\cal Z_n$}

\indent \indent We are close to prove the correspondence between orbits in $\scr S(\cal P_n)$ with respect to $\cal C_n^{\text{loc}}\rtimes_\phi \mathfrak S_n$, and orbits in the variety of binary symmetric principal minors $\cal Z_n$ with respect to $\Sp_{2n}^{\text{loc}}(\mathbb F_2)\rtimes_\varphi \mathfrak S_n$: it only remains to translate the action \eqref{semidir on I^n} into an action on the Lagrangian Grassmannian $\LG_{\mathbb F_2}(n,2n)$, and later on $\cal Z_n$ via the diagram \eqref{finalobjects}.

\paragraph{Action on $\LG_{\mathbb F_2}(n,2n)$.} By definition, $\LG_{\mathbb F_2}(n,2n)=\Pl(\cal I^n) \subset \mathbb P\big(\bigwedge^{n}\mathbb F_2^{2n} \big)$, where $\Pl$ is the Pl\"{u}cker embedding: in particular, a maximal fully isotropic subspace $H_{P_1,\ldots, P_n}=\langle P_1,\ldots, P_n\rangle_{\mathbb F_2}\in \cal I^n$ corresponds to the point $[P_1\wedge \ldots \wedge P_n]\in \LG_{\mathbb F_2}(n,2n)$. Moreover, one can write the point $[P_1\wedge \ldots \wedge P_n]$ in coordinates in $\mathbb P_2^{\binom{2n}{n}-1}$: given $N=\big[ P_1 | \ldots | P_n\big]$ the $2n \times n$ matrix representing the subspace $H_{P_1,\ldots, P_n}$, the coordinates of $[P_1\wedge \ldots \wedge P_n]$ are given by the maximal minors (i.e. of size $n\times n$) of $N$, that is
\[ \begin{matrix}
	\mathbb P\left(\bigwedge^n\mathbb F_2^{2n}\right) & \longleftrightarrow & \mathbb P_2^{\binom{2n}{n}-1}\\
	[P_1\wedge \ldots \wedge P_n] & \longleftrightarrow & \left[N_{\{1,\ldots,n\}}: N_{\{1,\ldots, n-1,n+1\}}: \ldots : N_{\{n+1,\ldots, 2n\}}\right]	
	\end{matrix}\]
where $N_I$ is the minor of $N$ given by the $I$-indexed rows and all the $n$ columns.\\
\indent It is straightforward that the action \eqref{semidir on I^n} of $\Sp_{2n}^{\text{loc}}(\mathbb F_2)\rtimes_\varphi \mathfrak S_n$ on $\cal I^n$ is equivalent to the following action on the Lagrangian Grassmannian (which extends the action \eqref{Sp on LG}):
\begin{equation}\label{semidir on LG}
\begin{matrix}
\bigg( \Sp_{2n}^{\text{loc}}(\mathbb F_2)\rtimes_\varphi \mathfrak S_n \bigg) \times \LG_{\mathbb F_2}(n,2n) & \longrightarrow & \LG_{\mathbb F_2}(n,2n)\\
\left( (\tilde U,\sigma) \ , \ [P_1\wedge \ldots \wedge P_n] \right) & \mapsto &   \left[ \tilde US_\sigma P_{\sigma(1)} \wedge  \ldots \wedge \tilde US_\sigma P_{\sigma(n)}\right]
\end{matrix}
\end{equation}
where $\tilde US_\sigma P_{\sigma(i)}$ are as in \eqref{semidir on I^n - generators}. 

\begin{obs}
	From Remark \ref{minors of A'} we know that the action of $\Sp_{2n}^{\text{loc}}(\mathbb F_2)$ does not preserve the open subsets $LU_I$ defined in \eqref{charts in LG}, hence the above action does not preserve them either. 
	\end{obs}

\indent We can translate the action \eqref{semidir on LG} on the Lagrangian Grassmannian into an action on the set of full-rank $2n\times n$ matrices. By \eqref{Sp-Pl-matrix} we already know that $\Sp_{2n}^{\text{loc}}(\mathbb F_2)$ acts on the full-rank $2n\times n$ matrices by left-multiplication, that is a transformation $\tilde U={\scriptsize \begin{bmatrix} A& B \\ C & D \end{bmatrix}}\in \Sp_{2n}^{\text{loc}}(\mathbb F_2)$ maps a certain full-rank $2n\times n$ matrix $N={\scriptsize \begin{bmatrix} F \\G \end{bmatrix}}$ into the full-rank $2n\times n$ matrix
\begin{equation}\label{Sp on matrices} 
	\tilde U \cdot N \ = \ \begin{bmatrix}
	AF+BG \\CF+DG
	\end{bmatrix} \ . 
	\end{equation}
\indent By substituting the identity matrix $\tilde U=I$ in \eqref{semidir on I^n - generators} we deduce that a permutation $\sigma \in \mathfrak S_n$ acts on a full-rank $2n\times n$ matrix ${\scriptsize \begin{bmatrix} F\\ G\end{bmatrix}}=[P_1| \ldots |P_n]$ as
\begin{equation}\label{Sn on matrices}
	\sigma \cdot \begin{bmatrix} F\\ G\end{bmatrix}= \left[S_\sigma P_{\sigma(1)}| \ldots | S_\sigma P_{\sigma(n)}\right] = {\tiny \begin{bmatrix}
	\begin{matrix}
	\mu_{\sigma(1)}^{(\sigma(1))} & \mu_{\sigma(1)}^{(\sigma(2))} & \cdots & \mu_{\sigma(1)}^{(\sigma(n))}\\
	\mu_{\sigma(2)}^{(\sigma(1))} & \mu_{\sigma(2)}^{(\sigma(2))} & \cdots & \mu_{\sigma(2)}^{(\sigma(n))}\\
	\vdots & \vdots & & \vdots\\
	\mu_{\sigma(n)}^{(\sigma(1))} & \mu_{\sigma(n)}^{(\sigma(2))} & \cdots & \mu_{\sigma(n)}^{(\sigma(n))}
	\end{matrix}\\
	---------------\\
	\begin{matrix}
	\nu_{\sigma(1)}^{(\sigma(1))} & \nu_{\sigma(1)}^{(\sigma(2))} & \cdots & \nu_{\sigma(1)}^{(\sigma(n))}\\
	\nu_{\sigma(2)}^{(\sigma(1))} & \nu_{\sigma(2)}^{(\sigma(2))} & \cdots & \nu_{\sigma(2)}^{(\sigma(n))}\\
	\vdots & \vdots & & \vdots\\
	\nu_{\sigma(n)}^{(\sigma(1))} & \nu_{\sigma(n)}^{(\sigma(2))} & \cdots & \nu_{\sigma(n)}^{(\sigma(n))}
	\end{matrix}
	\end{bmatrix}} = \begin{bmatrix} ^\sigma\!F \\ ^\sigma\!G \end{bmatrix}
	\end{equation}
where the $n\times n$ matrix $^\sigma\!F$ (resp. $^\sigma\!G$) is obtained by the $n \times n$ matrix $F$ (resp. $G$) by permuting both columns and rows by $\sigma \in \mathfrak S_n$: more precisely, if $S_\sigma = {\scriptsize \begin{bmatrix} A_\sigma \\
	& A_\sigma \end{bmatrix}}$ where $A_\sigma$ is the $n\times n$ permutation matrix defined by $\sigma$, then the action of $\sigma$ onto ${\scriptsize \begin{bmatrix} F\\ G\end{bmatrix}}$ corresponds to the conjugacy action by $A_\sigma$ onto $F$ and $G$ separately, that is 
\[ \sigma \cdot \begin{bmatrix} F\\G \end{bmatrix}= \begin{bmatrix}^\sigma\!F\\^\sigma\!G \end{bmatrix}= \begin{bmatrix} A_\sigma F A_\sigma^{-1}\\A_\sigma G A_\sigma^{-1}\end{bmatrix} \ .\]

\begin{obs}\label{not preserve LU}
	By \eqref{Sn on matrices} the above action preserves the full-rankness. Moreover, from the matrix setting we deduce that, analogously to Remark \eqref{minors of A'}, the action \eqref{semidir on LG} of $\mathfrak S_n$ on $\LG_{\mathbb F_2}(n,2n)$ does not preserve the open subsets $LU_I$ either. For instance, denote by $E_{ij}$ the $n\times n$ matrix having $1$ in the entry $(i,j)$ and zero everywhere else: then ${\scriptsize \begin{bmatrix} E_{ii}+E_{jj}\\I_n \end{bmatrix}}$ {\em lies} in the open subset $LU_{\{i,j,n+1,\ldots, 2n\}\setminus \{n+i,n+j\}}$ but the permutation $\sigma=(ik)(j\ell)$ maps it into the matrix ${\scriptsize \begin{bmatrix} E_{kk}+E_{\ell\ell}\\I_n \end{bmatrix}}$ which does not lie in $LU_{\{i,j,n+1,\ldots, 2n\}\setminus \{n+i,n+j\}}$. However, the action of $\mathfrak S_n$ preserves the open subset $LU_{\{1,\ldots,n\}}$: indeed, $\sigma\cdot {\scriptsize \begin{bmatrix}I_n\\S \end{bmatrix}}={\scriptsize \begin{bmatrix}^\sigma\!I_n\\^\sigma\!S \end{bmatrix}}={\scriptsize \begin{bmatrix}I_n\\^\sigma\!S \end{bmatrix}}$.
	\end{obs}

By putting together \eqref{Sp on matrices} and \eqref{Sn on matrices} we conclude that the action \eqref{semidir on LG} of $\Sp_{2n}^{\text{loc}}\rtimes_\varphi \mathfrak S_n$ on $\LG_{\mathbb F_2}(n,2n)$ is equivalent to the restriction onto full-rank matrices of the action
\begin{equation}\label{semidir on matrices}
\begin{matrix}
\bigg( \Sp_{2n}^{\text{loc}}(\mathbb F_2)\rtimes_\varphi \mathfrak S_n \bigg) \times \Mat_{2n\times n}(\mathbb F_2) & \longrightarrow & \Mat_{2n\times n}(\mathbb F_2)\\
\left( (\tilde U,\sigma) \ , \ {\footnotesize \begin{bmatrix} F\\G \end{bmatrix}} \right) & \mapsto &   {\footnotesize \begin{bmatrix} A(^\sigma\!F)+B(^\sigma\!G)\\
	C(^\sigma\!F)+D(^\sigma\!G)  \end{bmatrix}}
\end{matrix}
\end{equation}
where $\tilde U={\scriptsize \begin{bmatrix} A& B \\ C & D \end{bmatrix}}$.

\paragraph{Action on $\cal Z_n$.} By Proposition \ref{LGisoZ} the Lagrangian Grassmannian $\LG_{\mathbb F_2}(n,2n)$ and the variety of binary symmetric principal minors $\cal Z_n$ are in bijection, thus we can easily conclude that the action \eqref{semidir on LG} of $\Sp_{2n}^{\text{loc}}(\mathbb F_2)\rtimes_\varphi \mathfrak S_n$ on $\LG_{\mathbb F_2}(n,2n)$ induces an action on $\cal Z_n$ making the following diagram commutative:
\begin{equation}\label{semidir on Zn}
\begin{tikzpicture}[scale=2.5]

\node(LG1) at (-1,0.3){$\bigg( \Sp_{2n}^{\text{loc}}(\mathbb F_2)\rtimes_\varphi \mathfrak S_n \bigg) \times \LG_{\mathbb F_2}(n,2n)$};
\node(LG2) at (1,0.3){$\LG_{\mathbb F_2}(n,2n)$};
\node(Z1) at (-1,-0.3){$\bigg( \Sp_{2n}^{\text{loc}}(\mathbb F_2)\rtimes_\varphi \mathfrak S_n \bigg) \times \cal Z_n$};
\node(Z2) at (1,-0.3){$\cal Z_n$};
\node(circ) at (0,0){$\circlearrowright$};

\path[font=\scriptsize,->, >= angle 90]

(LG1) edge [->] node [above] {} (LG2)
(LG1) edge [->] node [above] {} (Z1)
(LG2) edge [->] node [below] {} (Z2)
(Z1) edge [->] node [above] {} (Z2);

\end{tikzpicture}
 \ .\end{equation}

\indent In the following we show that this induced action on $\cal Z_n$ actually is a natural one.\\
Fix $(v_0,v_1)$ a basis of $\mathbb F_2^2$ and the induced basis $\big(|i_1\ldots i_n\rangle = v_{i_1}\otimes \ldots \otimes v_{i_n} \ | \ i_k\in \{0,1\}\big)$ (in Dirac notation) of $\mathbb F_2^2 \otimes \ldots \otimes \mathbb F_2^2$. Consider the isomorphism 

\begin{equation}\label{iso Dirac}
	\begin{matrix}
		\mathbb F_2^2 \otimes \ldots \otimes \mathbb F_2^2 &\stackrel{\simeq}{\longrightarrow} & \mathbb F_2^{2^n}\\
		|0\ldots 0\rangle & \mapsto & (1, 0 , \ldots ,0)\\
		|0\ldots \underbrace{1}_{n-i}\ldots 0 \rangle & \mapsto &  (0,\ldots , \underbrace{1}_{i+2}, \ldots, 0)\\
		\vdots\\
		|1\ldots 1\rangle &\mapsto & (0,\ldots ,0,1)
	\end{matrix}
\end{equation}

\noindent By Proposition \ref{Zn invariant} we know that the natural action of $\SL(2,\mathbb F_2)^{\times n}$ on $\mathbb P\big(\mathbb F_2^{2^n}\big)$
\[(U_1,\ldots , U_n) \cdot [v_{1}\otimes \ldots \otimes v_n] = \big[U_1v_1 \otimes \ldots \otimes U_nv_n\big]\]

\noindent induces an action of $\SL(2,\mathbb F_2)^{\times n}$ on $\cal Z_n\subset \mathbb P_2^{2^n-1}$. Another natural action on $\mathbb P\big(\mathbb F_2^{2^n}\big)$ is the one of the symmetric group $\frak S_n$ given by
\[\sigma \cdot [v_1\otimes \ldots \otimes v_n]= [v_{\sigma(1)}\otimes \ldots \otimes v_{\sigma(n)}]\]
\noindent which permutes the tensor product entries: it is known that this action preserves $\cal Z_n$. 

\begin{obs}
	Via the isomorphism \eqref{iso Dirac}, one can check that the action of $\mathfrak S_n$ preserves the standard open subset $\cal ZU_{\{1,\ldots, n\}}\subset \cal Z_n$ since
\begin{equation}\label{Sn on Zn}
	\sigma \cdot \big[1:s_{ii} : S_{[i,j]} : \ldots : \det S \big] = \big[1: s_{\sigma(i)\sigma(i)} : S_{[\sigma(i),\sigma(j)]} : \ldots : \det S\big] \ .
\end{equation}
This agrees with the end of Remark \ref{not preserve LU}. But, in general, this action does not preserve the open subsets $LU_I$.
\end{obs} 

\indent Moreover, the action of $\frak S_n$ on $\mathbb P\big(\mathbb F_2^2 \otimes \ldots \otimes \mathbb F_2^2\big)$ permuting the tensor entries induces an action of $\frak S_n$ on $\SL(2,\mathbb F_2)^{\times n}$ permuting the entries of the direct product (and the latter corresponds to the action \eqref{S_n acting on Sp^loc} via the isomorphism $\SL(2,\mathbb F_2)^{\times n}\simeq \Sp_{2n}^{\text{loc}}(\mathbb F_2)$). We conclude that there is a natural action 
\[ \bigg( \SL(2,\mathbb F_2)^{\times n}\rtimes \frak S_n \bigg) \times \cal Z_n \ \longrightarrow \ \cal Z_n \]
{\em and} that it actually corresponds, via the isomorphism $\SL(2,\mathbb F_2)^{\times n}\simeq \Sp_{2n}^{\text{loc}}(\mathbb F_2)$, to the action on $\cal Z_n$ in \eqref{semidir on Zn}.

\paragraph{Conclusion.} This section and Section \ref{orbitZn} achieve the proof of the first part of Theorem \ref{theo:main} by establishing the bijection between the $\left(\mathcal{C}^{\text{loc}} _n\rtimes \mathfrak{S}_n\right)$-orbits of maximal abelian subgroups of $\mathcal{P}_n$, or equivalently maximal fully isotropic subspaces of $\mathcal{W}(2n-1,2)$, and the $\left(\SL(2,\mathbb{F}_2)^{\times n}\rtimes \mathfrak{S}_n\right)$-orbits of $\mathcal{Z}_n\subset \mathbb{P}(\mathbb{F}_2^{2^n})$:
\begin{equation}\label{orbits bij of S(P_n) and Zn}
\faktor{\scr S(\cal P_n)}{ C_n^{\text{loc}}\rtimes_\phi \frak S_n} \  \longleftrightarrow \ \faktor{\cal Z_n}{\Sp_{2n}^{\text{loc}}(\mathbb F_2)\rtimes_\varphi \mathfrak S_n} \ \simeq \ \faktor{\cal Z_n}{\SL(2,\mathbb F_2)^{\times n}\rtimes \mathfrak S_n} \ .
\end{equation}

\section{Stabilizer states and their orbits under  $\cal C_n^{\text{loc}}\rtimes \mathfrak S_n$}\label{sec:stab}

\indent \indent In this section we focus on a subset of the set $\scr S(\cal P_n)$ of the maximal ($n$-fold) abelian subgroups of the Pauli group $\cal P_n$, and show that the action \eqref{semidir on S(P_n)} restricts to an action on this subset.

\begin{Def}
	 A subgroup $S<\cal P_n$ is called {\em stabilizer group} if it is abelian and $-I^{\otimes n}\notin S$. In particular, $S\in \scr S(\cal P_n)$ is a {\bf maximal stabilizer group} if it does not contain $-I^{\otimes n}$. We denote the set of maximal stabilizer state groups in $\cal P_n$ by $\scr S^+(\cal P_n)$, or simply $\scr S^+_n$. 
	\end{Def}

Let $S=\langle M_1,\ldots, M_n\rangle \in \scr S_n^+$: since the $M_i$'s are mutually commuting, they admit (at least) one common eigenvector $|\phi\rangle\in (\mathbb C^2)^{\otimes n}$. Moreover, recall that for any $M\in \cal P_n$ it holds $M^2=\pm I^{\otimes n}$: in particular, for any $M \in S$ one gets $M^2=I^{\otimes n}$ since $-I^{\otimes n}\notin S$ and $M^2 \in S$. It follows that elements in $S$ can only have eigenvalues $+ 1$ or $-1$.\\
\indent For any abelian subgroup $S<\cal P_n$ not containing $-I^{\otimes n}$ the set of the common eigenvectors with eigenvalue $+1$ of $S$ forms a subspace $V_S$, called {\em stabilized subspace} or {\em stabilizer code} of $S$, and its dimension is $\dim V_S=2^n/|S|$  \cite[Sec. III.B, III.C]{dangniam2020optimal}: in particular, if $S \in \scr S_n^+$, $V_S$ is one-dimensional.

\begin{Def}
	A ($n$-qubit) {\bf stabilizer state} is the (unique up to phase) common eigenvector with eigenvalue $+1$ of a maximal stabilizer state $S \in \scr S_n^+$. We denote by $\Phi_1^n$ the set of $n$-qubit stabilizer states.
	\end{Def}

\noindent There is a one-to-one correspondence between $\Phi_1^n$ and $\scr S_n^+$: we denote by $S_{|\phi\rangle}$ the maximal stabilizer group associated to the stabilizer state $|\phi\rangle$. Let us study how $\cal C_n^{\text{loc}}\rtimes \frak S_n$ acts on $\scr S_n^+$ and $\Phi_1^n$.\\
\hfil\break
\indent First of all, the action \eqref{C_n^loc on S(P_n)} of $\cal C_n^{\text{loc}}$ on $\scr S(\cal P_n)$ preserves $\scr S_n^+$: indeed, given $U \in \cal C_n^{\text{loc}}$ and $S_{|\phi\rangle}=\langle M_1, \ldots , M_n\rangle\in \scr S_n^+$, it holds \[UM_iU^\dagger (U|\phi\rangle) = UM_i|\phi\rangle =U|\phi\rangle \ \ \ ,\forall i=1:n\]
that is $U|\phi\rangle$ is common eigenvector with eigenvalue $+1$ of $S_{UM_1U^\dagger, \ldots , UM_nU^\dagger}$.\\
In particular, the action of $\cal C_n^{\text{loc}}$ on $\scr S_n^+$ is given by

\begin{equation}\label{Cn on true stab}
\begin{matrix}
\cal C_n^{\text{loc}} \ \times \ \scr S_n^+ & \longrightarrow & \scr S_n^+\\
(U\ ,\  S_{|\phi\rangle}) & \mapsto & S_{U|\phi\rangle}
\end{matrix}
\end{equation}

\noindent and it is equivalent to the action of $\cal C_n^{\text{loc}}$ on $\Phi_1^n$ given by

\begin{equation}\label{Cn on stabstates}
\begin{matrix}
\cal C_n^{\text{loc}} \ \times \ \Phi_1^n & \longrightarrow & \Phi_1^n\\
(U\ ,\  |\phi\rangle) & \mapsto & U|\phi\rangle
\end{matrix}
\end{equation}

\noindent where, if $U=U_1\otimes \ldots \otimes U_n$ and $|\phi\rangle =\sum_{j}|\phi_j^{(1)}\rangle \otimes \ldots \otimes |\phi_j^{(n)}\rangle$ (with $|\phi_j^{(k)}\rangle \in \mathbb C^2$), 
\[ U|\phi\rangle = \sum_{j}U_1|\phi_j^{(1)}\rangle \otimes \ldots \otimes U_n|\phi_j^{(n)}\rangle \ .\]

\indent On the other hand, also the action \eqref{S_n on S(P_n)} of $\frak S_n$ on $\scr S(\cal P_n)$ preserves $\scr S_n^+$: indeed, given $S=\langle M_1,\ldots, M_n\rangle \in \scr S_n^+$ (which by definition does not contain $-I^{\otimes n}$), if $^\sigma\!S=\langle ^\sigma M_{\sigma(1)},\ldots, ^\sigma\!M_{\sigma(n)}\rangle$ contained $-I^{\otimes n}$, then $\sigma^{-1}\cdot (-I^{\otimes n})=-I^{\otimes n}$ would be in $^{\sigma^{-1}}(^\sigma\!S)=S$ which is a contradiction. Thus we get the induced action 
\[\begin{matrix}
\frak S_n \ \times \ \scr S_n^+ & \longrightarrow & \scr S_n^+\\
\big(\sigma \ , \ S_{|\phi\rangle}\big) & \mapsto & ^\sigma\!(S_{|\phi\rangle})
\end{matrix} \ .\]
Actually, we can exhibit the stabilizer state corresponding to $^\sigma\!(S_{|\phi\rangle})$: given $\sigma \in \frak S_n$ and $|\phi\rangle =\sum_{j}|\phi_j^{(1)}\rangle \otimes \ldots \otimes |\phi_j^{(n)}\rangle \in \Phi_1^n$, our candidate is
\begin{equation}\label{Sn on stab state} ^\sigma\!|\phi\rangle = \sum_{j}|\phi_j^{(\sigma(1))}\rangle \otimes \ldots \otimes |\phi_j^{(\sigma(n))}\rangle \in (\mathbb C^2)^{\otimes n}\end{equation}
that is the element obtained by permuting via $\sigma$ the tensor entries of $|\phi\rangle$. Notice that this is precisely the one we should expect: indeed, $\frak S_n$ acts on a maximal stabilizer group by permuting both its generators and their tensor entries, but the eigenvector is invariant under reordering of the generators, so the only non-trivial action is the one permuting the tensor entries.

\begin{prop}
	For any stabilizer state $|\phi\rangle \in \Phi_1^n$ and any permutation $\sigma \in \mathfrak S_n$, the tensor element $^\sigma\!|\phi\rangle$ in \eqref{Sn on stab state} is also a stabilizer state, i.e. $^\sigma\!|\phi\rangle \in \Phi_1^n$. In particular, \[^\sigma\!(S_{|\phi\rangle})=S_{^\sigma\!|\phi\rangle} \in \scr S_n^+ \ . \]
	\end{prop}
\begin{proof}
	Let $|\phi\rangle=\sum_j |\phi_j^{(1)}\rangle \otimes \ldots \otimes |\phi_j^{(n)}\rangle \in \Phi_1^n$ be a stabilizer state and $S_{|\phi\rangle}=\langle M_1,\ldots, M_n\rangle \in \scr S_n^+$ the corresponding maximal stabilizer group. If $M_i=A_1^{(i)}\otimes \ldots \otimes A_n^{(n)}$ for any $i=1:n$, then by definition
	{\small \begin{align*} 
		\sum_j A_1^{(i)}|\phi_j^{(1)}\rangle \otimes \ldots \otimes A_n^{(i)}|\phi_j^{(n)}\rangle &	= (A_1^{(i)}\otimes \ldots \otimes A_n^{(i)})|\phi\rangle = M_i|\phi\rangle = |\phi\rangle \\ & = \sum_j |\phi_j^{(1)}\rangle \otimes \ldots \otimes |\phi_j^{(n)}\rangle	
		\end{align*}}
	for any $i=1:n$. By applying $\sigma \in \mathfrak S_n$ we get for any $i=1:n$
	{\small \begin{align*}
		\sum_j A_{\sigma(1)}^{(i)}|\phi_j^{(\sigma(1))}\rangle \otimes \ldots \otimes A_{\sigma(n)}^{(i)}|\phi_j^{(\sigma(n))}\rangle &
		\stackrel{(\clubsuit)}{=} \sum_j \sigma\cdot \left(A_1^{(i)}|\phi_j^{(1)}\rangle \otimes \ldots \otimes A_n^{(i)}|\phi_j^{(n)}\rangle\right)\\
		& = \sigma \cdot \left(\sum_j A_1^{(i)}|\phi_j^{(1)}\rangle \otimes \ldots \otimes A_n^{(i)}|\phi_j^{(n)}\rangle \right)\\
		& = \sigma\cdot \left( \sum_j |\phi_j^{(1)}\rangle \otimes \ldots \otimes |\phi_j^{(n)}\rangle \right)\\
		& = \sum_j |\phi_j^{(\sigma(1))}\rangle \otimes \ldots \otimes |\phi_j^{(\sigma(n))}\rangle \ ,
		\end{align*}}
	where $(\clubsuit)$ follows by definition of the action of $\mathfrak S_n$ on $(\mathbb C^2)^{\otimes n}$. By \eqref{Sn on stab state} the above chain of equalities gives 
	\[^\sigma\!M_i( ^\sigma\!|\phi\rangle)= \left(A_{\sigma(1)}^{(i)})\otimes \ldots \otimes  A_{\sigma(n)}^{(i)}\right)(^\sigma\!|\phi\rangle) \ = \ ^\sigma\!|\phi\rangle \ , \]
	that is $^\sigma\!|\phi\rangle$ is eigenvector with eigenvalue $+1$ for $^\sigma \!M_i=A_{\sigma(1)}^{(i)}\otimes \ldots \otimes  A_{\sigma(n)}^{(i)}$. Since this holds for any $i=1:n$, then $^\sigma\!|\phi\rangle$ is a common eigenvector with eigenvalue $+1$ of $\langle ^\sigma\!M_1,\ldots, ^\sigma\!M_n\rangle= \langle ^\sigma\!M_{\sigma(1)},\ldots,  ^\sigma\!M_{\sigma(n)}\rangle = \ ^\sigma\!(S_{|\phi\rangle})$. But $^\sigma\!(S_{|\phi\rangle})$ is a maximal stabilizer group, thus $^\sigma|\phi\rangle$ actually is its {\em unique} eigenvector with eigenvalue $+1$, that is $^\sigma\!|\phi\rangle \in \Phi_1^n$ and $^\sigma\!(S_{|\phi\rangle})=S_{^\sigma\!|\phi\rangle}$.
	\end{proof}

\begin{cor}
	The action \eqref{S_n on S(P_n)} of $\frak S_n$ on $\scr S(\cal P_n)$ preserves $\scr S_n^+$ and induces the action
	\begin{equation}\label{Sn on true stab group}
	\begin{matrix}
	\frak S_n \ \times \ \scr S_n^+ & \longrightarrow & \scr S_n^+\\
	\big(\sigma \ , \ S_{|\phi\rangle}\big) & \mapsto & ^\sigma\!(S_{|\phi\rangle})=S_{^\sigma\!|\phi\rangle}
	\end{matrix} \ .
	\end{equation}
	Moreover, the above action corresponds to an action on the stabilizer states
	\begin{equation}\label{Sn on true stab state}
	\begin{matrix}
	\frak S_n \ \times \ \Phi_1^n & \longrightarrow & \Phi_1^n\\
	\big(\sigma \ , \ |\phi\rangle\big) & \mapsto & ^\sigma\!|\phi\rangle
	\end{matrix} \ .
	\end{equation}
	\end{cor}

\noindent By putting together the previous actions we get that the action \eqref{semidir on S(P_n)} of $\cal C_n^{\text{loc}}\rtimes \frak S_n$ on $\scr S(\cal P_n)$ restricts to the action 
\begin{equation}\label{semidir on stab group}
\begin{matrix}
\big(\cal C_n^{\text{loc}}\rtimes \frak S_n\big) \ \times \scr S_n^+ & \longrightarrow & \scr S_n^+\\
\big((U,\sigma) \ , \ S_{|\phi\rangle}\big) & \mapsto & S_{U(^\sigma\!|\phi\rangle)}=U(^\sigma\!S_{|\phi\rangle})U^\dagger
\end{matrix}
\end{equation}
\noindent In particular, this action corresponds to the action on the stabilizer states
\begin{equation}\label{semidir on stab state}
\begin{matrix}
\big(\cal C_n^{\text{loc}}\rtimes \frak S_n\big) \times \ \Phi_1^n & \longrightarrow & \Phi_1^n\\
\big( (U,\sigma) \ ,\ |\phi\rangle \big) & \mapsto & U(^\sigma\!|\phi\rangle)
\end{matrix}
\end{equation}
where, if $U=U_1\otimes \ldots \otimes U_n$ and $|\phi\rangle =\sum_{j}|\phi_j^{(1)}\rangle \otimes \ldots \otimes |\phi_j^{(n)}\rangle$, then
\[ U(^\sigma\!|\phi\rangle) = \sum_{j}U_{1}|\phi_j^{(\sigma(1))}\rangle \otimes \ldots \otimes U_{n}|\phi_j^{(\sigma(n))}\rangle \ .\]

\paragraph{Conclusion.} This section proves an intermediate step to the second statement of Theorem \ref{theo:main}. From Section 4 we have the one-to-one correspondence \eqref{orbits bij of S(P_n) and Zn} between the orbits of maximal abelian subgroups under $\cal C_n^{\text{loc}}\rtimes \mathfrak S_n$ and the orbits in $\cal Z_n$ under $\SL(2,\mathbb F_2)^{\times n}\rtimes \mathfrak S_n$. But the Lagrangian mapping (Sec.\ref{preliminaries}) associates maximal fully isotropic subspaces in $\cal I^n$ to maximal abelian subgroups in $\scr S(\cal P_n)$ {\em up to phasis} (since we work into the quotient $V_n=\cal P_n/Z(\cal P_n)$): this means that the Lagrangian mapping does not distinguish between maximal abelian subgroups containing $-I^{\otimes n}$ and maximal stabilizer states. It follows that there is a one-to-one correspondence between the orbits of stabilizer states under $\cal C_n^{\text{loc}}\rtimes \mathfrak S_n$ and the orbits in $\cal Z_n$ under $\SL(2,\mathbb F_2)^{\times n}\rtimes \mathfrak S_n$:
\begin{equation}\label{orbits bij of stab and Zn}
\faktor{\Phi_1^n}{ C_n^{\text{loc}}\rtimes_\phi \mathfrak S_n} \  \longleftrightarrow \ \faktor{\scr S_n^+}{ C_n^{\text{loc}}\rtimes_\phi \mathfrak S_n} \ \longleftrightarrow \ \faktor{\cal Z_n}{\SL(2,\mathbb F_2)^{\times n}\rtimes \mathfrak S_n} \ .
\end{equation}

\section{Graph states and their orbits under $\cal C_n^{\text{loc}}\rtimes \mathfrak S_n$}\label{graph}

\indent \indent In this section we investigate how the action \eqref{semidir on stab state} of $\cal C_n^{\text{loc}}\rtimes \mathfrak S_n$ on $\Phi_1^n$ behaves with respect to a privileged subset of stabilizer states, the so-called {\em graph states}. 

\begin{obs}\label{rmk improper orbits}
	In \cite{article1} it is pointed out that the action of $\cal C_n^{\text{loc}}$ does not preserve this subset, hence the action \eqref{semidir on stab state} does not either: in particular, one cannot talk about {\em orbits of graph states under $\cal C_n^{\text{loc}}\rtimes \mathfrak S_n$}. However, we will improperly refer to \textquotedblleft orbit\textquotedblright \space of a graph state as to the set of all graph states belonging to the same orbit in $\Phi_1^n$: in order to get an actual action on the graph states, one should consider {\em only} certain local Clifford operations corresponding to graph transformations (cf. \cite[Sec. IV, Definition 1]{article1}).
	\end{obs}

\indent Consider a (non-oriented) graph $\Gamma=(V,E)$ defined by a finite set of vertices $V=\{1,\ldots , n\}$ and a set of edges $E\subset V\times V$. We can associate to $\Gamma$ a unique symmetric matrix $\theta \in \Sym^2(\mathbb F_2^n)$ such that
\[ \theta_{ij}=1 \iff (i,j) \in E\]
(the matrix is indeed symmetric since the graph is not oriented). The graph $\Gamma$ (or equivalently its matrix $\theta$) defines $n$ elements in the Pauli group $\cal P_n$
\[
\begin{matrix}
M_1= & Z^{\theta_{11}}X & \otimes & Z^{\theta_{12}} & \otimes & \ldots & \otimes & Z^{\theta_{1n}}\\
 \vdots\\
M_n= & Z^{\theta_{1n}} & \otimes & \ldots & \otimes & Z^{\theta_{n-1,n}} & \otimes & Z^{\theta_{nn}}X
\end{matrix} \ ,\]
or equivalently, in a more compact notation, 
\begin{equation}\label{graph observables}
 	M_i=Z^{\theta_{i1}}X^{\delta_{1i}}\otimes \ldots \otimes Z^{\theta_{in}}X^{\delta_{ni}} \ \ \ , \ \forall i=1:n \ , 
 	\end{equation}
where $\delta_{ij}$ is the Kronecker symbol: these elements are independent and mutually commuting, thus they generate a maximal abelian subgroup which we denote by $S_\Gamma=S_\theta \in \scr S(\cal P_n)$. Moreover, the maximal fully isotropic subspace $H_\theta \in \cal I^n$ corresponding to $S_\theta$ is described (in the coordinates \eqref{articlecoordinates}) by the $2n \times n$ matrix ${\scriptsize \begin{bmatrix}
	\theta \\
	I_n
	\end{bmatrix}}$ (it is already in a Pl\"{u}cker form) and belongs to the chart $LU_{\{n+1,\ldots, 2n\}}\subset \LG_{\mathbb F_2}(n,2n)$.

\begin{obs}\label{other graph chart}
	The association $\Gamma \mapsto S_\theta$ is not unique: indeed, instead of defining the elements $M_i$'s in \eqref{graph observables}, one could choose to associate to $\Gamma$ the elements
	\[ M_i'=Z^{\delta_{1i}}X^{\theta_{i1}}\otimes \ldots \otimes Z^{\delta_{ni}}X^{\theta_{in}} \ \]
	defining the subspace $H_\theta'\in LU_{\{1,\ldots ,n\}}$ described by the matrix ${\scriptsize \begin{bmatrix}
		I_n \\
		\theta
		\end{bmatrix}}$. However, this ambiguity does not affect our interests and results since the two different associations $\Gamma \mapsto M_i$ and $\Gamma \mapsto M_i'$ give maximal abelian subgroups $S_\theta$ and $S_\theta'$ which are in the same $\cal C_n^{\text{loc}}$-orbit: indeed,
	\[ J{\footnotesize \begin{bmatrix} I_n \\ \theta \end{bmatrix}} \ = \ {\footnotesize \begin{bmatrix} 0 & I_n \\ I_n & 0 \end{bmatrix}\begin{bmatrix} I_n \\ \theta \end{bmatrix}} \ = \ {\footnotesize \begin{bmatrix} \theta \\ I_n \end{bmatrix}}\]
	and $J \in \Sp_{2n}^{\text{loc}}(\mathbb F_2)$. On one hand, the choice of using the $M_i'$ allows to have an immediate transcription in the variety of symmetric principal minors $\cal Z_n$: via the Lagrangian mapping, the maximal abelian subgroup $S_\theta'$ corresponds to the point $[1: \theta_{ii}: \theta_{[i,j]}: \ldots : \det \theta]\in \cal Z_n$. On the other hand, the choice of using the $M_i$ is more common in the literature. Thus in this section we are going to work with the association $\Gamma \mapsto S_\theta$, but in Section 7 we will use the one $\Gamma \mapsto S_\theta'$ for exhibiting examples.
	\end{obs}

\indent Now we wonder if the maximal abelian subgroup $S_\theta \in \scr S(\cal P_n)$ defined by a graph $\Gamma_\theta$ (with adjacent matrix $\theta$) is a maximal stabilizer group. In general, the answer is negative unless we add the assumption that the graph is {\em loopless} (a.k.a {\em simple}), i.e. $\theta_{ii}=0$ for any $i=1:n$. This fact seems to be well known and implicit in the literature but we haven't be able to find a proof, thus we propose one.

\begin{prop}
	Let $\Gamma_\theta$ be a (non-oriented) graph and let $S_\theta \in \scr S(\cal P_n)$ be the associated maximal abelian group $S_\theta$ via \eqref{graph observables}. Then $S_\theta$ is a maximal stabilizer group {\em if and only if} the graph $\Gamma_\theta$ is loopless:
	\[ S_\theta \in \scr S_n^+ \ \iff \ \theta_{ii}=0 \ \ , \forall i=1:n \ . \]
\end{prop}
\begin{proof}
	Recall that $S_\theta \in \scr S_n^+ \ \iff \ \left( \ -I^{\otimes n}\notin S_\theta \ \ \wedge \ \ S_\theta \in \scr S(\cal P_n) \ \right)$.\\
	\indent $[\Rightarrow]$ Assume that $S_\theta \in \scr S_n^+$ and that, by contradiction, the graph has at least one loop, that is there exists $i_0 \in \{1,\ldots, n\}$ such that $\theta_{i_0i_0}=1$. Then the element $M_{i_0}=Z^{\theta_{i_01}}\otimes \ldots \otimes ZX \otimes \ldots \otimes Z^{\theta_{i_0n}}= Z^{\theta_{i_01}}\otimes \ldots \otimes (iY) \otimes \ldots \otimes Z^{\theta_{i_0n}}$ squares to $M_{i_0}^2=I\otimes \ldots \otimes (iY)^2 \otimes \ldots \otimes I= -I^{\otimes n}$, hence $-I^{\otimes n}\in S_\theta$, leading to a contradiction.$_\rightsquigarrow$\\
	\indent $[\Leftarrow]$ Assume $\Gamma_\theta$ to be loopless. Then for any $i=1:n$ it holds $M_i^2=I^{\otimes n}$, that is $M_i$ can only have eigenvalues $+1$ and $-1$. Since the $M_i$'s are mutually commuting, there exists (at least) one common eigenvector, say $|\gamma\rangle\in (\mathbb C^2)^{\otimes n}$: then for any $i=1:n$ we get $M_i|\gamma\rangle = (-1)^{c_i}|\gamma\rangle$. In particular, for any $i=1:n$ it holds $(-1)^{c_i}M_i|\gamma\rangle = |\gamma\rangle$, that is $|\gamma\rangle$ is common eigenvector with eigenvalue $+1$ of the $n$ elements $(-1)^{c_i}M_i$'s. But we can always find a local Clifford element $U\in \cal C_n^{\text{loc}}$ such that $U(-1)^{c_i}M_iU^\dagger=M_i$ for all $i=1:n$: if $c_{i_1}, \ldots , c_{i_k}$ are the only non-zero exponents, then by \eqref{C_1 on P_n} the local Clifford tranformation $\hat U=H_{i_1}\cdot \ldots \cdot H_{i_k}$ (where $H$ is the Hadamard matrix and $H_j=I \otimes \ldots \otimes \underbrace{H}_{j-th}\otimes \ldots \otimes I$) is the one doing the work. It follows that $\hat U|\gamma\rangle $ is common eigenvector with eigenvalue $+1$ of $M_1,\ldots, M_n$, hence of $S_{\theta}\in \scr S_n^+$.
\end{proof}

Since we are interested in the so-called {\em graph states} (which are stabilizer states) and how $\cal C_n^{\text{loc}}\rtimes \mathfrak S_n$ acts on them, {\em from now on we restrict to considering only the maximal abelian subgroups which are defined by loopless graphs}: we underline once again that graphs with loops can only define maximal abelian subgroups which are not stabilizer, thus in these cases one cannot talk about graph states as stabilizer states.

\begin{Def}
	A ($n$-qubit) {\bf graph state} $|\Gamma\rangle\in (\mathbb C^2)^{\otimes n}$ is the stabilizer state (i.e. common eigenvector with eigenvalue $+1$) of a maximal stabilizer group $S_\theta\in \scr S_n^+$ defined by a $n$-vertex ({\em loopless}) graph $\Gamma_\theta$. We denote the subset of $n$-qubit graph states by $\Theta_n\subset \Phi_1^n$.\\
	Moreover, we define a {\bf graph group} to be a maximal stabilizer group defined by a ({\em loopless}) graph and we denote the set of such subgroups by $\scr S_{\Theta_n} \subset \scr S_n^+$.
	\end{Def}

The one-to-one correspondence $\scr S_n^+ \leftrightarrow \Phi_1^n$ restricts to a one-to-one correspondence $\scr S_{\Theta_n} \leftrightarrow \Theta_n$. Next, we wonder if the actions of $\mathfrak S_n$ and $\cal C_n^{\text{loc}}$ on $\scr S_n^+$ preserve $\Theta_n$: let us investigate the two actions separately. 

\paragraph{Action of $\mathfrak S_n$.} Let $|\Gamma\rangle \in \Theta_n$ be a graph state defined by a graph $\Gamma$ with adjacent matrix $\theta$. The graph group $S_\theta=\langle M_1,\ldots , M_n\rangle \in \scr S_{\Theta_n}$ (as in \eqref{graph observables}) is described by the $2n\times n$ matrix {\scriptsize $\begin{bmatrix} \theta \\I_n\end{bmatrix}$}. Given $\sigma \in \mathfrak S_n$, by \eqref{Sn on matrices} we know that $\sigma\cdot {\scriptsize \begin{bmatrix} \theta \\I_n\end{bmatrix}}={\scriptsize \begin{bmatrix} ^\sigma\!\theta \\I_n\end{bmatrix}}$ which describes the subgroup $\sigma \cdot S_\theta= \ ^\sigma\!(S_\theta)=\langle ^\sigma\!M_{\sigma(1)}, \ldots , \ ^\sigma\!M_{\sigma(n)}\rangle$: the matrix $^\sigma\!\theta$ is still symmetric and uniquely defines a graph $^\sigma\!\Gamma=\Gamma_{^\sigma\!\theta}$. It follows that $\mathfrak S_n$ preserves $\scr S_{\Theta_n}$ and the action \eqref{Sn on true stab group} restricts to the action
\begin{equation}\label{Sn on graph group}
\begin{matrix}
\mathfrak S_n \ \times \ \scr S_{\Theta_n} & \longrightarrow & \scr S_{\Theta_n}\\
(\sigma \ , \ S_\theta) & \mapsto & ^\sigma\!(S_\theta)=S_{^\sigma\!\theta}
\end{matrix} \ .
\end{equation}

\begin{obs}\label{graphical Sn action}
	The action \eqref{Sn on graph group} reflects an action on the $n$-graphs described as follows: given a graph $\Gamma$ and a permutation $\sigma \in \mathfrak S_n$, the graph $^\sigma\!\Gamma$ is obtained simply by renaming the vertices from $\{1,\ldots,n\}$ to $\{\sigma(1),\ldots, \sigma(n)\}$, without changing the edges. Roughly speaking, two graphs are in the same $\mathfrak S_n$-orbit if they have the same drawing representation as \textquotedblleft dots-edges\textquotedblright .
	\end{obs}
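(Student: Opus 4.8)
The plan is to unwind the definition of the matrix $^\sigma\!\theta$ in terms of the adjacency matrix $\theta=\theta_\Gamma$ of $\Gamma$, to recognise the resulting operation as a relabelling of the vertices of $\Gamma$, and then to read off the orbit description.

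First I would spell out $^\sigma\!\theta$: by definition its $(i,j)$-entry is $\theta_{\sigma(i)\sigma(j)}$, so $^\sigma\!\theta={}^tP_\sigma\,\theta\,P_\sigma$ where $P_\sigma\in\GL(n,\mathbb F_2)$ is the permutation matrix of $\sigma$. Conjugation by a permutation matrix preserves symmetry over $\mathbb F_2$, so $^\sigma\!\theta$ is again the adjacency matrix of a (unique) graph $^\sigma\!\Gamma$ on $\{1,\dots,n\}$, and by \eqref{Sn on graph group} this $^\sigma\!\Gamma$ is precisely the graph attached to $^\sigma\!(S_\theta)=S_{{}^\sigma\!\theta}$. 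That $\sigma\mapsto{}^\sigma\!(-)$ is a genuine $\frak S_n$-action on the graphs on $\{1,\dots,n\}$, intertwined with \eqref{Sn on graph group} through the correspondences $\Gamma\leftrightarrow\theta_\Gamma\leftrightarrow S_{\theta_\Gamma}$, is inherited from the action of $\frak S_n$ on $\Sym^2(\mathbb F_2^n)$ by simultaneous row--column permutation.

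Next I would translate the entrywise identity into a statement about edges: $(i,j)$ is an edge of $^\sigma\!\Gamma$ if and only if $(^\sigma\!\theta)_{ij}=1$, if and only if $\theta_{\sigma(i)\sigma(j)}=1$, if and only if $(\sigma(i),\sigma(j))$ is an edge of $\Gamma$. Hence the vertex bijection $v\mapsto\sigma^{-1}(v)$ carries $\Gamma$ onto $^\sigma\!\Gamma$ edge for edge: $^\sigma\!\Gamma$ is obtained from $\Gamma$ purely by renaming the vertices, leaving the \textquotedblleft dots-and-edges\textquotedblright{} picture unchanged --- which, up to the immaterial choice of direction $\sigma$ versus $\sigma^{-1}$, is the description asserted in the remark.

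Finally, for the orbit claim I would observe that $\Gamma$ and $\Gamma'$ lie in the same $\frak S_n$-orbit if and only if $\theta_{\Gamma'}={}^tP_\sigma\,\theta_\Gamma\,P_\sigma$ for some $\sigma\in\frak S_n$, i.e.\ if and only if $\sigma$ (equivalently $\sigma^{-1}$) is a graph isomorphism between $\Gamma$ and $\Gamma'$; that is, if and only if $\Gamma$ and $\Gamma'$ are isomorphic as abstract graphs, which is exactly the informal \textquotedblleft same drawing representation\textquotedblright{} condition. I do not expect any real obstacle: the statement is essentially a translation between the matrix language and the graph language, and the only points worth a word are checking that $^\sigma\!\theta$ is still a legitimate adjacency matrix (immediate, since conjugation by a permutation matrix permutes the diagonal among itself and keeps the matrix symmetric) and keeping the bookkeeping of $\sigma$ versus $\sigma^{-1}$ straight.
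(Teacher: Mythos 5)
Your argument is correct and is exactly the unwinding of the definition of $^\sigma\!\theta$ that the paper leaves implicit (the remark is stated without proof), including the only two points that deserve attention: that simultaneous row--column permutation keeps $^\sigma\!\theta$ a symmetric matrix with the diagonal permuted into itself, and the $\sigma$ versus $\sigma^{-1}$ bookkeeping, which you rightly note is immaterial for the orbit statement. Nothing further is needed.
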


\indent By correspondence, it follows that the $\mathfrak S_n$-action \eqref{Sn on true stab state} preserves the subset of graph states $\Theta_n$, hence
\begin{equation}\label{Sn on graph state}
\begin{matrix}
\frak S_n \ \times \ \Theta_n & \longrightarrow & \Theta_n\\
(\sigma \ , \ |\Gamma\rangle) & \mapsto & ^\sigma\!|\Gamma\rangle=|^\sigma\!\Gamma\rangle
\end{matrix}
\end{equation}
where $|^\sigma\Gamma\rangle$ is obtained simply by permuting via $\sigma$ the tensor entries of $|\Gamma\rangle$ as in \eqref{Sn on stab state}.

\paragraph{Action of $\cal C_n^{\text{loc}}$.} As already spoilered in Remark \ref{rmk improper orbits}, the situation with respect to the action of the local Clifford group is a little more tricky. Given a local Clifford transformation $U \in \cal C_n^{\text{loc}}$ and a graph group $S_\theta\in \scr S_{\Theta_n}$, we can consider the corresponding local symplectic transformation $\tilde U={\tiny \begin{bmatrix} A & B \\ C & D \end{bmatrix}}\in \Sp_{2n}^{\text{loc}}(\mathbb F_2)$ of type \eqref{locsymplform} and the matrix {\scriptsize $\begin{bmatrix} \theta \\ I_n\end{bmatrix}$} describing $S_\theta$: then the action $U \cdot S_\theta$ corresponds to the linear transformation 
\[ \tilde U \cdot \begin{bmatrix} \theta \\ I_n \end{bmatrix}= \begin{bmatrix}
	A\theta+ B\\
	C\theta+ D
	\end{bmatrix}\]
that is, in general, $U\cdot S_\theta$ is not a graph group (but a maximal stabilizer group of course). This explains Remark \ref{rmk improper orbits}.\\ 
However, the local Clifford operations preserving $\Theta_n$ are known \cite[Proposition 5.3]{hein2006} to correspond to operations on $n$-graphs, called {\em local complementations}: for a formal definition and a graphical description we refer to \cite[Sec. 1]{bouchet1993}, \cite{hein2006} and \cite[Sec. IV, VI]{article1}.

\begin{obs}
	The local complementations together with Remark \ref{graphical Sn action} give a graphical description of the action of (a subgroup of) $\cal C_n^{\text{loc}}\rtimes \frak S_n$ on the graph states.
	\end{obs}

If we denote by $G_n <\cal C_n^{\text{loc}}$ the subgroup of the local complementations, and by $u$ the graph transformation corresponding to $U \in G_n$, then one gets the actions

\begin{equation}\label{G_n on graph}
\begin{matrix}
G_n \ \times \ \scr S_{\Theta_n} & \longrightarrow & \scr S_{\Theta_n}\\
(U \ , \ S_\theta) & \mapsto & S_{u\cdot \theta}
\end{matrix}
\ \ \ \ \ , \ \ \ \ \
\begin{matrix}
G_n \ \times \ \Theta_n & \longrightarrow & \Theta_n\\
(U \ , \ |\Gamma\rangle) & \mapsto & U|\Gamma\rangle
\end{matrix}
\end{equation}

\noindent which, together with \eqref{Sn on graph group} and \eqref{Sn on graph state}, give the actions of the semidirect product $G_n\rtimes \mathfrak S_n$
\begin{equation}\label{semidir on graph}
\begin{matrix}
\big( G_n\rtimes \mathfrak S_n\big) \ \times \scr S_{\Theta_n} & \longrightarrow & \scr S_{\Theta_n}\\
\big((U,\sigma) \ , \ S_{\theta}\big) & \mapsto & S_{u\cdot (^\sigma\!\theta)}
\end{matrix} \ \ \ \ , \ \ \ \ 
\begin{matrix}
\big(G_n\rtimes \mathfrak S_n\big) \times \ \Theta_n & \longrightarrow & \Theta_n\\
\big( (U,\sigma) \ ,\ |\Gamma \big) & \mapsto & U|^\sigma\!\Gamma\rangle
\end{matrix} \ .
\end{equation}

\indent We have seen that, for any graph group $S_\theta\in \scr S_{\Theta_n}$ and any $U\in \cal C_n^{\text{loc}}$, the subgroup $U\cdot S_\theta$ is still a stabilizer group, thus {\em any graph group (resp. graph state) is the representative of a $\cal C_n^{\text{loc}}$-orbit of maximal stabilizer groups (resp. stabilizer states)}. But we can say more: by \cite[Theorem 1]{article1}, each maximal stabilizer group (resp. stabilizer state) is $\cal C_n^{\text{loc}}$-equivalent to a graph group (resp. graph state), thus {\em any $\cal C_n^{\text{loc}}$-orbit of stabilizer states admits a representative which is a graph states}. This means that studying the orbits of stabilizer states under the action of $\cal C_n^{\text{loc}}\rtimes \mathfrak S_n$ is the same as studying the orbits of graph states under the action of $G_n\rtimes \mathfrak S_n$:

\begin{equation}\label{orbits bij of stab and graph}
\faktor{\Phi_1^n}{ C_n^{\text{loc}}\rtimes \mathfrak S_n} \  \longleftrightarrow \ \faktor{\Theta_n}{G_n \rtimes \mathfrak S_n} \ .
\end{equation}

\paragraph{Conclusion:} This section achieves the proof of Theorem \ref{theo:main}, proving that the classification (up to $\cal C_n^{\text{loc}}\rtimes \mathfrak S_n$ action) of (loopless) $n$-graph states is in a one-to-one correspondence with the orbits in $\cal Z_n$ under the action of $\SL(2,\mathbb F_2)^{\times n}\rtimes \mathfrak S_n$: by putting together the correspondences \eqref{orbits bij of stab and Zn} and \eqref{orbits bij of stab and graph} we get
\begin{equation}\label{orbits bij of graph and Zn}
\faktor{\Theta_n}{ G_n\rtimes_\phi \mathfrak S_n} \   \longleftrightarrow \ \faktor{\cal Z_n}{\SL(2,\mathbb F_2)^{\times n}\rtimes \mathfrak S_n} \ .
\end{equation}

\section{Applications}\label{sec:application}

\indent \indent We propose two applications of Theorem \ref{theo:main}. First, we show how the $4$-qubit graph states classification can be deduced from the orbit stratification of $\mathcal{Z}_4$. Then, in the other direction, we show how the knowledge of the $5$-qubit graph states classification can be used to obtain representatives of the orbits in $\mathcal{Z}_5$.

\subsection{Classification of $4$-qubit graph states from orbits of $\cal Z_4$}

\indent \indent Under the action of $\SL(2,\mathbb F_2)^{\times 4}\rtimes \frak S_4$, the variety of principal minor $\cal Z_4 \subset \mathbb P_2^{15}$ splits in six orbits whose cardinalities and representatives are listed in \cite[Table 3]{holweck2014}. Starting from that classification, we recover the orbits (in the sense of Remark \ref{rmk improper orbits}) of $4$-graphs under the action of $\cal C_4^{\text{loc}}\rtimes \frak S_4$. 

\begin{obs}
	In order to be faithful to the notations we used in the first part of our work, we will consider representatives of the orbits $\cal O_2, \cal O_3, \cal O_6 , \cal O_{14}, \cal O_{17}, \cal O_{18}$ different from the ones in Table 3 \cite{holweck2014}. More precisely, we will take representatives in the chart $\{z_0\neq 0\}$ in order to recover more easily the corresponding graph-matrices. Moreover, since we want to obtain loopless graphs, we choose representatives whose coordinates $z_1=z_2=z_3=z_4=0$ are zero: we denote by $\bold{\underline{0}}$ those entries.
	\end{obs}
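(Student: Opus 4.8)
The plan is to derive the statement directly from Theorem \ref{theo:main} together with Proposition \ref{prop on loopless graphs}, thereby producing canonical (loopless‑graph) representatives rather than transporting the Table 3 representatives of \cite{holweck2014} by ad hoc row/column operations. First I would recall, via Theorem \ref{theo:main} and the $4^4$‑to‑$1$ correspondence noted just after it, that every $\SL(2,\mathbb F_2)^{\times 4}\rtimes\frak S_4$‑orbit of $\cal Z_4$ is the image, under the (equivariant) Lagrangian mapping, of a $\cal C_4^{\text{loc}}\rtimes\frak S_4$‑orbit of stabilizer states $\Phi_1^4$. It therefore suffices to exhibit, in each such orbit of stabilizer states, a representative whose image in $\cal Z_4$ has $z_0\neq 0$ and $z_1=z_2=z_3=z_4=0$.

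For this I would use that, by \cite[Theorem 1]{article1}, every stabilizer state is $\cal C_4^{\text{loc}}$‑equivalent — hence a fortiori $\cal C_4^{\text{loc}}\rtimes\frak S_4$‑equivalent — to a graph state, and that by Proposition \ref{prop on loopless graphs} it is in fact $\cal C_4^{\text{loc}}$‑equivalent to a \emph{loopless} graph state $|\Gamma\rangle$. The auxiliary right‑action of $\GL(4,\mathbb F_2)$ occurring in \cite{article1} only re‑parametrizes the associated element of $\cal I^4$ and leaves its Plücker vector unchanged (the determinant of the base change is $1$ over $\mathbb F_2$), hence does not move the corresponding point of $\cal Z_4$; cf. Section \ref{graph}. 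So every orbit of $\cal Z_4$ is the image under the Lagrangian mapping of the $\cal C_4^{\text{loc}}\rtimes\frak S_4$‑class of some loopless graph state.

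It then remains to read off the coordinates. If $\Gamma$ is loopless with adjacency matrix $\theta\in\Sym^2(\mathbb F_2^4)$ then $\theta_{ii}=0$ for all $i$, the isotropic subspace $H_\theta\in\cal I^4$ has Plücker matrix ${\scriptsize\begin{bmatrix} I_4\\ \theta\end{bmatrix}}$ (Section \ref{graph}), and so the graph‑state group $S_\theta$ corresponds to
\[
[\,1:\theta_{11}:\theta_{22}:\theta_{33}:\theta_{44}:\theta_{[1,2]}:\ldots:\det\theta\,]\;=\;[\,1:0:0:0:0:\theta_{[i,j]}:\ldots:\det\theta\,]\in\cal Z_4 ,
\]
which lies in the chart $\{z_0\neq 0\}$ and satisfies $z_1=z_2=z_3=z_4=0$, i.e. has exactly the announced shape $[1:\underline{0}:\ldots]$. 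Together with the previous paragraph this shows that each of $\cal O_2,\cal O_3,\cal O_6,\cal O_{14},\cal O_{17},\cal O_{18}$ admits a representative of the required form.

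The only part with real bookkeeping — routine, not deep — will be the explicit identification: for each of the six orbits I would exhibit one concrete loopless graph on $4$ vertices and check it lands in that orbit, either by matching $\SL(2,\mathbb F_2)^{\times 4}\rtimes\frak S_4$‑invariants (e.g. the orbit cardinalities of \cite[Table 3]{holweck2014}) or by producing an explicit element of $\SL(2,\mathbb F_2)^{\times 4}\rtimes\frak S_4$ that carries the given Table 3 representative into the locus $\{z_0\neq 0,\ z_1=\cdots=z_4=0\}$ and then reading $\theta$ off the resulting $1\times 1$ and $2\times 2$ principal minors. Conceptually, though, the statement is an immediate corollary of Theorem \ref{theo:main} and Proposition \ref{prop on loopless graphs}.
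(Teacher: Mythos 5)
Your argument is correct and follows essentially the same route the paper relies on: the statement is a notational remark whose justification is exactly the correspondence of Theorem \ref{theo:main} together with Proposition \ref{prop on loopless graphs}, since a loopless graph-matrix $\theta$ yields a point $[1:0:0:0:0:\theta_{[i,j]}:\ldots:\det\theta]$ of $\cal Z_4$ in the chart $\{z_0\neq 0\}$ with $z_1=\cdots=z_4=0$, and the $\GL(4,\mathbb F_2)$ right-action is indeed irrelevant. The remaining bookkeeping you defer (matching each loopless graph to the correct orbit $\cal O_i$ of \cite[Table 3]{holweck2014}) is precisely what the paper's Table \eqref{orbits 4-graphs} carries out explicitly.
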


\begin{equation}\label{orbits 4-graphs}
{\footnotesize \begin{matrix}
\hline
\text{Orbit} & \text{Representative in $\cal Z_4\subset \mathbb P_2^7$} & \text{Graph-matrix} & \text{$4$-Graph}\\

\hline
\cal O_2 
&
[1:\bold{\underline{0}}:0:0:0:0:0:0:0:0:0 : 0:0]
&
{\tiny \begin{bmatrix}
	0 \\
	  & 0\\
	  &   & 0 \\
	  &   &   & 0 
	\end{bmatrix}}
&
\begin{minipage}{0.1\textwidth}
\begin{tikzpicture}[scale=0.75]
\node(1) at (-0.7,0.7){$\bullet$};
\node(2) at (0.7,0.7){$\bullet$};
\node(3) at (0.7,-0.7){$\bullet$};
\node(4) at (-0.7,-0.7){$\bullet$};

\path[font=\scriptsize, >= angle 90]
;
\end{tikzpicture}
\end{minipage}
\\

\hline
\cal O_3
&
[1:\bold{\underline{0}}:1:0:0:0:0:0:0:0:0 : 0:0]
&
{\tiny \begin{bmatrix}
	0 & 1 \\
	1 & 0\\
	&   & 0 \\
	&   &   & 0 
	\end{bmatrix}}
&
\begin{minipage}{0.1\textwidth}
\begin{tikzpicture}[scale=0.75]
\node(1) at (-0.7,0.7){$\bullet$};
\node(2) at (0.7,0.7){$\bullet$};
\node(3) at (0.7,-0.7){$\bullet$};
\node(4) at (-0.7,-0.7){$\bullet$};

\path[font=\scriptsize, >= angle 90]
(1) edge [] node [] {} (2);
\end{tikzpicture}
\end{minipage}
\\

\hline
\cal O_6
&
[1: \bold{\underline{0}}: 1: 0:0:1: 0:0:0:0:0 : 0:0]
&
{\tiny \begin{bmatrix}
	0 & 1 \\
	1 & 0 & 1\\
	  & 1 & 0 \\
	&   &   & 0 
	\end{bmatrix}}
&
\begin{minipage}{0.1\textwidth}
\begin{tikzpicture}[scale=0.75]
\node(1) at (-0.7,0.7){$\bullet$};
\node(2) at (0.7,0.7){$\bullet$};
\node(3) at (0.7,-0.7){$\bullet$};
\node(4) at (-0.7,-0.7){$\bullet$};

\path[font=\scriptsize, >= angle 90]
(1) edge [] node [] {} (2)
(2) edge [] node [] {} (3);
\end{tikzpicture}
\end{minipage}
\\

\hline
\cal O_{14}
&
[1:\bold{\underline{0}}:1:1:1:0:0:0:0:0:0 : 0:0]
&
{\tiny \begin{bmatrix}
	0 & 1 & 1 & 1 \\
	1 & 0 &  \\
	1 &   & 0 & \\
	1 &   &   & 0 
	\end{bmatrix}}
&
\begin{minipage}{0.1\textwidth}
\begin{tikzpicture}[scale=0.75]
\node(1) at (-0.7,0.7){$\bullet$};
\node(2) at (0.7,0.7){$\bullet$};
\node(3) at (0.7,-0.7){$\bullet$};
\node(4) at (-0.7,-0.7){$\bullet$};

\path[font=\scriptsize, >= angle 90]
(1) edge [] node [] {} (2)
(1) edge [] node [] {} (3)
(1) edge [] node [] {} (4);
\end{tikzpicture}
\end{minipage}
\\

\hline
\cal O_{17}
&
[1: \bold{\underline{0}}: 1: 0:0:0:0:1: 0: 0:0:0:1]
&
{\tiny \begin{bmatrix}
	0 & 1\\
	1 & 0 &  \\
	&   & 0 & 1\\
	&   & 1 & 0 
	\end{bmatrix}}
&
\begin{minipage}{0.1\textwidth}
\begin{tikzpicture}[scale=0.75]
\node(1) at (-0.7,0.7){$\bullet$};
\node(2) at (0.7,0.7){$\bullet$};
\node(3) at (0.7,-0.7){$\bullet$};
\node(4) at (-0.7,-0.7){$\bullet$};

\path[font=\scriptsize, >= angle 90]
(1) edge [] node [] {} (2)
(3) edge [] node [] {} (4);
\end{tikzpicture}
\end{minipage}
\\

\hline
\cal O_{18}
&
[1:\bold{\underline{0}}: 1: 0: 1: 1: 0: 1: 0: 0: 0:0:0]
&
{\tiny \begin{bmatrix}
	0 & 1 &   & 1\\
	1 & 0 & 1\\
	  & 1 & 0 & 1\\
	1 &   & 1 & 0 
	\end{bmatrix}}
&
\begin{minipage}{0.1\textwidth}
\begin{tikzpicture}[scale=0.75]
\node(1) at (-0.7,0.7){$\bullet$};
\node(2) at (0.7,0.7){$\bullet$};
\node(3) at (0.7,-0.7){$\bullet$};
\node(4) at (-0.7,-0.7){$\bullet$};

\path[font=\scriptsize, >= angle 90]
(1) edge [] node [] {} (2)
(2) edge [] node [] {} (3)
(3) edge [] node [] {} (4)
(4) edge [] node [] {} (1);
\end{tikzpicture}
\end{minipage}
\end{matrix}}
\end{equation}

\begin{obs}
	The vertices of the representative graphs in Table \eqref{orbits 4-graphs} are not numbered because of Remark \ref{graphical Sn action}: as representative labeling, we choose the vertex $1$ to be the upper-left one and the other vertices are clockwise ordered.
	\end{obs}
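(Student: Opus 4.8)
The plan is to read this remark as an immediate consequence of Remark \ref{graphical Sn action} together with the adjacency encoding $\theta\leftrightarrow\Gamma$ fixed in Section \ref{graph}. First I would recall that every row of Table \eqref{orbits 4-graphs} represents an orbit under the full semidirect product $\cal C_4^{\text{loc}}\rtimes\frak S_4$, hence in particular a set stable under the $\frak S_4$-action \eqref{Sn on graph group}. By Remark \ref{graphical Sn action} this $\frak S_4$-action is precisely the relabeling of vertices: $\sigma$ carries the graph on $\{1,\dots,4\}$ to the graph obtained by renaming vertex $i$ as $\sigma(i)$, leaving the edge set unchanged. Consequently any two numberings of one and the same underlying ``dots-edges'' picture differ by some $\sigma\in\frak S_4$ and therefore lie in a single $\frak S_4$-orbit, a fortiori in a single $\cal C_4^{\text{loc}}\rtimes\frak S_4$-orbit. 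This is exactly what licenses dropping the labels: the orbit is an invariant of the unlabeled drawing alone, so numbering the vertices would only record a choice of representative, not additional data.

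Next I would address the second half of the statement, namely that a concrete matrix representative nonetheless requires a labeling. The bijection $\theta\leftrightarrow\Gamma$ (with $\theta_{ij}=1\iff(i,j)\in E$) is only defined once the vertices carry the labels $1,\dots,n$; a different labeling replaces $\theta$ by the $\frak S_n$-conjugate $^{\sigma}\!\theta$ of \eqref{Sn on full matrices}, a distinct point of $\Sym^2(\mathbb F_2^4)$ representing the same orbit. To single out the one matrix printed in the third column I would fix the stated convention---vertex $1$ is the upper-left dot, the remaining vertices read clockwise---and then verify, row by row, that reading the adjacency matrix of the displayed graph under this convention reproduces the listed $\theta$.

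The only substantive check is this final consistency between the drawings and the matrices, and I expect it to be routine: with four vertices and at most six possible edges, each verification is a finite inspection. The conceptual content---that the numbering may be suppressed because $\frak S_n$ already identifies all relabelings---is supplied in full by Remark \ref{graphical Sn action}, so I anticipate no real obstacle, the work being bookkeeping rather than argument.
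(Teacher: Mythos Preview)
Your proposal is correct and matches the paper's intent; the paper itself offers no proof for this remark, treating it as a self-evident notational convention justified by the cited Remark \ref{graphical Sn action}. You have simply spelled out the reasoning that the paper leaves implicit.
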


By the Lagrangian mapping (Sec. \ref{preliminaries}), we know that for each orbit $\mathcal{O}_i$ of $\mathcal{Z}_4$ there are $4^4|\mathcal{O}_i|$ corresponding stabilizer states. Indeed for each isotropic $4$-dimensional space in $\mathbb F_2^{2n}$, there are $4$ different choices of the phases for the four elements that span it and therefore $4^4$ different stablizer states. Thus, by the sizes in Table 3 \cite{holweck2014} we recover the number of stabilizer states coming from each graph orbit.

\begin{equation}
{\small \begin{matrix}
\hline
\text{Orbit} & \text{Size of orbit in $\cal Z_4$} & \text{\# of stabilizer states}\\
\hline
\cal O_2 & 81 & 20736  \\
\hline
\cal O_3 & 324 & 82944  \\
\hline
\cal O_6 & 648 & 165888  \\
\hline
\cal O_{14} & 162 & 41472  \\
\hline
\cal O_{17} & 108 &  27648 \\
\hline 
\cal O_{18} & 972 &   248832
\end{matrix}}
\end{equation}

\subsection{Orbits of $\cal Z_5$ from $5$-qubit graph states classification}

\indent \indent In \cite{danielsen2005} the number of orbits (in the sense of Remark \ref{rmk improper orbits}) of $5$-graphs was computed to be $11$ (see also \cite[Table IV]{hein2006}). In the tables \eqref{5-orbits of stab-groups} and \eqref{5-orbits of Z5}, we exhibit $11$ representatives of such orbits and by them we recover the representatives of the orbits in $\cal Z_5 \subset \mathbb P(\mathbb F_2^{32})$.

\begin{obs}
	The representative graphs in Table \eqref{5-orbits of stab-groups} and Table \eqref{5-orbits of Z5} are loopless and their vertices are not numbered because of Remark \ref{graphical Sn action}: as representative labeling, we choose the vertex $1$ to be the highest one and the other vertices are clockwise ordered. \\
	\indent In the last column of Table \eqref{5-orbits of stab-groups}, the observables generating the stabilizer groups are obtained by the columns of the matrix ${\scriptsize \begin{bmatrix} I_5\\\theta\end{bmatrix}}$ where $\theta$ is the matrix in the third column.\\
	\indent In the last column of Table \eqref{5-orbits of Z5}, the coordinates of the points in $\cal Z_5\subset \mathbb P_2^{31}$ are given by the principal minors of the matrix $\theta$ (third column of Table \eqref{5-orbits of stab-groups}). In particular, the points are taken in the chart $\{z_0\neq 0\}$ and the entries $z_{1}=\ldots =z_{5}=0$ are all zero since the representative graphs are loopless (i.e. $\theta_{ii}=0$ for all $i=1:5$): for layout reasons, we compactly write such five coordinates as a zero vector $\bold{\underline{0}}$.
\end{obs}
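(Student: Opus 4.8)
I read the content of this closing remark as the claim that, applying the principal-minor coordinates \eqref{Zn as Sym} to the eleven loopless graph matrices $\theta$ of Table \eqref{5-orbits of stab-groups}, one obtains eleven points of $\cal Z_5$ which form a complete, non-redundant set of representatives for the $\SL(2,\mathbb F_2)^{\times 5}\rtimes\frak S_5$-orbits of $\cal Z_5$, of the shape displayed in Table \eqref{5-orbits of Z5} — in particular with the first five (off-the-constant) coordinates equal to zero. The plan is to derive this from three ingredients: the literature classification of $5$-qubit graph states, the bijection of Theorem \ref{theo:main}, and a bare determinant computation.

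First I would recall from \cite{danielsen2005} (equivalently \cite[Table IV]{hein2006}) that under $\cal C_5^{\text{loc}}\rtimes\frak S_5$ the $5$-qubit graph states fall into exactly eleven classes, with representatives the graphs in the second column of Table \eqref{5-orbits of stab-groups}. By Proposition \ref{prop on loopless graphs} every graph state is $\cal C_5^{\text{loc}}$-equivalent to a loopless one, and $\frak S_5$ preserves looplessness, so these representatives may be (and are) taken loopless, i.e. $\theta_{ii}=0$ for all $i$. For each class, reading off the columns of $\left[\begin{smallmatrix}I_5\\\theta\end{smallmatrix}\right]$ produces the generators of the associated stabilizer-state group $S_\theta$ (the Remark on graph-state groups), which is the content of the last column of Table \eqref{5-orbits of stab-groups}.

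Next I would invoke Theorem \ref{theo:main}: the Lagrangian mapping is a bijection between the $\cal C_5^{\text{loc}}\rtimes\frak S_5$-orbits of maximal abelian subgroups of $\cal P_5$ and the $\SL(2,\mathbb F_2)^{\times 5}\rtimes\frak S_5$-orbits of $\cal Z_5$, and under the chain of identifications built up in Sections \ref{preliminaries}--\ref{orbitS} the graph-state group $S_\theta$ is sent to the point $[\,1:\theta_{ii}:\theta_{[i,j]}:\ldots:\det\theta\,]\in\cal Z_5$ whose homogeneous coordinates, ordered as in \eqref{iso Dirac}--\eqref{Zn as Sym}, are the $2^5=32$ principal minors of $\theta$. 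Since the eleven loopless classes already exhaust the $\cal C_5^{\text{loc}}\rtimes\frak S_5$-orbits on the Pauli side, their images are eleven points lying in pairwise distinct orbits of $\cal Z_5$; and, the map being a bijection of orbit sets, these eleven points meet every orbit. Crucially, no separate ``distinct graphs $\Rightarrow$ distinct $\cal Z_5$-orbits'' verification is required — it is forced by Theorem \ref{theo:main}.

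Finally, to fill in the explicit entries I would simply evaluate, for each of the eleven matrices $\theta$ and each $I\subseteq\{1,\ldots,5\}$, the principal minor $\theta_{[I]}=\det(\theta_{I\times I})$ over $\mathbb F_2$: the empty minor is $1$ (fixing the chart $\{z_0\neq0\}$); the singletons are $\theta_{[i]}=\theta_{ii}=0$, which is exactly why $z_1=\ldots=z_5=0$ and why one abbreviates them by $\bold{\underline{0}}$; the pairs are $\theta_{[i,j]}=\theta_{ii}\theta_{jj}-\theta_{ij}^{2}=\theta_{ij}$, recovering the edges; and the $\binom{5}{3}+\binom{5}{4}+\binom{5}{5}=16$ higher minors are immediate for these small sparse (path/cycle/star-type) matrices. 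Arranging them in the fixed coordinate order yields the third column of Table \eqref{5-orbits of Z5}. The main obstacle here is not mathematical but clerical: pinning down once and for all the clockwise vertex labelling of the representative graphs, the ordering of the $32$ coordinates coming from \eqref{iso Dirac}, and the matching between the eleven graphs in the literature and the classes listed here; once these conventions are fixed, the remaining work in each case is the routine evaluation of those $16$ determinants.
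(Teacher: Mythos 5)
Your reading and justification coincide with what the paper does (implicitly — this Remark is a statement of table conventions rather than a theorem with its own proof): the eleven representatives are taken from the known $5$-qubit graph-state classification, Theorem \ref{theo:main} together with the correspondence \eqref{Zn as Sym} transports them to orbit representatives of $\cal Z_5$, and the explicit coordinates are just the principal minors of $\theta$, with $z_1=\ldots=z_5=0$ because $\theta_{ii}=0$ for loopless graphs. So the proposal is correct and takes essentially the same route as the paper, merely making explicit the completeness/distinctness argument that the paper leaves to Theorem \ref{theo:main}.
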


\begin{obs}
	Another representative of the orbit $\cal O_9$ is the $5$-graph of the GHZ state
	\[   
	\begin{tikzpicture}[scale=1]
	\node(1) at (0,1){$\bullet$};
	\node(2) at (0.95,0.3){$\bullet$};
	\node(3) at (0.58,-0.8){$\bullet$};
	\node(4) at (-0.58,-0.8){$\bullet$};
	\node(5) at (-0.95,0.3){$\bullet$};
	
	\path[font=\scriptsize, >= angle 90]
	(1) edge [] node [] {} (2)
	(1) edge [] node [] {} (3)
	(1) edge [] node [] {} (4)
	(1) edge [] node [] {} (5)
	(2) edge [] node [] {} (3)
	(3) edge [] node [] {} (4)
	(4) edge [] node [] {} (5)
	(2) edge [] node [] {} (4)
	(2) edge [] node [] {} (5)
	(3) edge [] node [] {} (5);
	\end{tikzpicture}
	\]
\end{obs}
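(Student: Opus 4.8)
The statement asserts that the complete graph $K_5$ (all ten edges present) and the star graph with centre vertex $1$ — the representative of $\cal O_9$ in Table \eqref{5-orbits of stab-groups}, whose adjacency matrix $\theta_{\cal O_9}$ has first row and column equal to $(0,1,1,1,1)$ and vanishes elsewhere — define graph states lying in the same $\cal C_5^{\text{loc}}\rtimes \frak S_5$-orbit. Since no relabelling of vertices turns a star into a complete graph, the equivalence cannot use the $\frak S_5$-part at all and must come entirely from the local Clifford part. The plan is therefore to exhibit a single $U\in \cal C_5^{\text{loc}}$ carrying the star graph-state group onto the $K_5$ graph-state group, so that in fact $|K_5\rangle=|\,\Gamma^{U}\rangle$ in the notation of \eqref{Cn on graph state} with trivial permutation; this is strictly stronger than what the remark requires.

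Conceptually the right tool is \emph{local complementation}: by \cite[Theorem 1]{article1} two graph states are $\cal C_n^{\text{loc}}$-equivalent precisely when their graphs are related by a sequence of local complementations, and local complementation at a vertex $a$ toggles every edge inside the neighbourhood $N(a)$. For the star centred at $1$ one has $N(1)=\{2,3,4,5\}$ with no internal edges, so a single local complementation at vertex $1$ adds all $\binom{4}{2}=6$ missing edges among $\{2,3,4,5\}$ while leaving the four spokes untouched, producing exactly $K_5$. This already settles the claim; the remaining work is to make the realising local Clifford explicit inside the formalism of Sections \ref{orbitZn}--\ref{graph}, so that the remark is self-contained.

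To do this I would take $U=\sqrt{Z}_1\otimes \sqrt{X}_2\otimes \sqrt{X}_3\otimes \sqrt{X}_4\otimes \sqrt{X}_5\in \cal C_5^{\text{loc}}$ (recall $\sqrt{X}=H\sqrt{Z}H$), whose associated local symplectic matrix has the diagonal blocks ${\scriptsize\begin{bmatrix}1&1\\0&1\end{bmatrix}}$ on site $1$ and ${\scriptsize\begin{bmatrix}1&0\\1&1\end{bmatrix}}$ on sites $2,\dots,5$. Applying $U$ to the Plücker matrix ${\scriptsize\begin{bmatrix}I_5\\ \theta_{\cal O_9}\end{bmatrix}}$ through \eqref{Sp-Pl-matrix} gives a $10\times 5$ matrix ${\scriptsize\begin{bmatrix}F\\ G\end{bmatrix}}$ whose upper block $F=I_5+\diag(1,0,0,0,0)\,\theta_{\cal O_9}$ is unitriangular, hence invertible; the Plücker reduction is right-multiplication by $F^{-1}$, and a short computation (using $F^{2}=I_5$ over $\mathbb F_2$, since the perturbation of $I_5$ squares to zero) yields $\theta'=GF^{-1}$ equal to the all-ones-off-diagonal matrix, i.e. the adjacency matrix of $K_5$. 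This $F^{-1}$ is precisely the $\GL(5,\mathbb F_2)$-correction anticipated in the discussion following \eqref{Cn on graph group}, and since $\theta'$ is loopless the output is a genuine graph-state group by Proposition \ref{prop on loopless graphs}.

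The main obstacle is bookkeeping rather than depth. First, one must fix the convention gap between the generators of \eqref{graph observables} and those listed in Table \eqref{5-orbits of stab-groups}: the two differ by the global Hadamard $H^{\otimes 5}\in \cal C_5^{\text{loc}}$, so they label the same orbit, but this choice dictates that local complementation at vertex $1$ is realised by $\sqrt{Z}_1\prod_{k\ge 2}\sqrt{X}_k$ and \emph{not} by its $X\leftrightarrow Z$ swap — using the wrong one produces a loop at vertex $1$ instead of $K_5$. Second, one must keep the right $\GL(5,\mathbb F_2)$-action and the overall sign under control so that the common $+1$-eigenvector is indeed $|K_5\rangle$. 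Both are harmless: the sign is a global phase, irrelevant at the level of $\cal I^n$ and $\cal Z_n$ where the correspondence of Theorem \ref{theo:main} lives, and Proposition \ref{prop on loopless graphs} absorbs any residual diagonal into a local $\sqrt{Z}$. Hence $K_5$ and the star both represent $\cal O_9$.
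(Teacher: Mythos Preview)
Your argument is correct. Local complementation at the centre of the star indeed toggles all six edges inside $\{2,3,4,5\}$ and yields $K_5$, and your explicit symplectic verification via \eqref{Sp-Pl-matrix} goes through: with $F=I_5+\diag(1,0,0,0,0)\,\theta_{\cal O_9}$ one has $F^2=I_5$ over $\mathbb F_2$, and $GF^{-1}=GF$ is exactly the off-diagonal all-ones matrix. Your care in matching the $\sqrt{Z}$/$\sqrt{X}$ placement to the convention used in Table \eqref{5-orbits of stab-groups} (generators read from the columns of ${\scriptsize\begin{bmatrix}I_5\\\theta\end{bmatrix}}$ rather than ${\scriptsize\begin{bmatrix}\theta\\I_5\end{bmatrix}}$) is also on point.

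By way of comparison: the paper states this remark without proof, so there is no argument to set yours against. What you have supplied is precisely the expected justification --- the star-to-$K_n$ equivalence under a single local complementation at the hub is the textbook instance of \cite[Theorem~1]{article1} --- together with the bonus of an explicit local Clifford witness expressed in the Pl\"ucker formalism of Sections~\ref{orbitZn}--\ref{graph}. If anything, the last paragraph about phases and Proposition~\ref{prop on loopless graphs} could be trimmed: once $\theta'=GF^{-1}$ is computed and seen to be symmetric with zero diagonal, the orbit equality at the level of $\cal I^5$ (hence $\cal Z_5$) is already established, and nothing further is needed for the remark.
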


\section*{Acknowledgement}
\addcontentsline{toc}{section}{Acknowledgement}
The work presented in this paper started when F.H. was a guest Professor at Trento University in October 2020 within the Quantum@Trento program. F.H. would like to warmly thank Alessandra Bernardi and Iacopo Carusotto for making this invitation possible. This  work  was partially  supported  by  the  French  Investissements  d’Avenir  programme,  project ISITE-BFC (contract ANR-15-IDEX-03). V.G. is partially supported by Italian GNSAGA-INDAM.

\begin{equation}\label{5-orbits of stab-groups}
{\small \begin{matrix}
\hline
\text{Orbit} &
\text{$5$-Graph} & \text{Graph-matrix} & \text{Stabilizer-group in $\cal P_5$} 
\\
\hline
%row 1
\cal O_1
&
\begin{minipage}{0.13\textwidth}
\begin{tikzpicture}[scale=0.75]
\node(1) at (0,1){$\bullet$};
\node(2) at (0.95,0.3){$\bullet$};
\node(3) at (0.58,-0.8){$\bullet$};
\node(4) at (-0.58,-0.8){$\bullet$};
\node(5) at (-0.95,0.3){$\bullet$};

\path[font=\scriptsize, >= angle 90]
;
\end{tikzpicture}
\end{minipage}
&
{\tiny \begin{bmatrix}
0\\
& 0 \\
& & 0\\
& & & 0 \\
& & & & 0
\end{bmatrix}}
&
\langle ZIIII, IZIII, IIZII, IIIZI, IIIIZ \rangle

\\
\hline
%row 2
\cal O_2
&
\begin{minipage}{0.13\textwidth}
\begin{tikzpicture}[scale=0.75]
\node(1) at (0,1){$\bullet$};
\node(2) at (0.95,0.3){$\bullet$};
\node(3) at (0.58,-0.8){$\bullet$};
\node(4) at (-0.58,-0.8){$\bullet$};
\node(5) at (-0.95,0.3){$\bullet$};

\path[font=\scriptsize, >= angle 90]
(1) edge [] node [] {} (2);
\end{tikzpicture}
\end{minipage}
&
{\tiny \begin{bmatrix}
	0 & 1\\
	1 & 0 \\
	& & 0\\
	& & & 0 \\
	& & & & 0
	\end{bmatrix}}
&
\langle ZXIII, XZIII, IIZII, IIIZI, IIIIZ \rangle 

\\
\hline
%row 3
\cal O_3
&
\begin{minipage}{0.13\textwidth}
\begin{tikzpicture}[scale=0.75]
\node(1) at (0,1){$\bullet$};
\node(2) at (0.95,0.3){$\bullet$};
\node(3) at (0.58,-0.8){$\bullet$};
\node(4) at (-0.58,-0.8){$\bullet$};
\node(5) at (-0.95,0.3){$\bullet$};

\path[font=\scriptsize, >= angle 90]
(1) edge [] node [] {} (2)
(2) edge [] node [] {} (3);
\end{tikzpicture}\end{minipage}
&
{\tiny \begin{bmatrix}
	0 & 1\\
	1 & 0 & 1 \\
	  & 1 & 0\\
	  &   &   & 0 \\
  	  &   &   &   & 0
	\end{bmatrix}}
&
\langle ZXIII, XZXII, IXZII, IIIZI, IIIIZ \rangle 

\\
\hline
%row 4
\cal O_4
&
\begin{minipage}{0.13\textwidth}
\begin{tikzpicture}[scale=0.75]
\node(1) at (0,1){$\bullet$};
\node(2) at (0.95,0.3){$\bullet$};
\node(3) at (0.58,-0.8){$\bullet$};
\node(4) at (-0.58,-0.8){$\bullet$};
\node(5) at (-0.95,0.3){$\bullet$};

\path[font=\scriptsize, >= angle 90]
(1) edge [] node [] {} (2)
(3) edge [] node [] {} (4);
\end{tikzpicture}\end{minipage}
&
{\tiny \begin{bmatrix}
	0 & 1\\
	1 & 0 &  \\
	  &   & 0 & 1\\
	  &   & 1 & 0 \\
	  &   &   &   & 0
	\end{bmatrix}}
&
\langle ZXIII , XZIII, IIZX, IIXZI, IIIIZ \rangle 

\\
\hline
%row 5
\cal O_5
&
\begin{minipage}{0.13\textwidth}
\begin{tikzpicture}[scale=0.75]
\node(1) at (0,1){$\bullet$};
\node(2) at (0.95,0.3){$\bullet$};
\node(3) at (0.58,-0.8){$\bullet$};
\node(4) at (-0.58,-0.8){$\bullet$};
\node(5) at (-0.95,0.3){$\bullet$};

\path[font=\scriptsize, >= angle 90]
(1) edge [] node [] {} (2)
(2) edge [] node [] {} (3)
(3) edge [] node [] {} (4);
\end{tikzpicture}\end{minipage}
&
{\tiny \begin{bmatrix}
	0 & 1\\
	1 & 0 & 1 \\
	  & 1 & 0 & 1\\
	  &   & 1 & 0 \\
	  &   &   &   & 0
	\end{bmatrix}}
&
\langle ZXIII, XZXII, IXZXI, IIXZI, IIIIZ\rangle

\\
\hline
%row 6
\cal O_6
&
\begin{minipage}{0.13\textwidth}
\begin{tikzpicture}[scale=0.75]
\node(1) at (0,1){$\bullet$};
\node(2) at (0.95,0.3){$\bullet$};
\node(3) at (0.58,-0.8){$\bullet$};
\node(4) at (-0.58,-0.8){$\bullet$};
\node(5) at (-0.95,0.3){$\bullet$};

\path[font=\scriptsize, >= angle 90]
(1) edge [] node [] {} (2)
(5) edge [] node [] {} (1)
(3) edge [] node [] {} (4);
\end{tikzpicture}\end{minipage}
&
{\tiny \begin{bmatrix}
	0 & 1 &   &   & 1\\
	1 & 0 &  \\
	  &   & 0 & 1\\
	  &   & 1 & 0 \\
	1 &   &   &   & 0
	\end{bmatrix}}
&
\langle ZXIIX, XZIII, IIZXI, IIXZI, XIIIZ \rangle

\\
\hline
%row 7
\cal O_7
&
\begin{minipage}{0.13\textwidth}
\begin{tikzpicture}[scale=0.75]
\node(1) at (0,1){$\bullet$};
\node(2) at (0.95,0.3){$\bullet$};
\node(3) at (0.58,-0.8){$\bullet$};
\node(4) at (-0.58,-0.8){$\bullet$};
\node(5) at (-0.95,0.3){$\bullet$};

\path[font=\scriptsize, >= angle 90]
(1) edge [] node [] {} (2)
(2) edge [] node [] {} (3)
(3) edge [] node [] {} (4)
(4) edge [] node [] {} (5);
\end{tikzpicture}\end{minipage}
&
{\tiny \begin{bmatrix}
	0 & 1\\
	1 & 0 & 1 \\
	  & 1 & 0 & 1\\
	  &   & 1 & 0 & 1 \\
	  &   &   & 1 & 0
	\end{bmatrix}}
&
\langle ZXIII, XZXII, IXZXI, IIXZX, IIIXZ \rangle

\\
\hline
%row 8
\cal O_8
&
\begin{minipage}{0.13\textwidth}
\begin{tikzpicture}[scale=0.75]
\node(1) at (0,1){$\bullet$};
\node(2) at (0.95,0.3){$\bullet$};
\node(3) at (0.58,-0.8){$\bullet$};
\node(4) at (-0.58,-0.8){$\bullet$};
\node(5) at (-0.95,0.3){$\bullet$};

\path[font=\scriptsize, >= angle 90]
(1) edge [] node [] {} (2)
(2) edge [] node [] {} (3)
(3) edge [] node [] {} (4)
(4) edge [] node [] {} (1);
\end{tikzpicture}\end{minipage}
&
{\tiny \begin{bmatrix}
	0 & 1 &   & 1\\
	1 & 0 & 1 \\
  	  & 1 & 0 & 1\\
	1 &   & 1 & 0 \\
	  &   &   &   & 0
	\end{bmatrix}}
&
\langle ZXIXI, XZXII, IXZXI, XIXZI, IIIIZ \rangle 

\\
\hline
%row 9
\cal O_9
&
\begin{minipage}{0.13\textwidth}
\begin{tikzpicture}[scale=0.75]
\node(1) at (0,1){$\bullet$};
\node(2) at (0.95,0.3){$\bullet$};
\node(3) at (0.58,-0.8){$\bullet$};
\node(4) at (-0.58,-0.8){$\bullet$};
\node(5) at (-0.95,0.3){$\bullet$};

\path[font=\scriptsize, >= angle 90]
(1) edge [] node [] {} (2)
(1) edge [] node [] {} (3)
(1) edge [] node [] {} (4)
(1) edge [] node [] {} (5);
\end{tikzpicture}\end{minipage}
&
{\tiny \begin{bmatrix}
	0 & 1 & 1 & 1 & 1\\
	1 & 0 &  \\
	1 &   & 0 & \\
	1 &   &   & 0 \\
	1 &   &   &   & 0
	\end{bmatrix}}
&
\langle ZXXXX, XZIII, XIZII, XIIZI, XIIIZ \rangle

\\
\hline
%row 10
\cal O_{10}
&
\begin{minipage}{0.13\textwidth}
\begin{tikzpicture}[scale=0.75]
\node(1) at (0,1){$\bullet$};
\node(2) at (0.95,0.3){$\bullet$};
\node(3) at (0.58,-0.8){$\bullet$};
\node(4) at (-0.58,-0.8){$\bullet$};
\node(5) at (-0.95,0.3){$\bullet$};

\path[font=\scriptsize, >= angle 90]
(1) edge [] node [] {} (2)
(2) edge [] node [] {} (3)
(3) edge [] node [] {} (4)
(4) edge [] node [] {} (5)
(5) edge [] node [] {} (1)
;
\end{tikzpicture}\end{minipage}
&
{\tiny \begin{bmatrix}
	0 & 1 &   &   & 1\\
	1 & 0 & 1 \\
	  & 1 & 0 & 1\\
 	  &   & 1 & 0 & 1\\
	1 &   &   & 1 & 0
	\end{bmatrix}}
&
\langle ZXIIX, XZXII, IXZXI, IIXZX, XIIXZ \rangle

\\
\hline
%row 11
\cal O_{11}
&
\begin{minipage}{0.13\textwidth}
\begin{tikzpicture}[scale=0.75]
\node(1) at (0,1){$\bullet$};
\node(2) at (0.95,0.3){$\bullet$};
\node(3) at (0.58,-0.8){$\bullet$};
\node(4) at (-0.58,-0.8){$\bullet$};
\node(5) at (-0.95,0.3){$\bullet$};

\path[font=\scriptsize, >= angle 90]
(1) edge [] node [] {} (2)
(1) edge [] node [] {} (3)
(1) edge [] node [] {} (4);
\end{tikzpicture}\end{minipage}
&
{\tiny \begin{bmatrix}
	0 & 1 & 1 & 1 & \\
	1 & 0 &  \\
	1 &   & 0 & \\
	1 &   &   & 0 \\
	 &   &   &   & 0
	\end{bmatrix}}
&
\langle ZXXXI, XZIII, XIZII, XIIZI, IIIIZ \rangle
\end{matrix}}
\end{equation}

\begin{equation}\label{5-orbits of Z5}
{\footnotesize \begin{matrix}
\hline
\text{Orbit} &
\text{$5$-Graph} & \text{Representative point in $\cal Z_5\subset \mathbb P_2^{31}$}
\\
\hline
%row 1
\cal O_1
&
\begin{minipage}{0.13\textwidth}
\begin{tikzpicture}[scale=0.75]
\node(1) at (0,1){$\bullet$};
\node(2) at (0.95,0.3){$\bullet$};
\node(3) at (0.58,-0.8){$\bullet$};
\node(4) at (-0.58,-0.8){$\bullet$};
\node(5) at (-0.95,0.3){$\bullet$};

\path[font=\scriptsize, >= angle 90]
;
\end{tikzpicture}
\end{minipage}

&
[1:0: \ldots : 0] 

\\
\hline
%row 2
\cal O_2
&
\begin{minipage}{0.13\textwidth}
\begin{tikzpicture}[scale=0.75]
\node(1) at (0,1){$\bullet$};
\node(2) at (0.95,0.3){$\bullet$};
\node(3) at (0.58,-0.8){$\bullet$};
\node(4) at (-0.58,-0.8){$\bullet$};
\node(5) at (-0.95,0.3){$\bullet$};

\path[font=\scriptsize, >= angle 90]
(1) edge [] node [] {} (2);
\end{tikzpicture}
\end{minipage}

&
[1: \bold{\underline{0}} : \underbrace{1}_{z_6}: 0 : \ldots : 0]

\\
\hline
%row 3
\cal O_3
&
\begin{minipage}{0.13\textwidth}
\begin{tikzpicture}[scale=0.75]
\node(1) at (0,1){$\bullet$};
\node(2) at (0.95,0.3){$\bullet$};
\node(3) at (0.58,-0.8){$\bullet$};
\node(4) at (-0.58,-0.8){$\bullet$};
\node(5) at (-0.95,0.3){$\bullet$};

\path[font=\scriptsize, >= angle 90]
(1) edge [] node [] {} (2)
(2) edge [] node [] {} (3);
\end{tikzpicture}\end{minipage}

&
[1: \bold{\underline{0}} : \underbrace{1}_{z_6}: 0 :0:0: 1: 0 : \ldots : 0 ]

\\
\hline
%row 4
\cal O_4
&
\begin{minipage}{0.13\textwidth}
\begin{tikzpicture}[scale=0.75]
\node(1) at (0,1){$\bullet$};
\node(2) at (0.95,0.3){$\bullet$};
\node(3) at (0.58,-0.8){$\bullet$};
\node(4) at (-0.58,-0.8){$\bullet$};
\node(5) at (-0.95,0.3){$\bullet$};

\path[font=\scriptsize, >= angle 90]
(1) edge [] node [] {} (2)
(3) edge [] node [] {} (4);
\end{tikzpicture}\end{minipage}

&
[1:  \bold{\underline{0}} : \underbrace{1}_{z_6}: 0 : \ldots : 0 : \underbrace{1}_{z_{13}}: 0 : \ldots : 0: \underbrace{1}_{z_{26}}:0:\ldots : 0 ]

\\
\hline
%row 5
\cal O_5
&
\begin{minipage}{0.13\textwidth}
\begin{tikzpicture}[scale=0.75]
\node(1) at (0,1){$\bullet$};
\node(2) at (0.95,0.3){$\bullet$};
\node(3) at (0.58,-0.8){$\bullet$};
\node(4) at (-0.58,-0.8){$\bullet$};
\node(5) at (-0.95,0.3){$\bullet$};

\path[font=\scriptsize, >= angle 90]
(1) edge [] node [] {} (2)
(2) edge [] node [] {} (3)
(3) edge [] node [] {} (4);
\end{tikzpicture}\end{minipage}

&
[1: \bold{\underline{0}} : \underbrace{1}_{z_6}: 0 : 0 : 0 : 1: 0 :0:1:0 \ldots : 0:\underbrace{1}_{z_{26}}:0:\ldots :0 ] 

\\
\hline
%row 6
\cal O_6
&
\begin{minipage}{0.13\textwidth}
\begin{tikzpicture}[scale=0.75]
\node(1) at (0,1){$\bullet$};
\node(2) at (0.95,0.3){$\bullet$};
\node(3) at (0.58,-0.8){$\bullet$};
\node(4) at (-0.58,-0.8){$\bullet$};
\node(5) at (-0.95,0.3){$\bullet$};

\path[font=\scriptsize, >= angle 90]
(1) edge [] node [] {} (2)
(5) edge [] node [] {} (1)
(3) edge [] node [] {} (4);
\end{tikzpicture}\end{minipage}

&
[1:  \bold{\underline{0}} : \underbrace{1}_{z_6}: 0 :0:1: 0 : 0 :0: 1: 0 : \ldots : 0:\underbrace{1}_{z_{26}}:0: \ldots : 0 ] 

\\
\hline
%row 7
\cal O_7
&
\begin{minipage}{0.13\textwidth}
\begin{tikzpicture}[scale=0.75]
\node(1) at (0,1){$\bullet$};
\node(2) at (0.95,0.3){$\bullet$};
\node(3) at (0.58,-0.8){$\bullet$};
\node(4) at (-0.58,-0.8){$\bullet$};
\node(5) at (-0.95,0.3){$\bullet$};

\path[font=\scriptsize, >= angle 90]
(1) edge [] node [] {} (2)
(2) edge [] node [] {} (3)
(3) edge [] node [] {} (4)
(4) edge [] node [] {} (5);
\end{tikzpicture}\end{minipage}
&
[1: \bold{\underline{0}}: \underbrace{1}_{z_6}: 0 : 0 : 0 : 1: 0 :0:1:0:1:0: \ldots : 0: \underbrace{1}_{z_{26}}:0:1:0:1:0 ] 

\\
\hline
%row 8
\cal O_8
&
\begin{minipage}{0.13\textwidth}
\begin{tikzpicture}[scale=0.75]
\node(1) at (0,1){$\bullet$};
\node(2) at (0.95,0.3){$\bullet$};
\node(3) at (0.58,-0.8){$\bullet$};
\node(4) at (-0.58,-0.8){$\bullet$};
\node(5) at (-0.95,0.3){$\bullet$};

\path[font=\scriptsize, >= angle 90]
(1) edge [] node [] {} (2)
(2) edge [] node [] {} (3)
(3) edge [] node [] {} (4)
(4) edge [] node [] {} (1);
\end{tikzpicture}\end{minipage}

&
[1: \bold{\underline{0}} : \underbrace{1}_{z_6}: 0 : 1 : 0 : 1: 0 :0:1:0: \ldots : 0 ] 

\\
\hline
%row 9
\cal O_9
&
\begin{minipage}{0.13\textwidth}
\begin{tikzpicture}[scale=0.75]
\node(1) at (0,1){$\bullet$};
\node(2) at (0.95,0.3){$\bullet$};
\node(3) at (0.58,-0.8){$\bullet$};
\node(4) at (-0.58,-0.8){$\bullet$};
\node(5) at (-0.95,0.3){$\bullet$};

\path[font=\scriptsize, >= angle 90]
(1) edge [] node [] {} (2)
(1) edge [] node [] {} (3)
(1) edge [] node [] {} (4)
(1) edge [] node [] {} (5);
\end{tikzpicture}\end{minipage}

&
[1: \bold{\underline{0}} : \underbrace{1}_{z_6}: 1 :1:1: 0 : \ldots : 0 ]

\\
\hline
%row 10
\cal O_{10}
&
\begin{minipage}{0.13\textwidth}
\begin{tikzpicture}[scale=0.75]
\node(1) at (0,1){$\bullet$};
\node(2) at (0.95,0.3){$\bullet$};
\node(3) at (0.58,-0.8){$\bullet$};
\node(4) at (-0.58,-0.8){$\bullet$};
\node(5) at (-0.95,0.3){$\bullet$};

\path[font=\scriptsize, >= angle 90]
(1) edge [] node [] {} (2)
(2) edge [] node [] {} (3)
(3) edge [] node [] {} (4)
(4) edge [] node [] {} (5)
(5) edge [] node [] {} (1)
;
\end{tikzpicture}\end{minipage}

&
[1: \bold{\underline{0}} : \underbrace{1}_{z_6}: 0 : 0 : 1 : 1: 0 :0:1:0:1:0: \ldots : 0:\underbrace{1}_{z_{27}}:1:1:1: 0 ]  

\\
\hline
%row 11
\cal O_{11}
&
\begin{minipage}{0.13\textwidth}
\begin{tikzpicture}[scale=0.75]
\node(1) at (0,1){$\bullet$};
\node(2) at (0.95,0.3){$\bullet$};
\node(3) at (0.58,-0.8){$\bullet$};
\node(4) at (-0.58,-0.8){$\bullet$};
\node(5) at (-0.95,0.3){$\bullet$};

\path[font=\scriptsize, >= angle 90]
(1) edge [] node [] {} (2)
(1) edge [] node [] {} (3)
(1) edge [] node [] {} (4);
\end{tikzpicture}\end{minipage}

&
[1: \bold{\underline{0}} : \underbrace{1}_{z_6}: 1 :1: 0 : \ldots : 0 ]

\end{matrix}}
\end{equation}

%\nocite{*} 

\addcontentsline{toc}{section}{References}

\printbibliography

\end{document}